\newcommand{\myref}[2]{\hyperref[#1]{#2}}
\numberwithin{equation}{section}
\newtheorem{theorem}{Theorem}[section]
\newtheorem{lemma}{Lemma}[section]
\theoremstyle{definition}
\newtheorem{remark}{Remark}[section]
\theoremstyle{remark}
\newcounter{assumptionM}
\newcounter{assumptionA}
\def\theassumptionM{M.\arabic{assumptionM}}
\def\theassumptionA{\arabic{assumptionA}}
\newenvironment{assumptionA}[1][]{\refstepcounter{assumptionA}\medskip\noindent{\textbf{Assumption \theassumptionA. #1}}}{\medskip}
\begin{document}
\sloppy

\hypersetup{pageanchor=false}
\title{\Large Inference in dynamic discrete choice problems under local misspecification\thanks{
We thank the Editor, Chris Taber, and three anonymous referees for comments and suggestions that have significantly improved this paper. We are also grateful for helpful discussions with Victor Aguirregabiria, Peter Arcidiacono, Joe Hotz, Shakeeb Khan, Matt Masten, Arnaud Maurel, Jia Li, and the participants at the Duke Microeconometrics Reading Group and the Yale Econometrics Lunch Group. Any and all errors are our own. The research of the first author was supported by NIH Grant 40-4153-00-0-85-399 and NSF Grant SES-1729280.}
}
\author{Federico A. Bugni\\Department of Economics\\Duke University\\ \href{mailto:federico.bugni@duke.edu}{\texttt{federico.bugni@duke.edu}}
\and Takuya Ura\\Department of Economics\\University of California, Davis\\ \href{mailto:takura@ucdavis.edu}{\texttt{takura@ucdavis.edu}}} 
\date{\today{\\
}}
\maketitle

\begin{abstract}
	
	Single-agent dynamic discrete choice models are typically estimated using heavily parametrized econometric frameworks, making them susceptible to model misspecification. This paper investigates how misspecification affects the results of inference in these models. Specifically, we consider a local misspecification framework in which specification errors are assumed to vanish at an arbitrary and unknown rate with the sample size. Relative to global misspecification, the local misspecification analysis has two important advantages. First, it yields tractable and general results. Second, it allows us to focus on parameters with structural interpretation, instead of ``pseudo-true'' parameters.
	
	We consider a general class of two-step estimators based on the $K$-stage sequential policy function iteration algorithm, where $K$ denotes the number of iterations employed in the estimation. This class includes \cite{hotz/miller:1993}'s conditional choice probability estimator, \cite{aguirregabiria/mira:2002}'s pseudo-likelihood estimator, and \cite{pesendorfer/schmidt-dengler:2008}'s asymptotic least squares estimator.

We show that local misspecification can affect the asymptotic distribution and even the rate of convergence of these estimators. In principle, one might expect that the effect of the local misspecification could change with the number of iterations $K$. One of our main findings is that this is not the case, i.e., the effect of local misspecification is invariant to $K$. In practice, this means that researchers cannot eliminate or even alleviate problems of model misspecification by changing $K$.

\begin{description}
	\item[Keywords:] Single-agent dynamic discrete choice models, estimation, inference, misspecification, local misspecification.
	\item[JEL Classification Codes: C13, C61, C73]
\end{description}
\end{abstract}
\vfill

\thispagestyle{empty} 
\pagebreak 
\setcounter{page}1 
\pagebreak 
\hypersetup{pageanchor=true}

\section{Introduction}

This paper investigates the effect of model misspecification on inference in single-agent dynamic discrete choice models. Our study is motivated by two observations regarding this literature.\footnote{See survey papers by \cite{aguirregabiria/mira:2010} and \cite{arcidiacono/ellickson:2011}, and references therein.} First, typical econometric frameworks used in empirical studies are heavily parametrized and are therefore subject to misspecification. Second, several methods can be used to estimate these models, including \citet{rust:1987,rust:1988}'s nested fixed point estimator, \cite{hotz/miller:1993}'s conditional choice probability estimator, \cite{aguirregabiria/mira:2002}'s pseudo-likelihood estimator, and \cite{pesendorfer/schmidt-dengler:2008}'s asymptotic least squares estimator. While the literature has studied the behavior of these estimators under correct specification, their properties under misspecification have not been explored. To the best of our knowledge, our paper is one of the first ones to investigate the effect of misspecification on inference in these types of models.

In this paper, we propose a local misspecification approach in dynamic discrete choice models. By local misspecification, we mean that the econometric model is allowed to be misspecified, but the amount of misspecification vanishes as the sample size increases. Local misspecification is an asymptotic device that can provide concrete conclusions in the presence of misspecification while keeping the analysis tractable. As with any other asymptotic device, local misspecification is just an approximation to a finite sample situation (in this case, with misspecification) and should not be taken literally.

Our local approach to misspecification in dynamic discrete choice models yields relevant conclusions in several dimensions. First, local misspecification can constitute a reasonable approximation to the asymptotic behavior when the mistakes in the specification of the model are small. Second, there are multiple available estimation methods, and their performance under misspecification is not well understood. We believe their relative performance under local misspecification is a relevant comparison criterion. Finally, while our approach to misspecification is admittedly ``local'' in that it is assumed to vanish, we allow the rate at which this occurs to be completely arbitrary. In particular, we allow the misspecification to disappear at a faster, equal, or even slower rate than the regular parametric convergence rate of $\sqrt{n}$. While this rate will affect the asymptotic properties of the estimators under consideration, it will not alter the main qualitative conclusions of our paper.


We consider a class of two-step estimators based on the $K$-stage sequential policy function iteration algorithm along the lines of \cite{aguirregabiria/mira:2002}, where $K$ denotes the number of iterations employed in the estimation. By appropriate choice of the criterion function, this class captures the $K$-stage maximum likelihood type estimators ($K$-ML) and the $K$-stage minimum distance type estimators ($K$-MD). This class includes most of the previously mentioned estimators as special cases.

Our main theoretical contribution is to characterize the asymptotic distribution of two-step $K$-MD and $K$-ML estimators under local misspecification. We show that local misspecification can affect the asymptotic distribution and even the rate of convergence of these estimators. We are particularly interested in the asymptotic behavior of these estimators as we vary the number of iterations $K$.

We obtain three main results. Our first result is related to the asymptotic behavior of $K$-ML estimators. Under correct specification, \cite{aguirregabiria/mira:2002} proves that the asymptotic distribution of $K$-ML estimators is invariant to $K$. Under local misspecification, however, one might reasonably expect a different result. Intuitively, every stage of the policy function iteration algorithm brings the estimator closer to imposing the fixed-point/equilibrium conditions implied by the model. Given that the model is incorrectly specified, one might then conjecture that increasing the number of iterations would result in an estimator of inferior quality (e.g.\ more bias).\footnote{We are grateful to an anonymous referee for suggesting this interpretation.} Our first main result is to show that this intuition is incorrect. We formally show that $K$-ML estimators are asymptotically equivalent for all $K$. Our second result is to show an analogous result for $K$-MD estimators, i.e., given the choice of weight matrix, $K$-MD estimators are asymptotically equivalent for all $K$. 
If we combine these findings, we can conclude that the researcher cannot eliminate or even alleviate a problem of model misspecification by changing the number of iterations $K$. Additional iterations are computationally costly and produce {\it no change} in asymptotic efficiency. Thus, from a practical viewpoint, we recommend using either the $1$-ML or the $1$-MD estimator.

Finally, our third result is to compare $K$-MD and $K$-ML estimators in terms of asymptotic mean squared error. We show that an optimally-weighted $K$-MD estimator depends on the unknown asymptotic bias and is thus generally unfeasible. In turn, the feasible $K$-MD estimator with a weight matrix that minimizes asymptotic variance could have an asymptotic mean squared error that is higher or lower than that of the $K$-ML estimator or the $K$-MD estimator with identity weight matrix. In other words, given a particular choice of the number of iterations $K$ (e.g.\ $K=1$), the presence of local misspecification implies that we cannot make clear-cut recommendations regarding the weight matrix for the $K$-MD estimator, and how this compares with the $K$-ML estimator.

From a technical viewpoint, our analysis exploits a distinctive feature of single-agent dynamic discrete choice problems known as the ``zero Jacobian property''. This property was used by \cite{aguirregabiria/mira:2002} under correct specification to their results for $K$-ML estimators. One of our technical contributions is to use this property under local misspecification to derive analogous results for both $K$-ML and $K$-MD estimators. 

As we have explained, this paper uses a local approach to the problem of model misspecification. In practice, researchers typically specify econometric models that may contain non-vanishing errors, i.e., global misspecification. Relative to the global misspecification analysis, our local misspecification approach has two important advantages. First, allowing for global misspecification in our dynamic discrete choice model typically makes the problem intractable, and generally valid results are thus hard to obtain. In contrast, the local misspecification yields concrete and general conclusions. Second, recall that the literature has produced several estimation methods to estimate the structural parameter of interest in a dynamic discrete choice problem. Under global misspecification, the different estimators typically converge in probability to different pseudo-true parameters which may or may not be related to the true structural parameter. This makes the results hard to interpret and compare. In contrast, under local misspecification, these different estimators are shown to consistently estimate the true structural parameter value. We can then compare their robustness to misspecification via their asymptotic distributions.

This paper relates to a vast literature on inference under model misspecification. \cite{white:1982,white:1996} consider the problem of maximum likelihood estimation under global misspecification. \cite{newey:1985a,newey:1985b} and \cite{tauchen:1985} investigate the power properties of the model specification tests under local misspecification. More recently, \cite{schorfheide:2005} considers a locally misspecified vector autoregression process and proposes an information criterion for the lag length in the autoregression model. \cite{bugni/canay/guggenberger:2012} compares inference methods in partially identified moment (in)equality models that are locally misspecified. \cite{kitamura/otsu/evdokimov:2013} considers a class of estimators that are robust to local misspecification in the context of moment condition models. None of the previously mentioned references consider inference in dynamic discrete choice problems, whose specific features are central to the results in this paper. Few references explore the issue of misspecification in dynamic discrete choice problems. For example, \cite{norets/takahashi:2013} considers surjective dynamic discrete choice models and show that these are necessarily correctly specified. Last, \cite{chernozhukov/escanciano/ichimura/newey:2016} considers two-step estimators in dynamic discrete choice model with a locally misspecified first step, and propose estimators that are robust to this issue. In contrast, this paper allows both steps to be locally misspecified.

The remainder of the paper is structured as follows. Section \ref{sec:Setup} describes the dynamic discrete choice model and introduces the possibility of its local misspecification. Section \ref{sec:Inference} develops a general result for two-step $K$-stage estimators under high-level conditions. Section \ref{sec:Applications} applies the general result to $K$-ML estimators (Section \ref{sec:ML}) and $K$-MD estimators (Section \ref{sec:MD}). Section \ref{sec:MonteCarlos} presents results of Monte Carlo simulation and Section \ref{sec:Conclusion} concludes. The appendix of the paper collects all the proofs and intermediate results.

The following notation is used throughout the paper. For any $s_1,s_2\in \mathbb{N}$, $\mathbf{0}_{s_1 \times s_2}$ and $\mathbf{1}_{s_1 \times s_2}$ denote a $(s_1 \times s_2)$-dimensional matrix composed of zeros and ones, respectively, and $ \mathbf{I}_{s_1 \times s_2}$ denotes $(s_1 \times s_2)$-dimensional matrix equal to the left upper block of the $(\max(s_1,s_2) \times \max(s_1,s_2))$-dimensional identity matrix. We use $||\cdot||$ to denote the Euclidean norm. For any $s$-dimensional column vector $V$, $diag\{V\}$ is the $(s \times s)$-dimensional matrix with $V$ as its diagonal. For sets of finite indices $S_1 = \{1,\dots,|S_1|\}$ and $S_2 = \{1,\dots,|S_2|\}$, $\{M(s_1,s_2)\}_{(s_1,s_2)\in S_1\times S_2} $ denotes the $(|S_1|\times |S_2|)$-dimensional column vector equal to the vectorization of $ \{\{M(s_1,s_2)\}_{s_1=1}^{|S_1|}\}_{s_2=1}^{|S_2|}$. For any differentiable matrix function $F(y):\mathbb{R}^{a \times b} \to \mathbb{R}^{c \times d}$, $\partial F(y)/\partial y \in \mathbb{R}^{ac \times bd}$ denotes the usual matrix of derivatives. Finally, ``w.p.a.1'' abbreviates ``with probability approaching one''.

\section{Setup}\label{sec:Setup}

Section \ref{sec:Model} describes the dynamic discrete choice model assumed by the researcher. This paper allows this model to be incorrectly specified. Section \ref{sec:Misspecification} describes the nature of the model misspecification.

\subsection{The econometric model}\label{sec:Model}

An economic agent is assumed to behave according to the discrete Markov decision framework in \cite{aguirregabiria/mira:2002}. In each period $t =1,\dots,T \equiv \infty$, the agent is assumed to observe a vector of state variables $s_{t}$ and to choose an action $a_{t}\in A \equiv \{1,\ldots ,|A|\}$ with the objective of maximizing the expected discounted utility. The vector of state variables $s_{t}=(x_{t},\epsilon_{t})$ is composed by two subvectors. The subvector $x_{t}\in X \equiv \{1,\dots ,|X|\}$ represents a scalar state variables observed by the agent and the researcher, whereas the subvector $\epsilon_{t}\in \mathbb{R} ^{|A|}$ represents an action-specific state vector only observed by the agent.

The agent's future state variables $(x_{t+1},\epsilon_{t+1})$ are assumed to follow a Markov transition probability density $d\Pr(x_{t+1},\epsilon_{t+1}| x_{t},\epsilon_{t},a_{t})$ that satisfies:
\begin{equation*}
d\Pr(x_{t+1},\epsilon_{t+1}| x_{t},\epsilon_{t},a_{t}) ~=~g_{\theta _{g}}(\epsilon_{t+1}|x_{t+1})f_{\theta _{f}}(x_{t+1}|x_{t},a_{t}),
\end{equation*}
where $g_{\theta _{g}}(\cdot)$ is the (conditional) distribution of the unobserved state variable with parameter $\theta_{g}$ and $f_{\theta _{f}}(\cdot)$ is the transition probability of the observed state variable with parameter $\theta_{f}$.

The utility is assumed to be time separable and the agent discounts future utility by a known discount factor $\beta \in (0,1)$.\footnote{This follows \citet[Footnote 12]{aguirregabiria/mira:2002} and \cite{magnac/thesmar:2002}.} The current utility function of choosing action $a_{t}$ under state variables $(x_{t},\epsilon_{t})$ is given by:
\begin{equation*}
u_{\theta _{u}}(x_{t},a_{t})~+~\epsilon_{{t}}(a_{t}) ,
\end{equation*}
where $u_{\theta _{u}}(\cdot)$ is non-stochastic component of the current utility with parameter $\theta_{u}$, and $\epsilon_{{t}}(a_{t})$ denotes the $a_{t}$-th coordinate of $\epsilon_{t}$.

The researcher's goal is to estimate the unknown parameters in the model, $\theta  \equiv (\theta _{g},\theta _{u},\theta _{f} )\in \Theta $, where $\Theta$ is the compact parameter space. Also, we denote $\theta =( \alpha ,\theta _{f}) \in \Theta \equiv \Theta _{\alpha }\times \Theta _{f}$ with $\alpha \equiv (\theta_{u},\theta _{g}) \in \Theta _{\alpha }$.

Following \cite{aguirregabiria/mira:2002}, we impose the following regularity conditions on the primitive elements of the econometric model.

\begin{assumptionA}\label{ass:RegularityModel}
For every $\theta \in \Theta$, assume that:
\begin{enumerate}[(a)]
\item For every $x \in X$, $g_{\theta _{g}}(\epsilon|x)$ has finite first moments and is twice differentiable in $\epsilon$,
\item $\epsilon = \{\epsilon(a)\}_{a\in A}$ has full support,
\item $g_{\theta _{g}}(\epsilon|x)$, $f_{\theta _{f}}(x'|x,a)$, and $u_{\theta _{u}}(x,a)$ are twice continuously differentiable with respect to $\theta$.
\end{enumerate}
\end{assumptionA}

By \cite{blackwell:1965}'s theorem and its generalization by \cite{rust:1988}, the optimal decision rule is stationary and Markovian, i.e., the time subscript can be dropped. Furthermore, the optimal value function $V_{\theta }$ is the unique solution of the following Bellman equation:
\begin{equation}
V_{\theta }( x,\epsilon ) =\max_{a\in A}\{ u_{\theta _{u}}( x,a) +\epsilon({a}) +\beta \int_{( x',\epsilon') }V_{\theta }( x',\epsilon') g_{\theta _{g}}(\epsilon'|x')f_{\theta _{f}}(x'|x,a)d( x',\epsilon') \} . \label{eq:BE}
\end{equation}
By integrating out the unobserved error, we obtain the smoothed value function:
\begin{equation*}
V_{\theta }( x) \equiv \int_{\epsilon }V_{\theta }( x,\epsilon ) g_{\theta _{g}}( \epsilon |x) d\epsilon,
\end{equation*}
which is the unique solution of the smoothed Bellman equation:
\begin{equation}
V_{\theta }( x) =\int_{\epsilon } \max_{a\in A}\{ u_{\theta _{u}}( x,a) +\epsilon({a}) +\beta \sum_{x'\in X}V_{\theta }( x') f_{\theta _{f}}( x'|x,a) \} g_{\theta _{g}}( \epsilon |x) d\epsilon. \label{eq:SmoothedBE}
\end{equation}

We now turn to the description of the conditional choice probability (CCP), denoted by $P_{\theta}( a | x) $, which is the model-implied probability that an agent chooses action $a$ when the observed state is $x$. Since the agent chooses an action in $A$, $P_{\theta}(|A||x) = 1- \sum_{a\in \tilde{A}} P_{\theta}(a|x)$ for all $x\in X$. Thus, the vector of model-implied conditional choice probabilities (CCPs) is completely characterized by $P_\theta \equiv \{P_{\theta}(a|x)\}_{(a,x)\in \tilde{A} \times X}$ with $\tilde{A} \equiv  \{1,\dots,|A|-1\}$. For the remainder of the paper, we use $\Theta_{P}\subset [0,1]^{|\tilde{A} \times X|}$ to denote the parameter space for the vector of CCPs.

The vector of CCPs is a central equilibrium object in the model. Lemma \ref{lem:PolicyProperties} shows that the CCPs are the unique fixed point of the policy function mapping. By utility maximization, the vector of CCPs is determined by the following equation:
\begin{equation*}
P_{\theta }( a | x) \equiv \int_{\epsilon } 1\left[ a=\underset{\tilde{a}\in A}{\arg \max} [ u_{\theta _{u}}( x,\tilde{a}) +\beta \sum_{x'\in X}V_{\theta }( x') f_{\theta _{f}}( x'|x, \tilde{a}) +\epsilon(\tilde{a}) ] \right] dg_{\theta _{g}}( \epsilon |x), 
\end{equation*}
which can be succinctly represented as follows:
\begin{equation}
P_\theta ~=~\Lambda _{\theta }( \{ V_{\theta }( x) \} _{x\in X}). \label{eq:Mapping1B}
\end{equation}
Also, notice that Eq.\ \eqref{eq:SmoothedBE} can be re-written as:
\begin{equation}
V_{\theta }( x) =\sum_{a\in A}P_{\theta }(a|x) \left\{ u_{\theta _{u}}( x,a) +E_{\theta }[ \epsilon({a}) |x,a] +\beta \sum_{x'\in X} V_{\theta }( x') f_{\theta _{f}}( x'|x,a) \right\} ,
\label{eq:Mapping2}
\end{equation}
where $E_{\theta }[ \epsilon({a}) |x,a] $ denotes the expectation of the unobservable $\epsilon({a}) $ conditional on the state being $x$ and on the optimal action being $a$. Under our assumptions, \cite{hotz/miller:1993} show that there is a one-to-one mapping between the CCPs and the (normalized) smoothed value function. The inverse of this mapping allows us to re-express $\{E_{\theta }[ \epsilon({a}) |x,a] \}_{(a,x)\in A \times X}$ as a function of the vector of CCPs. By combining this with Eq.\ \eqref{eq:Mapping2}, we can express  $ \{ V_{\theta }( x) \} _{x\in X}$ as a function of $P_\theta$. An explicit formula for such function is provided in \citet[Eq.\ (8)]{aguirregabiria/mira:2002}, which we succinctly express as follows:
\begin{equation}
\{ V_{\theta }( x) \} _{x\in X}~=~\varphi _{\theta }(P_\theta). \label{eq:Mapping2B}
\end{equation}
By combining Eqs.\ \eqref{eq:Mapping1B} and \eqref{eq:Mapping2B}, we obtain the following fixed point representation of the vector of CCPs:
\begin{equation}
P_\theta ~=~\Psi _{\theta }( P_\theta ) , \label{eq:PolicyOperator}
\end{equation}
where $\Psi _{\theta } \equiv \Lambda _{\theta }\circ \varphi _{\theta }$ is the policy function mapping. As explained by \cite{aguirregabiria/mira:2002}, this operator can be evaluated at any vector of CCPs, optimal or not. For any arbitrary $P\equiv \{P(a|x)\}_{(a,x) \in \tilde{A} \times X}$, $\Psi _{\theta }(P)$ provides the current optimal CCPs of an agent whose future behavior is according to $P$. 

Under the current assumptions, the policy function mapping has several properties that are central to the results of this paper.

\begin{lemma}\label{lem:PolicyProperties}
	Under Assumption \ref{ass:RegularityModel}, $ \Psi _{\theta }$ satisfies the following properties:
\begin{enumerate}[(a)]
\item $\Psi _{\theta }$ has a unique fixed point $P_{\theta }$,
\item The sequence $P^{K}=\Psi _{\theta } (P^{K-1})$ for $K\geq 1$, converges to $ P_{\theta }$ for any initial $P^{0} \in \Theta_P$,
\item The Jacobian matrix of $\Psi _{\theta }$ with respect to $P$ is zero at $P_{\theta }$.
\end{enumerate}
\end{lemma}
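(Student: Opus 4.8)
The plan is to build everything on top of the two pieces of structure already available: the smoothed Bellman operator defined through \eqref{eq:SmoothedBE}, which is a $\beta$-contraction and hence (by \cite{blackwell:1965} and \cite{rust:1988}, as noted above) admits the unique fixed point $V_\theta$, and the valuation map $\varphi_\theta$ of \eqref{eq:Mapping2B}, which returns via \eqref{eq:Mapping2} the expected discounted payoff of behaving forever according to an arbitrary CCP vector $P$. For part (a), I would first verify that $P_\theta$ is a fixed point: since $\varphi_\theta(P_\theta)=\{V_\theta(x)\}_{x\in X}$ and $P_\theta=\Lambda_\theta(\{V_\theta(x)\}_{x\in X})$ by \eqref{eq:Mapping1B}, composing gives $\Psi_\theta(P_\theta)=\Lambda_\theta(\varphi_\theta(P_\theta))=P_\theta$. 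For uniqueness, take any $P^\ast$ with $P^\ast=\Psi_\theta(P^\ast)$ and set $V^\ast\equiv\varphi_\theta(P^\ast)$. Then $P^\ast=\Lambda_\theta(V^\ast)$ says $P^\ast$ is the smoothed greedy policy with respect to the continuation value $V^\ast$, while $V^\ast=\varphi_\theta(P^\ast)$ says $V^\ast$ is the value of following $P^\ast$; combining the two through \eqref{eq:Mapping2} shows $V^\ast$ solves the smoothed Bellman equation \eqref{eq:SmoothedBE}. Uniqueness of the Bellman fixed point then forces $V^\ast=V_\theta$, whence $P^\ast=\Lambda_\theta(V^\ast)=\Lambda_\theta(\{V_\theta(x)\}_{x\in X})=P_\theta$.

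The conceptual core is part (c), and here I would exploit an envelope/optimality interpretation of $\varphi_\theta$. For every $x\in X$, $\varphi_\theta(P)(x)$ is the value of the best stationary policy consistent with the CCP vector $P$, so it is dominated by the value of the globally optimal (and uniformly optimal across starting states) policy; that is, $\varphi_\theta(P)(x)\le V_\theta(x)=\varphi_\theta(P_\theta)(x)$ for all $P\in\Theta_P$ and all $x\in X$. Hence each coordinate of $\varphi_\theta$ is maximized at $P=P_\theta$. Under Assumption \ref{ass:RegularityModel}, the full-support condition on $\epsilon$ makes the fixed point interior (every CCP lies strictly inside the simplex) and $\varphi_\theta$ is differentiable, so the first-order condition delivers $\partial\varphi_\theta(P)/\partial P|_{P=P_\theta}=\mathbf{0}_{|X|\times|\tilde{A}\times X|}$. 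The chain rule applied to $\Psi_\theta=\Lambda_\theta\circ\varphi_\theta$ then yields $\partial\Psi_\theta(P)/\partial P|_{P=P_\theta}=(\partial\Lambda_\theta/\partial V|_{V_\theta})(\partial\varphi_\theta/\partial P|_{P_\theta})=\mathbf{0}$. As a check reproducing \cite{aguirregabiria/mira:2002}'s computation, one can instead differentiate the linear valuation system $V=b_\theta(P)+\beta F_\theta(P)V$ underlying \eqref{eq:Mapping2}, where $F_\theta(P)$ is the policy-induced transition matrix and $b_\theta(P)$ the expected flow payoff: writing $v_\theta(a|x)\equiv u_{\theta_u}(x,a)+\beta\sum_{x'\in X}V_\theta(x')f_{\theta_f}(x'|x,a)$ for the choice-specific value, at $P_\theta$ the source term $\partial[b_\theta+\beta F_\theta V]/\partial P$ collapses because the Williams--Daly--Zachary/\cite{hotz/miller:1993} identity gives $\partial[\sum_{a\in A}P(a|x)E_\theta[\epsilon(a)|x,a]]/\partial P(b|x)=-(v_\theta(b|x)-v_\theta(|A||x))$, exactly cancelling the flow contribution $v_\theta(b|x)-v_\theta(|A||x)$ evaluated at the optimum.

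For part (b), I would recognize one application of $\Psi_\theta$ as a single round of Howard's policy iteration: $\varphi_\theta$ performs policy evaluation and $\Lambda_\theta$ performs greedy policy improvement. The key input is the smoothed policy-improvement inequality $\varphi_\theta(\Psi_\theta(P))\ge\varphi_\theta(P)$ componentwise, with equality only at $P=P_\theta$. Given any $P^0\in\Theta_P$, the scalar sequences $\{\varphi_\theta(P^K)(x)\}_K$ are then nondecreasing and bounded above by $V_\theta(x)$, hence convergent; the limit is necessarily a fixed point of the smoothed Bellman operator, so $\varphi_\theta(P^K)\to\{V_\theta(x)\}_{x\in X}$, and continuity of $\Lambda_\theta$ together with $P^{K}=\Lambda_\theta(\varphi_\theta(P^{K-1}))$ delivers $P^K\to\Lambda_\theta(\{V_\theta(x)\}_{x\in X})=P_\theta$. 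Near $P_\theta$ one could alternatively bypass monotonicity and invoke part (c), which makes the iteration a local contraction with superlinear rate.

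The step I expect to be the main obstacle is part (c). The envelope route is clean but must be justified carefully: I have to argue that $\varphi_\theta$ genuinely is a policy-valuation operator dominated by $V_\theta$ for every starting state simultaneously, and that interiority of $P_\theta$ (from the full support of $\epsilon$) legitimately turns ``maximizer'' into a first-order condition with no boundary complications. If instead one prefers the direct computation matching \cite{aguirregabiria/mira:2002}, the crux becomes the Williams--Daly--Zachary/\cite{hotz/miller:1993} derivative identity for the conditional expected shocks, combined with the bookkeeping of the simplex constraint (the $|A|$-th CCP being determined by the first $|A|-1$), which is precisely what forces the source term to vanish. Parts (a) and (b) are comparatively routine, their only delicate points being that a fixed point of $\Psi_\theta$ returns a genuine solution of \eqref{eq:SmoothedBE} and that the smoothed policy-improvement inequality holds.
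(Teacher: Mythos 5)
Your proposal is correct, but note what the paper itself does: its ``proof'' of this lemma is a single line citing \citet[Propositions 1--2]{aguirregabiria/mira:2002}. So you are not competing with an argument in the paper --- you are reconstructing the argument the paper outsources, and your reconstruction follows the same route as the cited source: $\varphi_\theta$ as policy evaluation and $\Lambda_\theta$ as greedy improvement, so that (a) follows from consistency of the Hotz--Miller inversion with the smoothed Bellman equation, (b) from Howard's policy-improvement monotonicity plus uniqueness of the smoothed Bellman fixed point, and (c) from the envelope theorem at an interior maximizer combined with the chain rule (or, equivalently, the Williams--Daly--Zachary cancellation you sketch). What your write-up buys is transparency about \emph{why} the zero Jacobian holds; what the citation buys is brevity. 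Two steps you assert should be carried out in full if this were to replace the citation. First, the domination $\varphi_\theta(P)(x)\le V_\theta(x)$ requires verifying that the HM-inverted expected shocks plugged into Eq.\ \eqref{eq:Mapping2} at an arbitrary interior $P$ are exactly the expected shocks of the feasible stationary rule that rationalizes $P$, so that $\varphi_\theta(P)$ is a genuine policy value; this is true by construction of the inversion, and full support of $\epsilon$ makes $P_\theta$ interior so the first-order condition is legitimate --- you flag precisely this. Second, in (b) the claim that the monotone limit of $\varphi_\theta(P^K)$ is a Bellman fixed point needs the standard sandwich $\varphi_\theta(P^K)\le\Gamma(\varphi_\theta(P^K))\le\varphi_\theta(P^{K+1})\le \{V_\theta(x)\}_{x\in X}$, where $\Gamma$ denotes the smoothed Bellman operator: the first inequality because $\Gamma$ dominates any policy-evaluation operator at its own fixed point, the second by improvement plus monotonicity of the evaluation operator; continuity of $\Gamma$ then forces the limit to be its (unique) fixed point. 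Both gaps are routine to fill, so the proposal stands as a valid self-contained proof.
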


Following the literature on estimation of dynamic discrete choice models, the researcher estimates $\theta = (\alpha,\theta_{f})$ using a two-step procedure. In a first step, he uses $f_{\theta _{f}}$ to estimate $\theta _{f}$. In a second step, he uses $\Psi_{(\alpha,\theta _{f})}(P)$ and the first step to estimate $\alpha $. The following assumption ensures that the model is identified.

\begin{assumptionA} \label{ass:Identification}
	The parameter $\theta = (\alpha,\theta_{f}) \in \Theta $ is identified as follows:
\begin{enumerate}[(a)]
	\item $\theta _{f}$ is identified by $f_{\theta_{f}}$, i.e., $f_{\theta _{f,a}} =f_{\theta _{f,b}} $ implies $\theta _{f,a}=\theta _{f,b}$,
	\item $\alpha $ is identified by the fixed point condition $\Psi _{(\alpha ,\theta _{f})}(P)=P$ for any $(\theta_{f},P) \in \Theta_{f}\times \Theta_{P}$, i.e., $\forall \theta _{f}\in \Theta _{f}$, $\Psi _{(\alpha _{a},\theta _{f})}(P)=P$ and $\Psi _{(\alpha_{b} ,\theta _{f})}(P)=P$ implies $\alpha _{a}=\alpha _{b}$.
\end{enumerate}
\end{assumptionA}
\cite{magnac/thesmar:2002} provide sufficient conditions for Assumption \ref{ass:Identification}. Also, Assumption \ref{ass:Identification} implies the higher level condition used by \citet[conditions (e)-(f) in Proposition 4]{aguirregabiria/mira:2002}. Under these conditions, we can deduce certain important properties for the model-implied CCPs.

\begin{lemma} \label{lem:AuxResultsOnCCP}
Under Assumptions \ref{ass:RegularityModel}-\ref{ass:Identification},
\begin{enumerate}[(a)]
\item $P_{\theta }$ is continuously differentiable,
\item $\partial P_{\theta }/\partial \theta = \partial \Psi _{\theta }(P_{\theta })/\partial \theta$,
\item $\alpha $ is identified by $P_{(\alpha ,\theta _{f})}$ for any $\theta _{f}\in \Theta _{f}$, i.e., $\forall \theta _{f}\in \Theta _{f}$, $P_{(\alpha _{a},\theta _{f})}=P_{(\alpha _{b},\theta _{f})}$ implies $\alpha _{a}=\alpha _{b}$.
\end{enumerate}
\end{lemma}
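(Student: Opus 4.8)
The plan is to derive all three parts from the fixed-point representation $P_{\theta} = \Psi_{\theta}(P_{\theta})$ of Lemma \ref{lem:PolicyProperties}(a) together with the zero Jacobian property of Lemma \ref{lem:PolicyProperties}(c). For part (a), the organizing device is the Implicit Function Theorem applied to the map
\[
G(\theta,P) ~\equiv~ P - \Psi_{\theta}(P),
\]
which vanishes precisely at $(\theta, P_{\theta})$. I would first verify that $\Psi_{\theta}(P)$ is continuously differentiable jointly in $(\theta,P)$, propagating the smoothness in Assumption \ref{ass:RegularityModel} through the construction $\Psi_{\theta} = \Lambda_{\theta}\circ\varphi_{\theta}$: the twice-continuous differentiability of $g_{\theta_g}$, $f_{\theta_f}$, and $u_{\theta_u}$ in $\theta$, and of $g_{\theta_g}$ in $\epsilon$, carries over to the smoothed value function, to the Hotz--Miller inversion $\varphi_{\theta}$ of Eq.\ \eqref{eq:Mapping2B}, and to the choice-probability map $\Lambda_{\theta}$ of Eq.\ \eqref{eq:Mapping1B}. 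Computing the Jacobian $\partial G(\theta,P)/\partial P = I - \partial\Psi_{\theta}(P)/\partial P$ and evaluating at $P = P_{\theta}$, the zero Jacobian property gives $\partial\Psi_{\theta}(P_{\theta})/\partial P = 0$, so this Jacobian is simply the identity matrix and hence trivially invertible. The Implicit Function Theorem then yields a continuously differentiable map $\theta \mapsto P_{\theta}$, which by uniqueness (Lemma \ref{lem:PolicyProperties}(a)) coincides with the model-implied CCP vector, establishing (a).

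Parts (b) and (c) are then short. For (b), differentiating the identity $P_{\theta} = \Psi_{\theta}(P_{\theta})$ with respect to $\theta$ via the chain rule gives
\[
\frac{\partial P_{\theta}}{\partial \theta} ~=~ \frac{\partial \Psi_{\theta}(P_{\theta})}{\partial \theta} ~+~ \frac{\partial \Psi_{\theta}(P)}{\partial P}\bigg|_{P=P_{\theta}}\,\frac{\partial P_{\theta}}{\partial \theta},
\]
where the first term on the right is the partial derivative in $\theta$ holding $P$ fixed (evaluated at $P_{\theta}$) and the second collects the indirect effect through the fixed point. The zero Jacobian property annihilates the second term, leaving $\partial P_{\theta}/\partial\theta = \partial\Psi_{\theta}(P_{\theta})/\partial\theta$, which is exactly (b). For (c), suppose $P_{(\alpha_a,\theta_f)} = P_{(\alpha_b,\theta_f)}$ and call this common vector $P^{*}$; since each $P_{(\alpha,\theta_f)}$ is the fixed point of $\Psi_{(\alpha,\theta_f)}$ by Lemma \ref{lem:PolicyProperties}(a), evaluating at $P^{*}$ yields both $\Psi_{(\alpha_a,\theta_f)}(P^{*}) = P^{*}$ and $\Psi_{(\alpha_b,\theta_f)}(P^{*}) = P^{*}$. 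These are precisely the two fixed-point conditions in Assumption \ref{ass:Identification}(b), which therefore forces $\alpha_a = \alpha_b$, proving (c).

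The main obstacle is the joint continuous differentiability of $\Psi_{\theta}$ invoked in part (a): everything downstream (the Jacobian computation, the chain-rule step for (b), and the fixed-point substitution for (c)) is essentially algebraic once this regularity is secured. The delicate point is confirming that differentiation passes through the integral defining the smoothed value function and through the Hotz--Miller inversion, which relies on the finite-moment and full-support conditions of Assumption \ref{ass:RegularityModel}(a)--(b) in addition to the parametric smoothness in Assumption \ref{ass:RegularityModel}(c). By contrast, the striking simplification throughout is that the zero Jacobian property reduces what would ordinarily be an invertibility check on $I - \partial\Psi_{\theta}/\partial P$ to the trivial observation that this matrix equals the identity at the fixed point.
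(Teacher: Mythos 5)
Your proposal is correct, and part (c) is exactly the paper's argument: the paper also proves (c) by noting that a common fixed point $P_{(\alpha_a,\theta_f)}=P_{(\alpha_b,\theta_f)}$ gives two solutions of the fixed-point condition in Assumption \ref{ass:Identification}(b), forcing $\alpha_a=\alpha_b$.

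For parts (a)--(b), however, your route differs from the paper's. The paper does not argue these parts at all; it simply cites \citet[pages 1015--6]{rust:1988}, where smoothness of the model-implied choice probabilities is obtained by applying the implicit function theorem to the (smoothed) Bellman fixed point $V_\theta = T_\theta(V_\theta)$, with invertibility of the relevant Jacobian coming from the contraction property of $T_\theta$ (i.e.\ $\beta<1$), and then composing with the smooth map $\Lambda_\theta$. You instead apply the implicit function theorem directly to the policy-iteration fixed point $G(\theta,P)=P-\Psi_\theta(P)=0$, where invertibility of $\partial G/\partial P$ is immediate from the zero Jacobian property of Lemma \ref{lem:PolicyProperties}(c), and uniqueness of the fixed point (Lemma \ref{lem:PolicyProperties}(a)) identifies the implicit solution with $P_\theta$; part (b) then drops out of the chain rule with the indirect term annihilated. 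This is a legitimate and arguably more elegant argument, since it reuses the structure the paper has already established and makes (b) a one-line corollary rather than a separately cited fact. What it costs you is precisely the step you flag as delicate: joint continuous differentiability of $(\theta,P)\mapsto\Psi_\theta(P)$, which you only sketch (propagation of Assumption \ref{ass:RegularityModel} through the smoothed value function, the Hotz--Miller inversion $\varphi_\theta$, and $\Lambda_\theta$). That verification is essentially the analytic content of the Rust citation, so your proof sits at the same level of rigor as the paper's --- it trades an external reference for a cleaner internal mechanism, but neither writes out the smoothness-propagation details in full.
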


Lemmas \ref{lem:PolicyProperties} and \ref{lem:AuxResultsOnCCP} are well-known results under correct specification. At the risk of being repetitive, we include these in the paper for two reasons. First, we note that these properties belong to the econometric model, regardless of whether it is correctly specified or not. Second, later results in the paper will repeatedly refer to these properties.

Thus far, we have described how the model specifies two conditional distributions: the CCPs and the transition probabilities. The final element of the specification is the marginal distribution of the state variables, which is left completely unspecified.

\subsection{Local misspecification}\label{sec:Misspecification}

We now describe the true data generating process (DGP), denoted by $\Pi^{*} _{n}(a,x,x')$, and explain its relationship to the econometric model in Section \ref{sec:Model}. Hereafter, a superscript with asterisk denotes true value.

By definition, the DGP is the product of the transition probability, the CCPs, and the marginal distribution of the state variable, i.e., for all $(a,x,x') \in A\times X\times X$,
\begin{equation}
\Pi_{n}^{*}(a,x,x' )~=~f_{n}^{\ast }(x'|a,x) ~\times ~P_{n}^{\ast }(a|x) ~\times ~m_{n}^{\ast }(x),
\label{eq:JointDistribution}
\end{equation}
where:
	\begin{align}
		{f}_{n}^{\ast}(x^{\prime }|a,x) ~&\equiv~  
			\frac{ \Pi_{n}^{\ast }( a,x,x^{\prime }) }{ \sum_{\tilde{x}^{\prime }\in X} \Pi_{n}^{\ast }( a,x,\tilde{x}^{\prime }) }\notag\\
		{P}_{n}^{*}(a|x) ~&\equiv~ 
			\frac{\sum_{\tilde{x}^{\prime }\in X} \Pi_{n}^{\ast }( a,x,\tilde{x} ^{\prime })}{\sum_{(\check{a},\check{x}')\in A \times X} \Pi_{n}^{\ast }( \check{a},x,\check{x}^{\prime })}\notag\\
		{m}_{n}^{*}(x) ~&\equiv~ 
			{\sum_{(a,x')\in A \times X} \Pi_{n}^{\ast }( {a},x,x')}. \label{eq:DefnPopulations}
	\end{align}
For the same reason as before, ${P}_{n}^{*}(|A||x) = 1- \sum_{a\in \tilde{A}} {P}_{n}^{*}(a|x)$ for all $x\in X$. Thus, the vector of true CCPs is completely characterized by $P_{n}^{*} \equiv \{{P}_{n}^{*}(a|x)\}_{(a,x)\in \tilde{A} \times X} \in \Theta_{P}$.

Section \ref{sec:Model} specifies $P_{\theta }(a|x)$ as the econometric model for $P_{n}^{\ast }(a|x)$ and $f_{\theta _{f}}(x'|a,x)$ as the econometric model for $f_{n}^{\ast }(x'|a,x)$. This paper allows the econometric model to be misspecified, i.e.,
\begin{equation}
\inf_{(\alpha ,\theta _{f})\in \Theta _{\alpha }\times \Theta _{f}}\Vert ~(P_{(\alpha ,\theta _{f})}-P_{n}^{\ast })'~,~(f_{\theta _{f}}-f_{n}^{\ast })'~\Vert > 0, \label{eq:Misspecification}
\end{equation}
but requires the misspecification to vanish asymptotically according to the following assumption.

\begin{assumptionA}\label{ass:LocalMiss} The model is locally misspecified in the following sense.
\begin{enumerate}[(a)]
	\item The sequence of DGPs $\{\Pi_{n}^{\ast }\}_{n\geq 1}$ with $\Pi_{n}^{\ast } \equiv \{\Pi_{n}^{\ast }(a,x,x')\}_{(a,x,x') \in A \times X \times X}$ converges to a limiting DGP $\Pi^{\ast } \equiv \{\Pi^{\ast }(a,x,x')\}_{(a,x,x') \in A \times X \times X}$ in the following manner:
	\begin{equation*}
	{n}^{\delta}(\Pi_{n}^{\ast } - \Pi^{\ast }) \to B_{\Pi^{*}} \in \mathbb{R}^{A \times X \times X},
	\end{equation*}
	where $\delta>0$ is an unknown parameter to the researcher.
 	\item The econometric model is correctly specified in the limit:
\begin{equation*}
\inf_{(\alpha ,\theta _{f})\in \Theta _{\alpha }\times \Theta _{f}}\Vert ~(P_{(\alpha ,\theta _{f})}-P^{\ast })'~,~(f_{\theta _{f}}-f^{\ast })'~ \Vert =0,
\end{equation*}
	where $P^{\ast } \equiv \lim_{n \to \infty} {P}_{n}^{*}$ denotes the limiting vector of CCPs and $f^{\ast }\equiv\lim_{n \to \infty} {f}_{n}^{\ast}$ denotes the limiting transition probabilities.
\end{enumerate}
\end{assumptionA}

Assumption \ref{ass:LocalMiss} describes the effect of the local misspecification on the distribution of the data. This high-level condition represents a situation in which the underlying structural features of the econometric model are ``close'' to the true ones. We now provide three illustrations. 

As our first illustration, suppose that the researcher incorrectly specifies one of the functional forms in the econometric model. For instance, he could specify that the utility function under action $a=1$ is a linear function of the state variable $x$ when, in reality, this function is quadratic:
\begin{equation*}
u_{\theta}(x,a=1) ~=~ \theta_1 + \theta_2 x + \tau_{n} x^2.
\end{equation*}
The coefficient of the quadratic term, $\tau_{n}$, controls the degree of misspecification. In particular, $\tau_{n} \to 0$ implies that, in the limit, the econometric model is correctly specified.

As a second illustration, suppose that the data are generated by the presence of unobserved heterogeneity along the lines of \cite{arcidiacono/miller:2011}.\footnote{We thank an anonymous referee for suggesting this second illustration.} Specifically, assume that the sample is composed of two types of agents, A and B. Both types of agents behave exactly according to the model but they differ in the parameter values of any of the structural functions (e.g.\ the utility function). If we use $\tau_{n} \in (0,1)$ to denote the proportion of agents of type B in the population, the observed true CCP $P^{*}_{n}$ is the mixture of CCPs of agents of type A and B with weights $(1-\tau_{n})$ and $\tau_{n}$, respectively. The model is then misspecified in the sense that it presumes a homogenous sample. As in the first illustration, $\tau_{n} \to 0$ implies that the model is correctly specified in the limit.

As a third illustration, suppose that the agent exhibits small departures from the predicted rational behavior according to the econometric model.\footnote{We thank the Editor, Chris Taber, for providing this third illustration.} Let us develop this idea in more detail. Given state variables $(x,\epsilon)$, our model predicts that the agent chooses the action $a$ that maximizes expected discounted utility, i.e.,
\begin{equation*}
	P_{n}^{\ast }(a|x,\epsilon)~=~1\left[a~=~\underset{\tilde{a}\in A}{\arg \max} 	\left( u_{\theta _{u}}( x,\tilde{a}) +\beta \sum\nolimits_{x'\in X}V_{\theta }( x') f_{\theta _{f}}( x'|x, \tilde{a}) +\epsilon(\tilde{a}) \right)\right].
\end{equation*}
Instead, suppose that the agent chooses actions according to a multinomial distribution with choice probabilities that are increasing in the action-specific expected discounted utility. For example, given $(x,\epsilon)$, the agent chooses action $a \in A$ with probability:
\begin{equation}
	P_{n}^{\ast }(a|x,\epsilon)~=~\frac{\exp\left[\left( u_{\theta _{u}}( x,{a}) +\beta \sum\nolimits_{x'\in X}V_{\theta }( x') f_{\theta _{f}}( x'|x, {a}) +\epsilon({a}) \right)/\tau_{n}\right]}{\sum_{\tilde{a} \in A} \left[\exp \left( u_{\theta _{u}}( x,a) +\beta \sum\nolimits_{x'\in X}V_{\theta }( x') f_{\theta _{f}}( x'|x, a) +\epsilon(\tilde{a}) \right)/\tau_{n}\right]}.
	\label{eq:Irrational}
\end{equation}
The parameter $\tau_{n}\ge 0$ controls the degree of departure from rational behavior. Once again, note that $\tau_{n} \to 0 $ implies that the model is correctly specified in the limit. Finally, note that the CCPs $P_{n}^{\ast }(a|x)$ follow from integrating $\epsilon$ out from $P_{n}^{\ast }(a|x,\epsilon)$.

Despite being very different from a conceptual viewpoint, these three illustrations can all be framed in terms of Assumption \ref{ass:LocalMiss}. We now briefly explain this. First, these examples generate a discrepancy between the model-implied CCPs $P_{\theta }$ and the true CCPs $P_{n}^{\ast }$, i.e., $\left\Vert P_{n}^{\ast }-P_{\theta }\right\Vert >0$ for all $\theta \in \Theta$, i.e., Eq.\ \eqref{eq:Misspecification} follows. Second, in all cases, the parameter $\tau_n$ determines the amount model misspecification. If this parameter is close to zero, a continuity argument implies that the model-implied CCPs should be ``close'' to the true ones. In particular, if $\tau_n = O(n^{-\delta})$ for some $\delta>0$ and if the model CCPs are sufficiently smooth, it follows that:
\begin{enumerate}[(a)]
	\item $n^{\delta}(P_{n}^{\ast }-P^{\ast }) \to C \in \mathbb{R}^{{\tilde A} \times X}$,
	\item $P^{\ast } = P_{(\alpha^{*},\theta^{*}_f) }$ for some $(\alpha^{*},\theta^{*}_f) \in \Theta$.
\end{enumerate}
Assumption \ref{ass:LocalMiss} then follows from this and the correct specification of the transition probabilities. We note in passing that the first illustration is used as the framework for our Monte Carlo simulations.

Assumption \ref{ass:LocalMiss}(a) requires that the local misspecification vanishes at a rate of $n^{\delta}$ for some $\delta>0$. This rate depends on the difference between the model-implied CCPs and the true CCPs and, thus, it is unknown to the researcher. Our framework allows this rate to be faster, equal, or even slower than the parametric rate $\sqrt{n}$. This is more general than the typical local misspecification framework that usually restricts to $\delta \geq 1/2$ (e.g., see \cite{newey:1985a,newey:1985b}, \cite{tauchen:1985}, \cite{bugni/canay/guggenberger:2012}).

Under these conditions, Theorem \ref{thm:LocalMisspecification} demonstrates that there is a unique true limiting parameter value $(\alpha ^{\ast },\theta _{f}^{\ast })$. As we later show, the estimators considered in this paper will converge in probability to this parameter value despite the (local) misspecification.

\begin{theorem}\label{thm:LocalMisspecification}
	Under Assumptions \ref{ass:Identification} and \ref{ass:LocalMiss}(b), there is a unique $(\alpha ^{\ast },\theta _{f}^{\ast })\in \Theta $ such that $P_{(\alpha ^{\ast },\theta _{f}^{\ast })}=P^{\ast }$ and $f_{\theta _{f}^{\ast }}=f^{\ast }$.
\end{theorem}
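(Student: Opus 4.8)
The plan is to split the claim into an existence part and a uniqueness part, and to handle each by invoking the identification structure already assembled in the excerpt. For existence, I would first emphasize a subtlety: Assumption \ref{ass:LocalMiss}(b) only asserts that the \emph{infimum} of the discrepancy $\|(P_{(\alpha,\theta_f)}-P^*)',\,(f_{\theta_f}-f^*)'\|$ over $\Theta_\alpha\times\Theta_f$ equals zero; it does not by itself assert that the infimum is attained. The engine for existence is therefore compactness of $\Theta=\Theta_\alpha\times\Theta_f$ together with continuity of the objective. Continuity of $\theta\mapsto P_\theta$ follows from Lemma \ref{lem:AuxResultsOnCCP}(a), continuity of $\theta_f\mapsto f_{\theta_f}$ follows from Assumption \ref{ass:RegularityModel}(c), and the Euclidean norm is continuous, so the map $(\alpha,\theta_f)\mapsto\|(P_{(\alpha,\theta_f)}-P^*)',\,(f_{\theta_f}-f^*)'\|$ is continuous on a compact set and hence attains its infimum. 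Since that infimum is zero and a norm vanishes only at the origin, the minimizer $(\alpha^*,\theta_f^*)$ satisfies $P_{(\alpha^*,\theta_f^*)}=P^*$ and $f_{\theta_f^*}=f^*$ simultaneously.

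For uniqueness, I would suppose two vectors $(\alpha_a,\theta_{f,a})$ and $(\alpha_b,\theta_{f,b})$ both satisfy the two equalities, and then peel off the coordinates in sequence, mirroring the two-step structure of Assumption \ref{ass:Identification}. First, from $f_{\theta_{f,a}}=f^*=f_{\theta_{f,b}}$ and Assumption \ref{ass:Identification}(a), I obtain $\theta_{f,a}=\theta_{f,b}$; call this common value $\theta_f^*$. Second, with the transition parameter now fixed, I use the CCP equalities $P_{(\alpha_a,\theta_f^*)}=P^*=P_{(\alpha_b,\theta_f^*)}$ together with Lemma \ref{lem:AuxResultsOnCCP}(c) — which states precisely that $\alpha$ is identified by $P_{(\alpha,\theta_f)}$ for every fixed $\theta_f$ — to conclude $\alpha_a=\alpha_b$.

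If one instead wishes to invoke Assumption \ref{ass:Identification}(b) directly (so that the argument leans only on the assumptions cited in the theorem), I would route the $\alpha$-step through the fixed-point characterization: by Lemma \ref{lem:PolicyProperties}(a), $P_{(\alpha,\theta_f)}$ is the unique fixed point of $\Psi_{(\alpha,\theta_f)}$, so $P^*=P_{(\alpha_a,\theta_f^*)}$ is a fixed point of $\Psi_{(\alpha_a,\theta_f^*)}$ and $P^*=P_{(\alpha_b,\theta_f^*)}$ is a fixed point of $\Psi_{(\alpha_b,\theta_f^*)}$; applying Assumption \ref{ass:Identification}(b) with $P=P^*$ then yields $\alpha_a=\alpha_b$. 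Either route closes the argument.

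The only genuinely delicate point I anticipate is the existence step: one must resist reading Assumption \ref{ass:LocalMiss}(b) as already asserting attainment of the infimum, and must supply that attainment explicitly from compactness of $\Theta$ and continuity of $P_\theta$ and $f_{\theta_f}$ in the parameters. The uniqueness step, by contrast, is a direct two-stage application of the identification conditions and should be routine.
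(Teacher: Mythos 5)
Your proposal is correct and follows essentially the same route as the paper: existence via compactness of $\Theta$ and continuity of $(\alpha,\theta_f)\mapsto\|(P_{(\alpha,\theta_f)}-P^{*})',(f_{\theta_f}-f^{*})'\|$ (so the zero infimum guaranteed by Assumption \ref{ass:LocalMiss}(b) is attained), then uniqueness by first identifying $\theta_f$ from Assumption \ref{ass:Identification}(a) and then $\alpha$ from the CCP/fixed-point identification. If anything, you are more explicit than the paper, which disposes of the $\alpha$-step by ``repeating the previous argument'' via a triangle inequality, whereas you spell out both admissible closings (Lemma \ref{lem:AuxResultsOnCCP}(c), or Lemma \ref{lem:PolicyProperties}(a) combined with Assumption \ref{ass:Identification}(b)).
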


\section{General result for two-step $K$-stage estimators}\label{sec:Inference}

This paper considers two-step estimators based on the $K$-stage sequential policy function iteration (PI) algorithm developed by \cite{aguirregabiria/mira:2002}.\footnote{Results for single-step estimators are easy to derive from our analysis by considering the special case in which the entire parameter vector is estimated on the second step.} For any $K \in \mathbb{N}$, this estimator is defined as follows:
\begin{itemize}
	\item Stage 1: Estimate $\theta^*_{f}$ with a first-step estimator, denoted by $\hat{\theta} _{f_{n}}$. Also, estimate $P^*$ with the initial or $0$-stage estimator of the CCPs, denoted by $\hat{P}_{n}^{0}$.
	\item Stage 2: Estimate $\alpha^{*}$ with $\hat{\alpha}_{n}^{K}$, computed using the following algorithm. Initialize $k=1$ and then:
	\begin{itemize}
		\item[(a)] Compute:
		\begin{equation}
\hat{\alpha}_{n}^{k} ~\equiv~\underset{{\alpha \in \Theta _{\alpha }}}{\arg \max}~{Q}_{n}( \alpha ,
\hat{\theta}_{f_{n}}, \hat{P}_{n}^{k-1} ),\label{eq:k-StepDefn}
\end{equation}
where ${Q}_{n}: \Theta_{\alpha} \times \Theta_{f} \times \Theta_{P} \to \mathbb{R} $ is the sample objective function. If $k=K$, exit the algorithm. If $k<K$, go to (b).
	\item[(b)] Estimate $P^*$ with the $k$-stage estimator of the CCPs, given by:
		\begin{equation*}
			\hat{P}_{n}^{k}~\equiv~ \Psi _{(\hat{\alpha}_{n}^{k},\hat{\theta}_{f_{n}})}(\hat{P}_{n}^{k-1}).
		\end{equation*}
Then, increase $k$ by one unit and return to (a).  
	\end{itemize}
\end{itemize}

For any $K \in \mathbb{N}$, we estimate $\theta^* = (\theta^*_{f}, \alpha^*)$ with $\hat\theta_{n}^{K} \equiv (\hat{\theta} _{f_{n}},\hat{\alpha}_{n}^{K})$. This algorithm leaves several aspects of the estimation method unspecified: the estimators $\hat{\theta} _{f_{n}}$ and $\hat{P}_{n}^{0}$ and the sample criterion function ${Q}_{n}$. Our strategy in this section is to produce a general result without specifying these objects and based on high-level conditions. Section \ref{sec:Applications} will apply the general result to concrete estimators used in practice.  

\begin{assumptionA}\label{ass:EEpart1}	
	$\alpha ^{\ast }$ belongs to the interior of $\Theta _{\alpha }$.
\end{assumptionA}

\begin{assumptionA}\label{ass:EEpart2}	
	Let $\mathcal{N}$ denote an arbitrary small neighborhood of $( \alpha ,\theta _{f},P) $ around $( \alpha ^{\ast },\theta _{f}^{\ast },P^{\ast }) $. Then, there is a limiting function $Q_{\infty }: \Theta_{\alpha} \times \Theta_{f} \times \Theta_{P} \to \mathbb{R} $ such that:
	\begin{enumerate}[(a)]
	\item $\sup_{\alpha \in \Theta _{\alpha }}\vert {Q}_{n}( \alpha ,\hat{\theta}_{f_{n}},\tilde{P}_{n}) -Q_{\infty }( \alpha ,\theta _{f}^{\ast },P^{\ast }) \vert =o_{p_{n}}(1)$, provided that $\tilde{P}_{n} = P^{\ast } + o_{p_{n}}(1)$.
	\item $Q_{\infty }(\alpha ,\theta _{f}^{\ast },P^{\ast })$ is uniquely maximized at $\alpha ^{\ast }$.
	\item For any $\lambda \in \{\alpha ,\theta _{f},P\}$, $\partial^2 {Q}_{n}( \alpha ,\theta _{f},P) /\partial \alpha \partial \lambda'$ is a continuous function for all $( \alpha ,\theta _{f},P) \in \mathcal{N}$ w.p.a.1.
	\item For any $\lambda \in \{\alpha ,\theta _{f},P\}$, $\sup_{( \alpha ,\theta _{f},P) \in \mathcal{N}}\Vert \partial^2 {Q}_{n}( \alpha ,\theta _{f},P) /\partial \alpha \partial \lambda' -\partial^2 {Q}_{\infty }( \alpha ,\theta _{f},P) /\partial \alpha \partial \lambda' \Vert =o_{p_{n}}(1)$.
	\item $\partial^2 {Q}_{\infty }( \alpha ,\theta _{f},P)/\partial \alpha \partial \alpha ^{\prime }$ is a continuous function and non-singular
	at $(\alpha ^{\ast },\theta _{f}^{\ast },P^{\ast })$.
	\item For any $\lambda \in \{\alpha ,\theta _{f}\}$, $\partial^2 {Q}_{\infty }(\alpha ^{\ast },\theta _{f}^{\ast },P^{\ast })/\partial \lambda \partial P' =\mathbf{0}_{ d_{\lambda } \times |\tilde{A} \times X|}$.
	\end{enumerate}
\end{assumptionA}

\begin{assumptionA}\label{ass:EEpart3}	The following results hold:
\begin{enumerate}[(a)]
	\item ${n}^{\min\{1/2,\delta\}}[ \partial {Q}_{n}(\alpha ^{\ast },\theta _{f}^{\ast },P^{\ast })/\partial \alpha ^{\prime },( \hat{ \theta}_{f,n}-\theta _{f}^{\ast })']' \overset{d}{\to } \zeta  = [\zeta_1',\zeta_2']'$, for some random variable $\zeta $.
	\item ${n}^{\min\{1/2,\delta\}}(\hat{P}^{0}_{n} - P^{\ast }) = O_{p_{n}}(1)$.
\end{enumerate}
\end{assumptionA}

The second step of the $K$-stage PI algorithm is an iterative version of an extremum estimator. Assumption \ref{ass:EEpart1} is a standard assumption for extremum estimators, which allows us to rely on the first order conditions of the optimization in Eq.\ \eqref{eq:k-StepDefn}. With the exception of Assumption \ref{ass:EEpart2}(f), Assumption \ref{ass:EEpart2} is composed of the usual regularity conditions for extremum estimators under a drifting sequence of DGPs. Assumption \ref{ass:EEpart2}(f) is critical to establish the main result in this section and we will show that it is a consequence of the zero Jacobian property proved in Lemma \ref{lem:PolicyProperties}(d). Assumption \ref{ass:EEpart3} requires that certain random variables converge in distribution or are bounded in probability. The rate of convergence for these variables is ${n}^{\min\{1/2,\delta\}}$, i.e., the slowest rate between the local misspecification and the regular rate for parametric estimation. Assumption \ref{ass:EEpart3}(a) does not specify the distribution of $\zeta $, as this is not required to establish the general result in this section. Section \ref{sec:Applications} will apply the general result and will verify that $\zeta $ is a multivariate normal random variable with possibly non-zero mean. Finally, note that Assumptions \ref{ass:EEpart2}(f), (d), and \ref{ass:EEpart3}(b) use the subscript ${p_{n}}$ to refer to a drifting sequence of DGPs. This is necessary in our paper to handle the presence of the local misspecification.

Under these assumptions, Theorem \ref{thm:ANproof} establishes the asymptotic distribution of the two-step $K$-stage policy function iteration estimator.

\begin{theorem}[{\bf General result}]\label{thm:ANproof} 
Assume Assumptions \ref{ass:RegularityModel}-\ref{ass:EEpart3}. For any $K\geq 1$,
\begin{align*}
&{n}^{\min\{1/2,\delta\}}(\hat{\alpha}_{n}^{K}-\alpha ^{\ast })\\
&= -\left(\frac{\partial^{2} {Q} _{\infty }(\alpha ^{\ast },\theta _{f}^{\ast },P^{\ast })}{\partial \alpha \partial \alpha ^{\prime }}\right)^{-1}{n}^{\min\{1/2,\delta\}} \left[\frac{\partial {Q} _{n}(\alpha ^{\ast },\theta _{f}^{\ast },P^{\ast })}{\partial \alpha }+\frac{\partial^{2} {Q}_{\infty }(\alpha ^{\ast },\theta _{f}^{\ast },P^{\ast })}{\partial \alpha \partial \theta _{f}'}(\hat{\theta}_{f,n}-\theta _{f}^{\ast })\right] + o_{p_{n}}(1)\\
&\overset{d}{\to } -\left(\frac{\partial^{2} {Q}_{\infty }(\alpha ^{\ast },\theta _{f}^{\ast },P^{\ast })}{\partial \alpha \partial \alpha ^{\prime }}\right)^{-1}\left[\zeta_{1}+\frac{\partial^{2} {Q}_{\infty }(\alpha ^{\ast },\theta _{f}^{\ast },P^{\ast })}{\partial \alpha\partial \theta _{f}'}\zeta_{2}\right].
\end{align*}
\end{theorem}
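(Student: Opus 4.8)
The plan is to prove a stronger statement by induction on the iteration index $k$, which covers all $K\geq 1$ at once. The induction hypothesis at stage $k$ asserts three things simultaneously: that $\hat{\alpha}_{n}^{k}$ is consistent for $\alpha^{*}$, that $n^{\min\{1/2,\delta\}}(\hat{\alpha}_{n}^{k}-\alpha^{*})$ admits the linear representation displayed in the theorem (in particular, a representation manifestly free of $k$), and that $n^{\min\{1/2,\delta\}}(\hat{P}_{n}^{k}-P^{*})=O_{p_{n}}(1)$. The base case for the CCP rate is Assumption \ref{ass:EEpart3}(b). Since the displayed representation does not depend on $k$, establishing it for every $k$ immediately yields invariance with respect to $K$.

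Within each inductive step I would proceed in three substeps. First, consistency of $\hat{\alpha}_{n}^{k}$: the previous step guarantees $\hat{P}_{n}^{k-1}=P^{*}+o_{p_{n}}(1)$ (the CCP rate implies consistency since $n^{-\min\{1/2,\delta\}}\to 0$), and $\hat{\theta}_{f_{n}}=\theta_{f}^{*}+o_{p_{n}}(1)$ follows from Assumption \ref{ass:EEpart3}(a); then uniform convergence in Assumption \ref{ass:EEpart2}(a) together with unique maximization in Assumption \ref{ass:EEpart2}(b) delivers $\hat{\alpha}_{n}^{k}\overset{p_{n}}{\to}\alpha^{*}$ by the standard extremum-estimator argument. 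Second, the rate and representation: because $\alpha^{*}$ is interior (Assumption \ref{ass:EEpart1}), $\hat{\alpha}_{n}^{k}$ satisfies $\partial Q_{n}(\hat{\alpha}_{n}^{k},\hat{\theta}_{f_{n}},\hat{P}_{n}^{k-1})/\partial\alpha=0$, and I would take a mean-value expansion of this in $(\alpha,\theta_{f},P)$ around $(\alpha^{*},\theta_{f}^{*},P^{*})$. Using the continuity and uniform convergence of the second derivatives (Assumptions \ref{ass:EEpart2}(c)--(e)) and consistency of all three arguments, the sample Hessians at the intermediate points may be replaced by $\partial^{2}Q_{\infty}(\alpha^{*},\theta_{f}^{*},P^{*})/\partial\alpha\partial\lambda'$ up to $o_{p_{n}}(1)$. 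The crucial point is that the coefficient on $\hat{P}_{n}^{k-1}-P^{*}$ is $\partial^{2}Q_{\infty}(\alpha^{*},\theta_{f}^{*},P^{*})/\partial\alpha\partial P'=\mathbf{0}$ by Assumption \ref{ass:EEpart2}(f); combined with $n^{\min\{1/2,\delta\}}(\hat{P}_{n}^{k-1}-P^{*})=O_{p_{n}}(1)$ from the hypothesis, the entire CCP contribution is $o_{p_{n}}(1)$ at the relevant scale. Solving for $\hat{\alpha}_{n}^{k}-\alpha^{*}$ via nonsingularity of $\partial^{2}Q_{\infty}/\partial\alpha\partial\alpha'$ (Assumption \ref{ass:EEpart2}(e)) reproduces the displayed representation, and Assumption \ref{ass:EEpart3}(a) gives both $O_{p_{n}}(1)$ boundedness and, by the continuous mapping theorem, the stated limiting distribution.

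Third, I would propagate the CCP rate from $\hat{P}_{n}^{k-1}$ to $\hat{P}_{n}^{k}=\Psi_{(\hat{\alpha}_{n}^{k},\hat{\theta}_{f_{n}})}(\hat{P}_{n}^{k-1})$. The key structural input is the zero Jacobian property. By Theorem \ref{thm:LocalMisspecification} we have $P^{*}=P_{(\alpha^{*},\theta_{f}^{*})}$, which by Lemma \ref{lem:PolicyProperties}(a) is the fixed point $\Psi_{(\alpha^{*},\theta_{f}^{*})}(P^{*})=P^{*}$, and by Lemma \ref{lem:PolicyProperties}(c) the Jacobian $\partial\Psi_{(\alpha^{*},\theta_{f}^{*})}(P^{*})/\partial P$ is zero. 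A mean-value expansion of $\Psi_{(\hat{\alpha}_{n}^{k},\hat{\theta}_{f_{n}})}(\hat{P}_{n}^{k-1})-\Psi_{(\alpha^{*},\theta_{f}^{*})}(P^{*})$ then has a $\partial\Psi/\partial P$ term whose coefficient, at an intermediate point converging to $(\alpha^{*},\theta_{f}^{*},P^{*})$, is $o_{p_{n}}(1)$ by continuity of the Jacobian (Assumption \ref{ass:RegularityModel} and Lemma \ref{lem:AuxResultsOnCCP}). Multiplied by $\hat{P}_{n}^{k-1}-P^{*}=O_{p_{n}}(n^{-\min\{1/2,\delta\}})$, this term is negligible at scale $n^{\min\{1/2,\delta\}}$, while the remaining terms are linear in $\hat{\alpha}_{n}^{k}-\alpha^{*}$ and $\hat{\theta}_{f_{n}}-\theta_{f}^{*}$, both of order $O_{p_{n}}(n^{-\min\{1/2,\delta\}})$. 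Hence $n^{\min\{1/2,\delta\}}(\hat{P}_{n}^{k}-P^{*})=O_{p_{n}}(1)$, closing the induction. The step I expect to be the main obstacle is the bookkeeping that couples the two expansions: one must ensure the intermediate evaluation points in both the $Q_{n}$ first-order condition and the $\Psi$ recursion converge to $(\alpha^{*},\theta_{f}^{*},P^{*})$ so that the two ``zero'' facts (Assumption \ref{ass:EEpart2}(f) and Lemma \ref{lem:PolicyProperties}(c)) can be invoked through continuity, and that each vanishing coefficient is paired with a factor of the correct $O_{p_{n}}(n^{-\min\{1/2,\delta\}})$ order so the products are $o_{p_{n}}(1)$ after scaling. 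The subtlety is that the CCP rate fed into stage $k$ is exactly what kills the CCP term at stage $k$, so the induction must carry the CCP rate as a hypothesis rather than deriving consistency alone; this is precisely what forces the estimator's asymptotic behavior to be independent of the number of iterations.
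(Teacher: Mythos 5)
Your proposal is correct and follows essentially the same route as the paper's proof: the paper also proceeds by induction on the iteration index, using the standard extremum-estimator consistency argument plus a mean-value expansion of the first-order condition in which Assumption \ref{ass:EEpart2}(f) annihilates the $\hat{P}_{n}^{k-1}$ contribution, and then propagates the CCP rate through $\Psi$ via the zero Jacobian property of Lemma \ref{lem:PolicyProperties}(c). The only cosmetic difference is organizational --- the paper separates the argument into a ``Part 1'' (the representation for arbitrary $K$, conditional on the CCP rate) and a ``Part 2'' (the induction that establishes the CCP rate), whereas you fold all three claims into a single joint induction hypothesis --- but the mathematical content is identical.
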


The result reveals two important features of the asymptotic distribution of our estimators. First, the local misspecification vanishing at the rate of $n^{\delta}$ causes the estimator to converge to the true limiting structural parameter at a rate of ${n}^{\min\{1/2,\delta\}}$. Second, the asymptotic distribution is invariant to the number of iterations $K$. In fact, the first equality in Theorem \ref{thm:ANproof} implies that changes in $K$ are asymptotically irrelevant. This result holds regardless of the rate of local misspecification $\delta$. The invariance of the asymptotic distribution to $K$ is one of the main findings of this paper. The intuition of this result is as follows. By evaluating Eq.\ \eqref{eq:k-StepDefn} at $k=K$, we find that:
\begin{equation*}
	\hat{\alpha}_{n}^{K} ~~\equiv~~\underset{{\alpha \in \Theta _{\alpha }}}{\arg \max}~{Q}_{n}( \alpha ,
\hat{\theta}_{f_{n}}, \hat{P}_{n}^{K-1} ).
\end{equation*}
This equation reveals that local misspecification can affect the asymptotic distribution of $\hat{\alpha}_{n}^{K}$ through three channels: the sample criterion function $Q_{n}$, the first-step estimator $\hat{ \theta}_{f,n}$, and the $(K-1)$-stage estimator of the CCPs $\hat{P}^{K-1}_{n}$. As the notation shows, the third channel depends explicitly on $K$, and its effect could potentially change with every iteration. This is not the case, and we now explain why. Our formal argument shows that the zero Jacobian property (embodied in Assumption \ref{ass:EEpart2}(f)) effectively erases the accumulated effect of the model misspecification in all previous iterations. Then, model misspecification can only affect the estimator through its most recent iteration. Our formal arguments also show that the effect of the most recent iteration is invariant with the number of iterations. From these two observations, the result follows.

\section{Applications of the general result}\label{sec:Applications}

We now apply Theorem \ref{thm:ANproof} to classes of estimators used in practice. Section \ref{sec:ML} considers $K$-ML estimation and Section \ref{sec:MD} considers $K$-MD estimation. 

Throughout this section, we presume that the researcher observes an i.i.d.\ sample distributed according to the true (drifting) DGP.

\begin{assumptionA}\label{ass:iid}
	For each $n \in \mathbb{N}$, $\{(a_{i},x_{i},x_{i}^{\prime })\}_{i\leq n}$ is an i.i.d.\ sample distributed according to $\Pi^{*} _{n}(a,x,x')$.
\end{assumptionA}

Under this assumption, it is natural to consider the sample analogue estimators of the DGP, the CCPs, and the transition probabilities, i.e., for all $(a,x,x') \in A \times X \times X$,
		\begin{align}
			\hat{\Pi}_{n}( a,x,x^{\prime })~&\equiv~  \sum_{i=1}^{n}1[ x_{i}=x,a_{i}=a,x_{i}^{\prime }=x^{\prime }] /n \notag\\
		\hat{P}_{n}(a|x) ~&\equiv~ 
		 \frac{\sum_{\tilde{x}^{\prime }\in X} \hat{\Pi}_{n}( a,x,\tilde{x} ^{\prime })}{\sum_{(\check{a},\check{x}')\in A \times X} \hat{\Pi}_{n}( \check{a},x,\check{x}^{\prime })}\notag\\
			\hat{f}_{n}(x^{\prime }|a,x) ~&\equiv~  
			\frac{ \hat{\Pi}_{n}( a,x,x^{\prime }) }{ \sum_{\tilde{x}^{\prime }\in X} \hat{\Pi}_{n}( a,x,\tilde{x}^{\prime }) }.\label{eq:DefnSamples}
		\end{align}

We now consider a general framework for the preliminary estimators in the algorithm.

\begin{assumptionA}\label{ass:Preliminary}
For $\hat{\Pi}_{n} \equiv \{\hat{\Pi}_{n}(a,x,x')\}_{(a,x,x') \in A \times X \times X}$, assume that:
	\begin{equation*}
	(\hat{\theta}_{f,n}, \hat{P}^{0}_{n})~=~G( \hat{\Pi}_{n} ),
	\end{equation*}
	where the function $G: \mathbb{R} ^{|A \times X \times X|} \to \mathbb{R} ^{d_{\theta_{f}}} \times \Theta_P $ is continuously differentiable at $\Pi^{*}$ and $(\theta _{f}^{\ast },P^{\ast })=G(\Pi^{*}) $.
\end{assumptionA}

Assumption \ref{ass:Preliminary} is very mild. By the identification result in Theorem \ref{thm:LocalMisspecification} and Assumption \ref{ass:iid}, it is reasonable to presume that the researcher estimates $(\theta_f^*,P^*)$ using a smooth function of the sample analogue estimator of the DGP.
In fact, Assumption \ref{ass:Preliminary} is automatically satisfied if we use a sample analogue estimator of the CCPs, i.e., $\hat{P}^{0}_{n}=\hat{P}_{n}$, and a non-parametric model for the transition probability that is estimated by sample analogues, i.e., $\theta _{f}\equiv \{f(x^{\prime }|a,x)\}_{(a,x,x^{\prime })\in A \times X \times X}$ and $\hat{\theta}_{f,n}=\hat{f}_{n}$.\footnote{One practical problem with the sample analogue estimators is that they are undefined if ${ \sum_{x'\in X}\hat{\Pi}_{n}( a,x,x') }=0$ for any $(a,x)\in A \times X$. For a discussion of this, see \cite{hotz/miller/sanders/smith:1994} and \citet[Page 914]{pesendorfer/schmidt-dengler:2008}. Of course, this is only a problem in small samples and does not affect the validity of our asymptotic arguments.}

\subsection{$K$-ML estimators}\label{sec:ML}

We now specialize the general result to $K$-ML estimators considered by \cite{aguirregabiria/mira:2002}. This is achieved by setting the sample objective function ${Q}_{n}$ to the pseudo-likelihood function, i.e.,
\begin{eqnarray*}
{Q}_{n}^{ML}( \alpha ,\theta _{f},P) ~\equiv ~n^{-1}\sum_{i=1}^{n}\ln \Psi _{(\alpha ,\theta _{f})}( P) ( a_{i}|x_{i}).
\end{eqnarray*}

To derive the asymptotic distribution of the $K$-ML estimator, we impose the following regularity conditions. 

\begin{assumptionA}\label{ass:Regularity}$\Psi $ satisfies the following properties:
\begin{enumerate}[(a)]
	\item $ \Psi _{\theta }(P)(a|x) \in (0,1)$ for any $(a,x) \in \tilde{A} \times X$.
	\item $ \Psi _{\theta }(P)$ is twice continuously differentiable in $\theta$ and $P$.
	\item $\partial\Psi _{\theta }(P)/\partial \alpha'$ is a full rank matrix at $(\theta^{*},P_{\theta^{*}})$.
\end{enumerate}	
\end{assumptionA}

Assumption \ref{ass:Regularity} are connected with the requirements in Assumption \ref{ass:EEpart2} and are standard in the literature. Assumption \ref{ass:Regularity}(a)-(b) are identical to \citet[conditions (b)-(c) of Proposition 4]{aguirregabiria/mira:2002}. Assumption \ref{ass:Regularity}(c) is connected to the non-singularity requirement in Assumption \ref{ass:EEpart2}(e). Since the $\alpha$ has been assumed to be identified by $\Psi _{\theta }(P)=P$ (and, thus, locally identified by it), Assumption \ref{ass:Regularity}(c) is equivalent to the regularity conditions in \citet[Theorem 1]{rothenberg:1971}.

Theorem \ref{thm:ML_main} is a corollary of Theorem \ref{thm:ANproof} and characterizes the asymptotic distribution of the $K$-ML estimator under local misspecification.

\begin{theorem}[{\bf $K$-ML}] \label{thm:ML_main}
Assume Assumptions \ref{ass:RegularityModel}-\ref{ass:EEpart1} and \ref{ass:iid}-\ref{ass:Regularity}. Then, for any $K,\tilde{K}\geq 1$,
\begin{align*}
{n}^{\min\{1/2,\delta\}} (\hat{\alpha}_{n}^{K-ML}-\alpha ^{\ast }) &={n}^{\min\{1/2,\delta\}}(\hat{\alpha}_{n}^{ \tilde{K}-ML}-\alpha ^{\ast })+o_{p_{n}}(1) \\
&\overset{d}{\to } \Upsilon _{ML} \times  \Delta \times  N\left( ~ B_{\Pi^{*}}\times 1[\delta \leq 1/2]~,~ 
~ (diag(\Pi^{*}) - \Pi^{*} {\Pi^{*}}' )\times 1[\delta \geq 1/2]~
\right)
\end{align*}
where $B_{\Pi^{*}}$ and $\Pi^{*}$ are as in Assumption \ref{ass:LocalMiss}, and $\Upsilon _{ML}$ and $\Delta$ are the following matrices:
\begin{align}
\Upsilon _{ML} ~ &\equiv ~\left(\frac{\partial P_{\theta ^{\ast }}'}{\partial \alpha } \Phi \frac{\partial P_{\theta ^{\ast }} }{\partial \alpha' } \right)^{-1}\frac{\partial P_{\theta ^{\ast }}'}{ \partial \alpha } \Phi  \left[
	\begin{array}{cc}
	\Sigma & -\dfrac{\partial P_{\theta ^{\ast }}}{\partial \theta _{f}'}
	\end{array}
	\right] \in \mathbb{R}^{d_{\alpha} \times (| A \times X| + d_{\theta_{f}})} \notag\\
\Delta ~&\equiv~
	\left[ 
	\begin{array}{cccc}
	{\bf I}_{|A \times X| \times |A \times X|}&{\bf I}_{|A \times X| \times |A \times X|}&\dots & {\bf I}_{|A \times X| \times |A \times X|} \\
	&\dfrac {\partial G_{1}(\Pi^{*})}{\partial {\Pi^{*}}'}&&
	\end{array}
	\right] \in \mathbb{R}^{(| A \times X| + d_{\theta_{f}}) \times | A \times X \times X|}, \label{eq:Delta}
\end{align}
where $G_{1}$ denotes the first component of $G$ in Assumption \ref{ass:Preliminary}, i.e., $\hat{\theta}_{f,n} \equiv G_{1}(\hat{\Pi}_{n})$,
\begin{eqnarray}
\Phi ~&\equiv &~\left[ 
\begin{array}{cccc}
\Phi _{1} & \mathbf{0}_{|\tilde{A}|\times |\tilde{A}|} & \ldots & \mathbf{0}_{|\tilde{A}|\times | \tilde{A}|} \\
\mathbf{0}_{|\tilde{A}|\times |\tilde{A}|} & \Phi _{2} & \ldots  & \vdots \\
\vdots &  \ddots& \ddots & \mathbf{0}_{|\tilde{A}|\times |\tilde{A}|} \\
\mathbf{0}_{|\tilde{A}|\times |\tilde{A}|} & \ldots  & \mathbf{0}_{|\tilde{A}|\times |\tilde{A}|} & \Phi _{|X|}
\end{array}
\right]  \in \mathbb{R}^{|\tilde A \times X| \times |\tilde A \times X|} \notag \\
\Sigma ~&\equiv &~\left[ 
\begin{array}{cccc}
\Sigma _{1} & \mathbf{0}_{|\tilde{A}|\times |A|} & \ldots & \mathbf{0}_{|\tilde{A}|\times |A|} \\
\mathbf{0}_{|\tilde{A}|\times |A|} & \Sigma _{2} & \ldots  & \vdots \\
\vdots &  \ddots& \ddots & \mathbf{0}_{|\tilde{A}|\times |A|} \\
\mathbf{0}_{|\tilde{A}|\times |A|} & \ldots  & \mathbf{0}_{|\tilde{A}|\times |A|} & \Sigma _{|X|}
\end{array}
\right] \in \mathbb{R}^{|\tilde A \times X| \times | A \times X|}, \label{eq:Sigma}
\end{eqnarray}
and, finally, for all $x \in X$,
	\begin{align*}
		\Phi_{x} ~&\equiv~ m^{*}(x) \left[ diag \left\{ \{{1}/{P^{\ast }(a|x)}\}_{a\in \tilde{A}} \right\}  + \mathbf{1}_{|\tilde{A}|\times |\tilde{A}|} /({1-\sum\nolimits_{a \in \tilde{A}} P^{\ast }(a|x) })  \right] \in \mathbb{R}^{|\tilde A| \times | \tilde A |} \\
	 \Sigma _{x}~&\equiv~ \left[ \mathbf{I}_{|\tilde{A} |\times |A|}-\{P^{\ast }(a|x)\}_{a\in \tilde{A}} \times \mathbf{1} _{1\times |A|} \right] / {m^{*}(x)}\in \mathbb{R}^{|\tilde A| \times |  A |} \\
		m^{*}(x) ~&\equiv~ \sum_{(a,x') \in A \times X }{\Pi^{\ast }(a,x,x')} \in \mathbb{R}.
	\end{align*}
\end{theorem}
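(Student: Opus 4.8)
The plan is to treat Theorem \ref{thm:ML_main} as a corollary of the general result in Theorem \ref{thm:ANproof}, so the work reduces to (i) verifying that the pseudo-likelihood criterion $Q_n^{ML}$ together with the preliminary estimators of Assumption \ref{ass:Preliminary} satisfies Assumptions \ref{ass:EEpart1}--\ref{ass:EEpart3}, and (ii) computing the limiting objects $\partial^2 Q_\infty^{ML}/\partial\alpha\partial\alpha'$, $\partial^2 Q_\infty^{ML}/\partial\alpha\partial\theta_f'$, and $\zeta=[\zeta_1',\zeta_2']'$ explicitly. The invariance across $K$ and $\tilde K$ (the first displayed equality) is then immediate, since Theorem \ref{thm:ANproof} already delivers a limit free of $K$. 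First I would identify the limit criterion as $Q_\infty^{ML}(\alpha,\theta_f,P)=\sum_{(a,x)\in A\times X} m^*(x)P^*(a|x)\ln\Psi_{(\alpha,\theta_f)}(P)(a|x)$ and check the standard extremum-estimator conditions (continuity, uniform convergence, unique maximization at $\alpha^*$, nonsingular Hessian) using Assumption \ref{ass:Regularity}: part (a) keeps $\ln\Psi(a|x)$ well-behaved, and the full-rank condition (c), equivalently Rothenberg's rank condition, gives nonsingularity of the Hessian.

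The crux is Assumption \ref{ass:EEpart2}(f), the vanishing of the cross second derivatives $\partial^2 Q_\infty^{ML}/\partial\lambda\partial P'$ at the truth for $\lambda\in\{\alpha,\theta_f\}$. Differentiating twice produces two terms: one proportional to $\partial^2\Psi(a|x)/\partial\lambda\partial P'$ and one proportional to $(\partial\Psi(a|x)/\partial\lambda)(\partial\Psi(a|x)/\partial P')$. At $(\alpha^*,\theta_f^*,P^*)$ we have $\Psi_{(\alpha^*,\theta_f^*)}(P^*)=P^*$, so the weights $\Pi^*(a,x)/\Psi(a|x)$ collapse to $m^*(x)$; the second term is then killed by the zero Jacobian property $\partial\Psi_{\theta^*}(P^*)/\partial P=\mathbf{0}$ of Lemma \ref{lem:PolicyProperties}(c), while the first term is killed by the adding-up identity $\sum_{a\in A}\Psi(a|x)=1$, whose second derivative in $P$ vanishes. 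Hence both contributions are zero and (f) holds. This is the step I expect to be the main obstacle, since it requires careful bookkeeping of the reference category $a=|A|$ inside $\Psi(\cdot|x)$ and is precisely where the distinctive structure of the model (zero Jacobian together with adding-up) does the essential work.

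With (f) in hand I would compute the remaining derivatives, repeatedly using Lemma \ref{lem:AuxResultsOnCCP}(b) to replace $\partial\Psi_{\theta^*}(P^*)/\partial\theta$ by $\partial P_{\theta^*}/\partial\theta$. The usual MLE score/information algebra yields $\partial^2 Q_\infty^{ML}/\partial\alpha\partial\alpha'=-(\partial P_{\theta^*}'/\partial\alpha)\Phi(\partial P_{\theta^*}/\partial\alpha')$ and $\partial^2 Q_\infty^{ML}/\partial\alpha\partial\theta_f'=-(\partial P_{\theta^*}'/\partial\alpha)\Phi(\partial P_{\theta^*}/\partial\theta_f')$, where the block-diagonal $\Phi$ is recognized, state by state, as the multinomial Fisher information $\Phi_x$ with reference category $|A|$, weighted by $m^*(x)$. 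Substituting these into the formula of Theorem \ref{thm:ANproof} makes the leading factor collapse to $\Upsilon_{ML}$ (the minus sign and the inverse Hessian cancel), and the bracket $\zeta_1+(\partial^2 Q_\infty^{ML}/\partial\alpha\partial\theta_f')\zeta_2$ becomes $(\partial P_{\theta^*}'/\partial\alpha)\Phi[\Sigma,-\partial P_{\theta^*}/\partial\theta_f']$ acting on a stacked vector built from $\zeta_1$ and $\zeta_2$.

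Finally I would identify the common driving noise. The linearized score $n^{\min\{1/2,\delta\}}\partial Q_n^{ML}(\alpha^*,\theta_f^*,P^*)/\partial\alpha$ and the first-step error $n^{\min\{1/2,\delta\}}(\hat\theta_{f,n}-\theta_f^*)$ are both smooth functions of $n^{\min\{1/2,\delta\}}(\hat\Pi_n-\Pi^*)$: the score passes through $\Sigma$, which converts the marginal $(a,x)$ perturbation (obtained by summing over $x'$, the repeated identity blocks in the top row of $\Delta$) into the relevant CCP-score direction, while the first step passes through $\partial G_1(\Pi^*)/\partial{\Pi^*}'$ (the bottom row of $\Delta$) by Assumption \ref{ass:Preliminary} and the delta method. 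Thus the whole bracket factors as $\Delta$ times the limit of $n^{\min\{1/2,\delta\}}(\hat\Pi_n-\Pi^*)$. Decomposing $\hat\Pi_n-\Pi^*=(\hat\Pi_n-\Pi_n^*)+(\Pi_n^*-\Pi^*)$, a triangular-array (Lindeberg) CLT applied to the i.i.d.\ multinomial sample of Assumption \ref{ass:iid} contributes the variance $(diag(\Pi^*)-\Pi^*{\Pi^*}')\,1[\delta\ge 1/2]$, while Assumption \ref{ass:LocalMiss}(a) contributes the mean shift $B_{\Pi^*}\,1[\delta\le 1/2]$; the indicators arise because the rate $n^{\min\{1/2,\delta\}}$ matches the slower of the sampling and misspecification rates. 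Composing with the constant matrix $\Upsilon_{ML}\Delta$ via Slutsky and the continuous mapping theorem then delivers the stated normal limit.
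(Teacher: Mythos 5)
Your proposal is correct and, in its overall architecture, matches the paper's proof: both obtain Theorem \ref{thm:ML_main} as a corollary of Theorem \ref{thm:ANproof} by verifying Assumptions \ref{ass:EEpart2}--\ref{ass:EEpart3} for $Q_n^{ML}$ with $Q_\infty^{ML}(\theta,P)=\sum_{(a,x)}J^{*}(a,x)\ln\Psi_\theta(P)(a|x)$; both reduce the Hessian and the $\alpha$--$\theta_f$ cross-derivative to $-(\partial P_{\theta^{*}}'/\partial\alpha)\,\Phi\,(\partial P_{\theta^{*}}/\partial\lambda')$ (the paper isolates this algebra in Lemma \ref{lem:AuxResult2}); and both derive $\zeta$ by expressing the score and $\hat\theta_{f,n}$ as smooth functions of $\hat\Pi_n$ and applying the delta method with a triangular-array CLT plus Assumption \ref{ass:LocalMiss} (the paper's Lemmas \ref{lem:AsyDistBase1}--\ref{lem:AsyDistBase2}). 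The one genuine difference is inside the step you correctly flag as the crux, Assumption \ref{ass:EEpart2}(f): for the term involving $\partial^{2}\Psi_{\theta}(P)(a|x)/\partial\alpha\partial P'$ you collapse the weights $J^{*}(a,x)/\Psi_{\theta^{*}}(P^{*})(a|x)$ to $m^{*}(x)$ and then invoke the adding-up identity $\sum_{a\in A}\Psi_{\theta}(P)(a|x)=1$, whose derivatives vanish identically, whereas the paper invokes Young's theorem and applies the zero Jacobian property a second time to this term. Both arguments kill the product-of-first-derivatives term via Lemma \ref{lem:PolicyProperties}(c), so the conclusions coincide; your variant is arguably cleaner, since it recycles exactly the adding-up trick the paper itself uses to verify condition (e), and avoids the more delicate step of differentiating a quantity that vanishes only along the fixed-point manifold. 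One detail to make explicit in a full write-up: your claim that the scaled score is a linear image of $n^{\min\{1/2,\delta\}}(\hat\Pi_n-\Pi^{*})$ presupposes the centering fact that the population score vanishes at $(\alpha^{*},\theta_f^{*},P^{*})$, i.e., $(\partial P_{\theta^{*}}'/\partial\alpha)\Phi\Sigma J^{*}=\mathbf{0}$, which the paper verifies explicitly in Eq.\ \eqref{eq:QnML2}; your sketch uses it only implicitly.
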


As discussed in Theorem \ref{thm:ANproof}, Theorem \ref{thm:ML_main} shows that changes in $K$ have an asymptotically negligible effect on the $K$-MD estimator. Theorem \ref{thm:ML_main} also shows that the rate of convergence of the estimator is ${n}^{\min\{1/2,\delta\}}$ and that its asymptotic distribution is normal with asymptotic bias and variance given by:
\begin{align*}
AB_{ML} &=\Upsilon _{ML} \times \Delta \times B_{\Pi^{*}} \times  1[\delta \leq 1/2],\\
AV_{ML} &=\Upsilon _{ML}\times  \Delta \times (diag(\Pi^{*}) - \Pi^{*} {\Pi^{*}}' ) \times \Delta' \times \Upsilon _{ML}'\times 1[\delta \geq 1/2].
\end{align*}

In the case of $\delta>1/2$, the local misspecification is irrelevant relative to sampling error and has no effect on the rate of convergence or the asymptotic distribution. A very different situation occurs when $\delta<1/2$. In this case, the local misspecification is overwhelming relative to sampling error and dominates the asymptotic distribution. The rate of convergence of the estimator is ${n}^{\delta}$ and, at this rate, the asymptotic distribution is a pure bias term. Finally, we have a knife-edge case with $\delta =1/2$. The rate of convergence is the usual parametric rate $\sqrt{n}$, but the asymptotic distribution can be biased. In consequence, an adequate characterization of the estimator's precision is the asymptotic mean squared error:
\begin{equation*}
\Upsilon _{ML}\times  \Delta \times (diag(\Pi^{*}) - \Pi^{*} {\Pi^{*}}' + B_{\Pi^{*}} B_{\Pi^{*}}'  ) \times \Delta' \times \Upsilon _{ML}'.
\end{equation*}

The $K$-ML estimator is a ``partial'' ML estimator in the sense that it plugs in the first-step estimator into the second step. In other words, it is not a ``full'' maximum likelihood estimator with respect to the entire parameter vector $\theta= (\alpha,\theta_f)$. Because of this feature, the usual optimality results for maximum likelihood estimation need not apply. In fact, the next subsection will describe a $K$-MD estimator that can be more efficient than the $K$-ML estimator, even in the absence of local misspecification.

\begin{remark}
	Theorem \ref{thm:ML_main} applies to any $K \in \mathbb{N}$ but does not extend to $K \to \infty$. Under some additional conditions, however, \citet[Proposition 3]{aguirregabiria/mira:2002} shows that if the $K$-ML estimator converges as $K \to \infty$, it will do so to a solution of \cite{rust:1987}'s nested fixed point estimator. As noted in \citet[Footnote 16]{aguirregabiria/mira:2002}, this result presumes the convergence of the $K$-ML estimator as $K \to \infty$, which has not been shown in the literature.
\end{remark}

\subsection{$K$-MD estimators}\label{sec:MD}

To specialize the general result to $K$-MD estimators, we set the sample objective function ${Q}_{n}$ to:
\begin{equation}
{Q}_{n}^{MD}(\alpha ,\theta _{f},P)~\equiv ~-[\hat{P}_{n}-\Psi _{(\alpha ,\theta _{f})}(P)]^{\prime }~\hat{W}_{n}~[\hat{P}_{n}-\Psi _{(\alpha ,\theta _{f})}(P)]. \label{eq:MDcriterion}
\end{equation}
where $\hat{P}_n$ is the sample frequency estimator of the vector of CCPs in Eq.\ \eqref{eq:DefnSamples} and $\hat{W}_{n} \in \mathbb{R}^{|\tilde{A} \times X| \times |\tilde{A} \times X|}$ is the weight matrix. In the special case of $K=1$ and $\hat{P}_{n}^{0}=\hat{P}_{n}$, the $K$-MD estimator coincides with the estimators considered in \cite{hotz/miller:1993} and \cite{pesendorfer/schmidt-dengler:2008}.\footnote{To be precise, \citet[Eqs.\ (18)-(19)]{pesendorfer/schmidt-dengler:2008} consider a sample criterion function that allows $\hat{P}_{n}$ in Eq.\ \eqref{eq:MDcriterion} to differ from the sample frequency estimator. We could also incorporate this feature in our setup at the expense of using longer arguments.} We impose the following condition regarding the weight matrix.

\begin{assumptionA} \label{ass:MD_WeightMatrix}
$\hat{W}_{n}=W^{\ast }+o_{p_{n}}(1)$, where $ W^{\ast } \in \mathbb{R}^{|\tilde{A} \times X| \times |\tilde{A} \times X|}$ is positive definite and symmetric.
\end{assumptionA}

In principle, we could generalize Assumption \ref{ass:MD_WeightMatrix} by allowing the weight matrix to be a function of the parameters of the problem. Similar results would then follow from longer arguments.

Theorem \ref{thm:MD_main} is a corollary of Theorem \ref{thm:ANproof} and characterizes the asymptotic distribution of the $K$-MD estimator under local misspecification.
\begin{theorem}[{\bf $K$-MD}] \label{thm:MD_main}
Assume Assumptions \ref{ass:RegularityModel}-\ref{ass:EEpart1} and \ref{ass:iid}-\ref{ass:MD_WeightMatrix}. Then, for any $K,\tilde{K}\geq 1$,
\begin{align*}
&{n}^{\min\{1/2,\delta\}} (\hat{\alpha}_{n}^{K-MD}-\alpha ^{\ast }) ={n}^{\min\{1/2,\delta\}}(\hat{\alpha}_{n}^{ \tilde{K}-MD}-\alpha ^{\ast })+o_{p_{n}}(1) \\
&\overset{d}{\to } \Upsilon _{MD}(W^*) \times  \Delta \times  N\left( ~ B_{\Pi^{*}}\times 1[\delta \leq 1/2]~,~ 
~ (diag(\Pi^{*}) - \Pi^{*} {\Pi^{*}}' )\times 1[\delta \geq 1/2]~
\right)
\end{align*}
where $B_{\Pi^{*}}$ and $\Pi^{*}$ are as in Assumption \ref{ass:LocalMiss}, $\Upsilon _{MD}(W^*)$ is the following matrix:
\begin{eqnarray*}
\Upsilon _{MD}(W^*) ~ &\equiv& ~\left(\frac{\partial P_{\theta ^{\ast }}'}{\partial \alpha} W^* \frac{\partial P_{\theta ^{\ast }} }{\partial \alpha' }\right)^{-1}\frac{\partial P_{\theta ^{\ast }}'}{ \partial \alpha } W^*  \left[
\begin{array}{cc}
\Sigma & -\dfrac{\partial P_{\theta ^{\ast }}}{\partial \theta _{f}'}
\end{array}
\right] \in \mathbb{R}^{d_{\alpha} \times (| A \times X| + d_{\theta_{f}})},
\end{eqnarray*}
with $\Delta$ is as in Eq.\ \eqref{eq:Delta} and $\Sigma$ is as in Eq.\ \eqref{eq:Sigma}.
\end{theorem}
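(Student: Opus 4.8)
The plan is to establish Theorem \ref{thm:MD_main} as a corollary of the general result in Theorem \ref{thm:ANproof}, so the work splits into two tasks: verifying that the minimum-distance criterion $Q_n^{MD}$ satisfies the high-level Assumptions \ref{ass:EEpart2}--\ref{ass:EEpart3}, and then computing the explicit form of the limiting distribution. The natural limiting objective is $Q_\infty^{MD}(\alpha,\theta_f,P) = -[P^*-\Psi_{(\alpha,\theta_f)}(P)]'W^*[P^*-\Psi_{(\alpha,\theta_f)}(P)]$, obtained by replacing $\hat P_n$ by its probability limit $P^*$ and $\hat W_n$ by $W^*$ (Assumption \ref{ass:MD_WeightMatrix}). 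Writing $g(\alpha,\theta_f,P)\equiv P^*-\Psi_{(\alpha,\theta_f)}(P)$, I would note that at the limiting parameter $(\alpha^*,\theta_f^*,P^*)$ the fixed-point identity $\Psi_{\theta^*}(P^*)=P^*$ (Theorem \ref{thm:LocalMisspecification}) forces $g=0$, so $Q_\infty^{MD}$ attains its maximal value of zero there; uniqueness of this maximizer (Assumption \ref{ass:EEpart2}(b)) then follows from the fixed-point identification in Assumption \ref{ass:Identification}(b). The uniform-convergence, continuity, and boundedness requirements in Assumption \ref{ass:EEpart2}(a),(c),(d) are routine given the smoothness of $\Psi$ (Assumption \ref{ass:Regularity}(b)) and compactness of $\Theta_\alpha$.

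The crux is the second-order structure. Differentiating $Q_\infty^{MD}=-g'W^*g$ twice and evaluating at the limit point, every term carrying a factor of $g$ drops out because $g=0$ there, leaving $\partial^2 Q_\infty^{MD}/\partial\alpha\partial\lambda' = -2\,(\partial\Psi_{\theta^*}(P^*)/\partial\alpha)'W^*(\partial\Psi_{\theta^*}(P^*)/\partial\lambda')$ for $\lambda\in\{\alpha,\theta_f,P\}$. Here I would invoke the zero Jacobian property of Lemma \ref{lem:PolicyProperties}: since $\partial\Psi_{\theta^*}(P^*)/\partial P'=0$, the cross-derivative with respect to $P$ vanishes, which is exactly Assumption \ref{ass:EEpart2}(f) --- the condition responsible for the invariance to $K$. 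I would then use Lemma \ref{lem:AuxResultsOnCCP}(b) to replace $\partial\Psi_{\theta^*}(P^*)/\partial\alpha$ by $\partial P_{\theta^*}/\partial\alpha$ (and likewise for $\theta_f$), so that $\partial^2 Q_\infty^{MD}/\partial\alpha\partial\alpha' = -2\,(\partial P_{\theta^*}'/\partial\alpha)\,W^*\,(\partial P_{\theta^*}/\partial\alpha')$, which is negative definite and hence non-singular (Assumption \ref{ass:EEpart2}(e)) by the full-rank condition of Assumption \ref{ass:Regularity}(c) together with positive definiteness of $W^*$.

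For Assumption \ref{ass:EEpart3} I would start from the multinomial CLT for the sample DGP frequencies: under Assumption \ref{ass:iid} and the local drift of Assumption \ref{ass:LocalMiss}(a), $n^{\min\{1/2,\delta\}}(\hat\Pi_n-\Pi^*)\overset{d}{\to} N_\Pi \equiv N(B_{\Pi^*}1[\delta\le 1/2],\,(diag(\Pi^*)-\Pi^*\Pi^{*\prime})1[\delta\ge 1/2])$. The score $\partial Q_n^{MD}(\alpha^*,\theta_f^*,P^*)/\partial\alpha = 2\,(\partial\Psi_{\theta^*}(P^*)/\partial\alpha)'\hat W_n(\hat P_n-P^*)$ and the first-step error $\hat\theta_{f,n}-\theta_f^*$ are both smooth functions of $\hat\Pi_n$ (Eq.\ \eqref{eq:DefnSamples} and Assumption \ref{ass:Preliminary}), so the delta method delivers the joint limit $\zeta=(\zeta_1,\zeta_2)$. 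The Jacobian of the sample-frequency CCP map is precisely $\Sigma$, obtained by differentiating $\hat P_n(a|x)=\sum_{x'}\hat\Pi_n(a,x,x')/\sum_{\check a,\check x'}\hat\Pi_n(\check a,x,\check x')$ (block-diagonal in $x$); the first-step Jacobian is $\partial G_1(\Pi^*)/\partial\Pi^{*\prime}$; and the summation over $x'$ accounts for the identity blocks in $\Delta$.

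Finally I would substitute these ingredients into the conclusion of Theorem \ref{thm:ANproof}. The factor $-(\partial^2 Q_\infty^{MD}/\partial\alpha\partial\alpha')^{-1}=\tfrac12(\partial P_{\theta^*}'/\partial\alpha\,W^*\,\partial P_{\theta^*}/\partial\alpha')^{-1}$ combines with $\zeta_1+(\partial^2 Q_\infty^{MD}/\partial\alpha\partial\theta_f')\zeta_2 = 2\,(\partial P_{\theta^*}'/\partial\alpha)\,W^*[\Sigma,\,-\partial P_{\theta^*}/\partial\theta_f']\,\Delta\,N_\Pi$; the factors of $2$ cancel and the remaining product is exactly $\Upsilon_{MD}(W^*)\times\Delta$ applied to $N_\Pi$, yielding the stated normal distribution, with $K$-invariance inherited directly from the first equality in Theorem \ref{thm:ANproof}. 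I expect the main obstacle to be organizational rather than conceptual: correctly threading the chain of delta-method Jacobians ($\Pi\mapsto$ marginal $\mapsto\hat P_n$, and $\Pi\mapsto\hat\theta_{f,n}$) so that $\Sigma$, $\partial G_1/\partial\Pi^{*\prime}$, and the identity blocks assemble into $[\Sigma,\,-\partial P_{\theta^*}/\partial\theta_f']\,\Delta$, since the genuinely delicate step --- the erasure of the accumulated misspecification across iterations --- is already discharged through the zero-Jacobian verification of Assumption \ref{ass:EEpart2}(f).
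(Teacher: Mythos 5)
Your proposal is correct and follows essentially the same route as the paper's proof: it derives the result as a corollary of Theorem \ref{thm:ANproof} with the same limiting criterion $Q_\infty^{MD}$, verifies Assumption \ref{ass:EEpart2}(b) via the fixed-point identification, obtains Assumption \ref{ass:EEpart2}(f) from the zero-Jacobian property exactly as in the paper, and handles Assumption \ref{ass:EEpart3} by the same delta-method argument applied to the multinomial CLT for $\hat\Pi_n$ (the content of the paper's Lemmas \ref{lem:AsyDistBase1}--\ref{lem:AsyDistBase2}), with the final assembly and cancellation of the factors of $2$ matching the paper's computation.
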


\begin{remark}\label{rem:MLENonoptimal}
The asymptotic distribution of the $K$-ML estimator is a special case of that of the $K$-MD estimator with $W^{\ast }=\Phi$.
\end{remark}

As discussed in previous theorems, Theorem \ref{thm:MD_main} reveals that changes in $K$ have an asymptotically negligible effect on the $K$-MD estimator. This result also shows that the rate of convergence of the estimator is ${n}^{\min\{1/2,\delta\}}$ and that its asymptotic distribution is normal with asymptotic bias and variance that are given by:
\begin{align}
AB_{MD}(W^*) &=\Upsilon _{MD}(W^*) \times \Delta \times B_{\Pi^{*}} \times  1[\delta \leq 1/2], \notag\\
AV_{MD}(W^*) &=\Upsilon _{MD}(W^*)\times  \Delta \times (diag(\Pi^{*}) - \Pi^{*} {\Pi^{*}}' ) \times \Delta' \times \Upsilon _{MD}(W^*)'\times 1[\delta \geq 1/2].\label{eq:BiasVarMD}
\end{align}
In the knife-edge case with $\delta =1/2$, the asymptotic variance and bias can coexist. In consequence, an adequate characterization of the estimator's precision is the asymptotic mean squared error:
\begin{equation}
\Upsilon _{MD}(W^*)\times  \Delta \times (diag(\Pi^{*}) - \Pi^{*} {\Pi^{*}}' + B_{\Pi^{*}} B_{\Pi^{*}}'  ) \times \Delta' \times \Upsilon _{MD}(W^*)'.\label{eq:AMSEdefn}
\end{equation}

We now briefly discuss the optimality in the choice of $W^{\ast }$ in $K$-MD estimation. First, consider the case in which local misspecification is asymptotically irrelevant, i.e., $ \delta >1/2$. In this case, the $K$-MD estimator presents no asymptotic bias and the asymptotic variance and mean squared error coincide. Provided that relevant matrices are non-singular, standard arguments in GMM estimation (e.g.\ \citealp[Page 2165]{mcfadden/newey:1994}) imply that the minimum asymptotic variance and mean squared error among $K$-MD estimators are both equal to:
\begin{equation*}
\left( \frac{\partial P_{\theta ^{\ast }}}{\partial \alpha' }'
\left[  \left[
\begin{array}{cc}
\Sigma & -\dfrac{\partial P_{\theta ^{\ast }}}{\partial \theta _{f}'}
\end{array}
\right] \Delta  (diag(\Pi^{*}) - \Pi^{*} {\Pi^{*}}'  )  \Delta' \left[
\begin{array}{cc}
\Sigma & -\dfrac{\partial P_{\theta ^{\ast }}}{\partial \theta _{f}'}
\end{array}
\right]' \right] ^{-1}
\frac{\partial P_{\theta ^{\ast }}}{\partial \alpha' } 
\right) ^{-1},
\end{equation*}
This minimum can be achieved by the following ``feasible'' choice of limiting weight matrix:
\begin{equation}
W_{AV}^{\ast } ~\equiv~ 
\left[  \left[
\begin{array}{cc}
\Sigma & -\dfrac{\partial P_{\theta ^{\ast }}}{\partial \theta _{f}'}
\end{array}
\right] \Delta  (diag(\Pi^{*}) - \Pi^{*} {\Pi^{*}}'  )  \Delta'  \left[
\begin{array}{cc}
\Sigma & -\dfrac{\partial P_{\theta ^{\ast }}}{\partial \theta _{f}'}
\end{array}
\right]' \right] ^{-1}
.\label{eq:W_AV}
\end{equation}
We say that Eq.\ \eqref{eq:W_AV} is a feasible choice because it can be consistently estimated. As pointed out in Remark \ref{rem:MLENonoptimal}, the $K$-ML estimator has the same asymptotic distribution as the $K$-MD estimator with $W_{ML}^{\ast }=\Phi$. Except under special conditions on the econometric model (e.g.\ $\partial P_{\theta^{\ast }}/\partial \theta _{f}'=\mathbf{0}_{|\tilde{A} \times X|\times d_{\theta_{f}}}$), the $K$-ML estimator is not necessarily optimal among the $K$-MD estimators.

Next, consider the knife-edge case in which local misspecification vanishes at the rate of sampling error, i.e., $\delta =1/2$. Once again, standard arguments in GMM estimation imply that the minimum asymptotic mean squared error among all $K$-MD estimators is:
\begin{equation*}
\left( \frac{\partial P_{\theta ^{\ast }}'}{\partial \alpha }
\left[  \left[
\begin{array}{cc}
\Sigma & -\dfrac{\partial P_{\theta ^{\ast }}}{\partial \theta _{f}'}
\end{array}
\right] \Delta  (diag(\Pi^{*}) - \Pi^{*} {\Pi^{*}}'  + B_{\Pi^{*}} B_{\Pi^{*}}'  )  \Delta'  \left[
\begin{array}{cc}
\Sigma & -\dfrac{\partial P_{\theta ^{\ast }}}{\partial \theta _{f}'}
\end{array}
\right]' \right] ^{-1}
\frac{\partial P_{\theta ^{\ast }}}{\partial \alpha' } 
\right) ^{-1}.
\end{equation*}
According to \citet[Page 2165]{mcfadden/newey:1994}, any limiting weight matrix $W^{*}$ that minimizes the asymptotic mean squared error should satisfy the following condition. For some matrix $C \in \mathbb{R}^{|\tilde{A}\times X| \times |\tilde{A}\times X|}$,
\begin{equation}
\frac{\partial P_{\theta ^{\ast }}}{\partial \alpha' } {W}^{\ast }= C \frac{\partial P_{\theta ^{\ast }}}{\partial \alpha' } \left[
\left[
\begin{array}{cc}
\Sigma & -\dfrac{\partial P_{\theta ^{\ast }}}{\partial \theta _{f}'}
\end{array}
\right] \Delta  (diag(\Pi^{*}) - \Pi^{*} {\Pi^{*}}'  + B_{\Pi^{*}} B_{\Pi^{*}}' )  \Delta'  \left[
\begin{array}{cc}
\Sigma & -\dfrac{\partial P_{\theta ^{\ast }}}{\partial \theta _{f}'}
\end{array}
\right]'
\right]^{-1}.\label{eq:W_AMSE_condition}
\end{equation}
In other words, Eq.\ \eqref{eq:W_AMSE_condition} characterizes the class of optimal limiting weight matrices. A simple example of an optimal limiting weight matrix is:
\begin{equation}
W_{AMSE}^{\ast }\equiv 
\left[  \left[
\begin{array}{cc}
\Sigma & -\dfrac{\partial P_{\theta ^{\ast }}}{\partial \theta _{f}'}
\end{array}
\right] \Delta  (diag(\Pi^{*}) - \Pi^{*} {\Pi^{*}}'  + B_{\Pi^{*}} B_{\Pi^{*}}' )  \Delta'  \left[
\begin{array}{cc}
\Sigma & -\dfrac{\partial P_{\theta ^{\ast }}}{\partial \theta _{f}'}
\end{array}
\right]' \right] ^{-1}.\label{eq:W_AMSE}
\end{equation}
By Eq.\ \eqref{eq:W_AMSE_condition}, the consistent estimation of any optimal limiting weight matrix requires the consistent estimation of the asymptotic bias. Unfortunately, the researcher is not aware of this feature of the population distribution. In this sense, estimating the optimal limiting weight matrix under local misspecification is ``infeasible'' in practice. In particular, note that the ``feasible'' limiting weight matrix $W_{AV}^{\ast }$ that minimizes asymptotic variance could fail to be optimal (i.e.\ Eq.\ \eqref{eq:W_AMSE_condition} is not satisfied).\footnote{This is the case in some of our Monte Carlo simulations, in which using $W_{AV}^{\ast }$ produces an asymptotic mean squared error that is larger than that the one obtained by using $\hat{W}_{n}={\bf I}_{|\tilde{A}\times X| \times |\tilde{A}\times X|}$.}

Finally, we could consider the case in which local misspecification is asymptotically overwhelming, i.e., $\delta <1/2$. In this case, the asymptotic distribution collapses to the pure bias term in Eq.\ \eqref{eq:BiasVarMD}, and the asymptotic mean squared error coincides with the square of the asymptotic bias. As in the case with $\delta =1/2$, minimizing asymptotic mean squared error is infeasible in the sense that it depends on the unknown asymptotic bias. Also, any ``feasible'' choice of limiting weight matrix such as $W_{AV}^{\ast }$ could fail to be optimal.

\section{Monte Carlo simulations}\label{sec:MonteCarlos} 

This section investigates the finite sample performance of the two-step estimators considered in previous sections under local misspecification. 

We simulate data using the classical bus engine replacement problem studied by \cite{rust:1987}. In each period $t =1,\dots,T\equiv \infty$, the bus owner has to decide whether to replace the bus engine or not to minimize the discounted present value of his costs. In any representative period, his choice is denoted by $a \in  A=\{1,2\}$, where $a=2$ represents replacing the engine, $a=1$ represents not replacing the engine, and the current engine mileage is denoted by $x \in X \equiv \{1,\dots,20\}$.

The researcher assumes that all individuals in his sample have the same deterministic part of the utility (profit) function, given by:
\begin{equation}
u_{\theta _{u}}(x,a)~=~-\theta _{u,1}\times  1[ a=2]~-\theta _{u,2}\times  1[ a=1]x,
\label{eq:utilityMC}
\end{equation}
where $\theta_{u} \equiv ( \theta_{u,1},\theta_{u,2}) \in \Theta_{u} \equiv [-B,B]^2$ with $B=10$. In addition, the researcher also assumes that the errors are distributed according to an extreme value type I distribution, independent of $x$, i.e.,
\begin{equation}
g(\epsilon =e|x)=\prod_{a\in A}\exp (e({a})) \exp ( -\exp ( e({a})) ),
\label{eq:errorMC}
\end{equation}
which does not have unknown parameters. Finally, the observed state is assumed to evolve according to the following Markov chain:
\begin{equation}
f_{\theta _{f}}(x'|x,a)~=~(1- \theta _{f}) \times  1[ a=1,x'=\min \{x+1,|X|\} ]~+~\theta _{f} \times  1[ a=1,x'=x] ~+~1[ a=2,x'=1],
\label{eq:transitionMC}
\end{equation}
where $\theta_{f} \in \Theta_{f} \equiv [0,1]$. The researcher correctly assumes that $\beta=0.9999$. His goal is to estimate $\theta =( \alpha ,\theta _{f}) \in \Theta =\Theta _{\alpha }\times \Theta _{f}$ with $\alpha = \theta_{u} \in \Theta _{\alpha } = \Theta_{u}$.

The researcher correctly specified the error distribution and state transition probabilities, which satisfy Eq.\ \eqref{eq:transitionMC} with $\theta_f = 0.25$. Unfortunately, he does not correctly specify the utility function. The correct utility function is as follows:
\begin{equation}
u_{\theta _{u},n}(x,a) ~=~ -\theta_{u,1} \times  1[a=2] ~ -\theta_{u,2} \times  1[a=1]x ~ +\tau_{n} \times  1[a=1]x^2,
\label{eq:utilityTrueMC}
\end{equation}
with $\theta _{u,1} = 1$, $\theta _{u,2} = 0.05$, and $\tau_{n} = -0.025 n^{-\delta}$ with $\delta \in \{1/3,1/2,1\}$. Notice that this form of model misspecification is analogous to the first illustration discussed in Section \ref{sec:Misspecification}.\footnote{Section \ref{sec:Misspecification} provides two other examples of local misspecification. We have also conducted Monte Carlo simulations based on these designs. For the sake of brevity, these are presented in the Supplement.} By the arguments in Section \ref{sec:Setup}, the true CCPs $P^{*}_{n}$ are determined by the true error distribution (Eq.\ \eqref{eq:errorMC}), true state transition probabilities (Eq.\ \eqref{eq:transitionMC} with $\theta _{f}=0.25$), and the true utility function (Eq.\ \eqref{eq:utilityTrueMC}). Our choices of $\delta$ include a case in which the local misspecification is asymptotically irrelevant ($\delta=1$), one case in which local misspecification is the knife-edge case ($\delta=1/2$), and one case in which the local misspecification is overwhelming ($\delta = 1/3$). In addition, we also consider a case in which the econometric model is correctly specified.

Our simulation results will be the average of $S=20,000$ independent datasets of observations $\{(a_{i},x_{i},x_{i}')\}_{i \leq n}$ that are i.i.d.\ distributed according to $\Pi_{n}^{*}$. We present simulation results for sample sizes of $n \in \{200,~500,~1,000\}$.
We generate marginal observations of the state variables according to the following distribution:
\begin{equation*}
	m^{*}_{n}(x) ~\propto~ 1+\log(x).\footnote{Recall from Section \ref{sec:Setup} that this aspect of the model is left unspecified by the researcher.}
\end{equation*}
Together with previous elements, this determines the true joint DGP $\Pi_{n}^{*}$ according to Eq.\ \eqref{eq:JointDistribution}.

Given any sample of observations $\{(a_{i},x_{i},x_{i}')\}_{i \leq n}$, the researcher estimates the parameters of interest $\theta = (\theta_{u,1},\theta_{u,2}, \theta_{f})$ using a two-step $K$-stage PI algorithm described in Sections \ref{sec:Inference}-\ref{sec:Applications}.  In the first step, the researcher estimates $P^*$ and $\theta_{f}^*$ using preliminary estimators $\hat{P}_n^{0} = \hat{P}_n$ and
\begin{equation*}
	\hat{\theta}_{f,n} ~=~\frac{\sum_{i=1}^n1[a_i=1,x_i'=x_i,x_i\ne |X|]}{\sum_{i=1}^n1[a_i=1,x_i\ne |X|]}.
\end{equation*}
In the second step, the researcher estimates $(\theta_{u,1},\theta_{u,2})$ using the $K$-stage policy function iteration algorithm using criterion function $Q_{n}$ equal to (a) pseudo-likelihood function ${Q}_{n}^{ML}$ in Section \ref{sec:ML} and (b) the weighted minimum distance function ${Q}_{n}^{MD}$ in Section \ref{sec:MD} with two limiting weight matrices: identity (i.e.\ $W^{*} = {\bf I}_{|\tilde{A} \times X| \times |\tilde{A} \times X|}$) and asymptotic variance minimizer (i.e.\ $W^{*} = W^{*}_{AV}$\footnote{This matrix is calculated using numerical derivates and Monte Carlo integration with a sample size that is significantly larger than those used in the actual Monte Carlo simulations.}). We show results for number of stages $K \in \{1,2,3,10\}$.\footnote{In accordance to our asymptotic theory, the simulation results with $K \in \{4,\dots,9\}$ are almost identical to those with $K \in \{3,10\}$. These were eliminated from the paper for reasons of brevity and are available from the authors upon request.}

We now describe simulation results for the estimator of $\theta_{u,2}$. We focus on $\theta_{u,2}$ because we consider the linear coefficient of the utility function to be more interesting than the constant coefficient.\footnote{The results for $\theta_{u,1}$ are qualitatively similar and are available from the authors upon request.} Table \ref{tab:miss0} describes results under correct specification. As expected, all estimators appear to converge to a distribution with zero mean and finite variance. Also as expected, the number of iterations $K$ does not seem to affect the bias or the variance of the estimators under consideration. Similar to \cite{aguirregabiria/mira:2002}, we detect small differences between the results with $K=1$ and those with $K>1$, especially for the smallest sample size. This effect tends to vanish as the sample size increases. These findings could be rationalized by the higher-order analysis in \cite{kasahara/shimotsu:2008}. The $K$-ML estimator and the $K$-MD estimator with $W^{*} = W^{*}_{AV}$ are similar and more efficient than the $K$-MD estimator with $W^{*} =  {\bf I}_{|\tilde{A} \times X| \times |\tilde{A} \times X|}$.

Table \ref{tab:miss1} provides results under asymptotically irrelevant local misspecification, i.e., $\delta=1$. According to our theoretical results, the asymptotic behavior of all estimators should be identical to the correctly specified model. This is confirmed by our simulations, as Tables \ref{tab:miss0} and \ref{tab:miss1} are virtually identical.

Table \ref{tab:miss2} provides results under local misspecification that vanishes at the knife-edge rate, i.e., $\delta=1/2$. According to our theoretical results, this should produce an asymptotic distribution that has non-zero bias and is not affected by the number of iterations $K$. By and large, these predictions are confirmed by our simulations. Given the presence of asymptotic bias, we now evaluate efficiency using the mean squared error. In this simulation design, the $K$-MD estimator with $W^{*} =  {\bf I}_{|\tilde{A} \times X| \times |\tilde{A} \times X|}$ is now also slightly more efficient than the $K$-ML estimator or the $K$-MD estimator with $W^{*} = W^{*}_{AV}$. This is also consistent with our theory: while $K$-MD estimator with $W^{*} =  {\bf I}_{|\tilde{A} \times X| \times |\tilde{A} \times X|}$ has more variance that the other two, it also appears to have less bias, resulting in less overall mean squared error.

Table \ref{tab:miss3b} provides results under asymptotically overwhelming local misspecification, i.e., $\delta=1/3$. According to our theoretical results, the presence of this local misspecification changes dramatically the asymptotic distribution of all estimators. In particular, these no longer converge at the regular $\sqrt{n}$-rate, but rather at the $n^{1/3}$-rate. In fact, at the $\sqrt{n}$-rate, the asymptotic bias is no longer bounded (See Table \ref{tab:miss3a} in the Supplement). Once we scale the estimators at the appropriate $n^{1/3}$-rate, they converge to an asymptotic distribution dominated by the bias. Furthermore, our theoretical results indicate that the number of iterations $K$ does not affect this asymptotic distribution. These predictions are clearly depicted in Table \ref{tab:miss3b}. In line with the results in Table \ref{tab:miss2}, the $K$-MD estimator with $W^{*} =  {\bf I}_{|\tilde{A} \times X| \times |\tilde{A} \times X|}$ is slightly more efficient than the $K$-ML estimator and the $K$-MD estimator with $W^{*} = W^{*}_{AV}$.

\begin{table}[H]
\begin{center}
\scalebox{0.85}
{
\begin{tabular}{cc|ccc|ccc|ccc}\hline
\hline
\multicolumn{1}{c}{\multirow{2}[4]{*}{$K$}} & \multicolumn{1}{c|}{\multirow{2}[4]{*}{Statistic}} & \multicolumn{3}{c|}{$K$-MD(${\bf I}_{|\tilde{A} \times X| \times |\tilde{A} \times X|}$)} & \multicolumn{3}{c|}{$K$-MD($W_{AV}^{*}$)} & \multicolumn{3}{c}{$K$-ML}\\
\multicolumn{1}{c}{} & \multicolumn{1}{c|}{} & 
\multicolumn{1}{c}{$n=200$} & \multicolumn{1}{c}{$n=500$} & \multicolumn{1}{c|}{$n=1,000$} & 
\multicolumn{1}{c}{$n=200$} & \multicolumn{1}{c}{$n=500$} & \multicolumn{1}{c|}{$n=1,000$} & 
\multicolumn{1}{c}{$n=200$} & \multicolumn{1}{c}{$n=500$} & \multicolumn{1}{c}{$n=1,000$} 
\\
\hline
  & $\sqrt{n}~$Bias &
   0.07 & 0.02 & 0.01 & 0.06 & 0.02 & 0.01 & 0.06 & 0.02 & 0.01 \\
$1$ & $\sqrt{n}~$SD &
   0.25 & 0.25 & 0.24 & 0.23 & 0.23 & 0.22 & 0.22 & 0.22 & 0.22 \\
  & $n~$MSE & 
   0.07 & 0.06 & 0.06 & 0.06 & 0.05 & 0.05 & 0.05 & 0.05 & 0.05 \\
   \hline
  & $\sqrt{n}~$Bias & 
   0.00 & 0.01 & 0.00 & 0.00 & 0.00 & 0.00 & 0.01 & 0.00 & 0.00 \\
$2$ & $\sqrt{n}~$SD &
   0.24 & 0.25 & 0.24 & 0.23 & 0.23 & 0.22 & 0.22 & 0.22 & 0.22 \\
  & $n~$MSE & 
   0.06 & 0.06 & 0.06 & 0.05 & 0.05 & 0.05 & 0.05 & 0.05 & 0.05 \\
  \hline
  & $\sqrt{n}~$Bias & 
   0.00 & 0.00 & 0.00 & 0.00 & 0.00 & 0.00 & 0.00 & 0.00 & 0.00 \\
$3$ & $\sqrt{n}~$SD & 
   0.24 & 0.25 & 0.24 & 0.23 & 0.23 & 0.22 & 0.22 & 0.22 & 0.22 \\
  & $n~$MSE & 
   0.06 & 0.06 & 0.06 & 0.05 & 0.05 & 0.05 & 0.05 & 0.05 & 0.05 \\
  \hline
  & $\sqrt{n}~$Bias & 
   0.00 & 0.00 & 0.00 & 0.00 & 0.00 & 0.00 & 0.00 & 0.00 & 0.00 \\
$10$ & $\sqrt{n}~$SD & 
   0.25 & 0.25 & 0.24 & 0.23 & 0.23 & 0.22 & 0.22 & 0.22 & 0.22 \\
  & $n~$MSE & 
   0.06 & 0.06 & 0.06 & 0.05 & 0.05 & 0.05 & 0.05 & 0.05 & 0.05 \\
\hline
\hline
\end{tabular}
	}\end{center}
	\caption{Simulation results under correct specification, i.e., $\tau_{n} = 0 $.}
	\label{tab:miss0}
\end{table}

\begin{table}[H]
\begin{center}
\scalebox{0.85}
{
\begin{tabular}{cc|ccc|ccc|ccc}\hline
\hline
\multicolumn{1}{c}{\multirow{2}[4]{*}{$K$}} & \multicolumn{1}{c|}{\multirow{2}[4]{*}{Statistic}} & \multicolumn{3}{c|}{$K$-MD(${\bf I}_{|\tilde{A} \times X| \times |\tilde{A} \times X|}$)} & \multicolumn{3}{c|}{$K$-MD($W_{AV}^{*}$)} & \multicolumn{3}{c}{$K$-ML}\\
\multicolumn{1}{c}{} & \multicolumn{1}{c|}{} & 
\multicolumn{1}{c}{$n=200$} & \multicolumn{1}{c}{$n=500$} & \multicolumn{1}{c|}{$n=1,000$} & 
\multicolumn{1}{c}{$n=200$} & \multicolumn{1}{c}{$n=500$} & \multicolumn{1}{c|}{$n=1,000$} & 
\multicolumn{1}{c}{$n=200$} & \multicolumn{1}{c}{$n=500$} & \multicolumn{1}{c}{$n=1,000$} 
\\
\hline
  & $\sqrt{n}~$Bias &0.11 & 0.04 & 0.03 & 0.10 & 0.04 & 0.03 & 0.09 & 0.04 & 0.03 \\
$1$ & $\sqrt{n}~$SD & 
0.25 & 0.25 & 0.24 & 0.24 & 0.23 & 0.22 & 0.23 & 0.23 & 0.22 \\
  & $n~$MSE & 
0.07 & 0.06 & 0.06 & 0.07 & 0.06 & 0.05 & 0.06 & 0.05 & 0.05 \\
   \hline
  & $\sqrt{n}~$Bias & 
0.04 & 0.03 & 0.02 & 0.04 & 0.03 & 0.02 & 0.04 & 0.03 & 0.02 \\
$2$ & $\sqrt{n}~$SD & 
0.25 & 0.25 & 0.24 & 0.23 & 0.23 & 0.22 & 0.22 & 0.22 & 0.22 \\
  & $n~$MSE & 
0.06 & 0.06 & 0.06 & 0.05 & 0.05 & 0.05 & 0.05 & 0.05 & 0.05 \\
  \hline
  & $\sqrt{n}~$Bias & 
0.04 & 0.03 & 0.02 & 0.04 & 0.03 & 0.02 & 0.04 & 0.03 & 0.02 \\
$3$ & $\sqrt{n}~$SD & 
0.25 & 0.25 & 0.24 & 0.23 & 0.23 & 0.22 & 0.22 & 0.22 & 0.22 \\
  & $n~$MSE & 
0.06 & 0.06 & 0.06 & 0.05 & 0.05 & 0.05 & 0.05 & 0.05 & 0.05 \\
  \hline
  & $\sqrt{n}~$Bias & 
0.04 & 0.03 & 0.02 & 0.04 & 0.03 & 0.02 & 0.04 & 0.03 & 0.02 \\
$10$ & $\sqrt{n}~$SD & 
0.25 & 0.25 & 0.24 & 0.23 & 0.23 & 0.22 & 0.22 & 0.22 & 0.22 \\
  & $n~$MSE & 
0.06 & 0.06 & 0.06 & 0.05 & 0.05 & 0.05 & 0.05 & 0.05 & 0.05 \\
\hline
\hline
\end{tabular}
	}\end{center}
	\caption{Simulation results under local misspecification with $\tau_{n} \propto n^{-1}$.}
	\label{tab:miss1}
\end{table}

\begin{table}[H]
\begin{center}
\scalebox{0.85}
{
\begin{tabular}{cc|ccc|ccc|ccc}\hline
\hline
\multicolumn{1}{c}{\multirow{2}[4]{*}{$K$}} & \multicolumn{1}{c|}{\multirow{2}[4]{*}{Statistic}} & \multicolumn{3}{c|}{$K$-MD(${\bf I}_{|\tilde{A} \times X| \times |\tilde{A} \times X|}$)} & \multicolumn{3}{c|}{$K$-MD($W_{AV}^{*}$)} & \multicolumn{3}{c}{$K$-ML}\\
\multicolumn{1}{c}{} & \multicolumn{1}{c|}{} & 
\multicolumn{1}{c}{$n=200$} & \multicolumn{1}{c}{$n=500$} & \multicolumn{1}{c|}{$n=1,000$} & 
\multicolumn{1}{c}{$n=200$} & \multicolumn{1}{c}{$n=500$} & \multicolumn{1}{c|}{$n=1,000$} & 
\multicolumn{1}{c}{$n=200$} & \multicolumn{1}{c}{$n=500$} & \multicolumn{1}{c}{$n=1,000$} 
\\
\hline
  & $\sqrt{n}~$Bias & 0.50 & 0.46 & 0.46 & 0.51 & 0.49 & 0.50 & 0.52 & 0.50 & 0.50 \\
$1$ & $\sqrt{n}~$SD &
0.28 & 0.28 & 0.27 & 0.26 & 0.26 & 0.24 & 0.25 & 0.25 & 0.24 \\
  & $n~$MSE & 
0.33 & 0.29 & 0.28 & 0.33 & 0.31 & 0.30 & 0.33 & 0.31 & 0.31 \\
  \hline
  & $\sqrt{n}~$Bias & 
0.42 & 0.44 & 0.45 & 0.46 & 0.48 & 0.49 & 0.47 & 0.49 & 0.49 \\
$2$ & $\sqrt{n}~$SD &
0.29 & 0.28 & 0.26 & 0.26 & 0.25 & 0.24 & 0.24 & 0.24 & 0.24 \\
  & $n~$MSE & 
0.26 & 0.27 & 0.28 & 0.27 & 0.29 & 0.30 & 0.28 & 0.30 & 0.30 \\
  \hline
  & $\sqrt{n}~$Bias & 
0.42 & 0.44 & 0.45 & 0.46 & 0.48 & 0.49 & 0.47 & 0.49 & 0.49 \\
$3$ & $\sqrt{n}~$SD & 
0.28 & 0.28 & 0.26 & 0.26 & 0.25 & 0.24 & 0.24 & 0.24 & 0.24 \\
  & $n~$MSE & 
0.26 & 0.27 & 0.28 & 0.27 & 0.29 & 0.30 & 0.28 & 0.30 & 0.30 \\
  \hline
  & $\sqrt{n}~$Bias & 
0.42 & 0.44 & 0.45 & 0.46 & 0.48 & 0.49 & 0.47 & 0.49 & 0.49 \\
$10$ & $\sqrt{n}~$SD & 
0.29 & 0.28 & 0.26 & 0.26 & 0.25 & 0.24 & 0.24 & 0.24 & 0.24 \\
  & $n~$MSE &
0.26 & 0.27 & 0.28 & 0.27 & 0.29 & 0.30 & 0.28 & 0.30 & 0.30 \\
\hline
\hline
\end{tabular}
	}\end{center}
	\caption{Simulation results under local misspecification with $\tau_{n} \propto n^{-1/2}$.}
	\label{tab:miss2}
\end{table}

\begin{table}[H]
\begin{center}
\scalebox{0.85}
{
\begin{tabular}{cc|ccc|ccc|ccc}\hline
\hline
\multicolumn{1}{c}{\multirow{2}[4]{*}{$K$}} & \multicolumn{1}{c|}{\multirow{2}[4]{*}{Statistic}} & \multicolumn{3}{c|}{$K$-MD(${\bf I}_{|\tilde{A} \times X| \times |\tilde{A} \times X|}$)} & \multicolumn{3}{c|}{$K$-MD($W_{AV}^{*}$)} & \multicolumn{3}{c}{$K$-ML}\\
\multicolumn{1}{c}{} & \multicolumn{1}{c|}{} & 
\multicolumn{1}{c}{$n=200$} & \multicolumn{1}{c}{$n=500$} & \multicolumn{1}{c|}{$n=1,000$} & 
\multicolumn{1}{c}{$n=200$} & \multicolumn{1}{c}{$n=500$} & \multicolumn{1}{c|}{$n=1,000$} & 
\multicolumn{1}{c}{$n=200$} & \multicolumn{1}{c}{$n=500$} & \multicolumn{1}{c}{$n=1,000$} 
\\
\hline
  & $n^{1/3}~$Bias & 0.41 & 0.40 & 0.41 & 0.43 & 0.44 & 0.45 & 0.45 & 0.46 & 0.46 \\
$1$ & $n^{1/3}~$SD &
0.14 & 0.12 & 0.10 & 0.13 & 0.10 & 0.09 & 0.12 & 0.10 & 0.09 \\
  & $n^{2/3}~$MSE & 
0.18 & 0.18 & 0.18 & 0.20 & 0.20 & 0.21 & 0.22 & 0.22 & 0.22 \\
\hline
  & $n^{1/3}~$Bias & 
0.37 & 0.40 & 0.41 & 0.40 & 0.43 & 0.45 & 0.43 & 0.45 & 0.46 \\
$2$ & $n^{1/3}~$SD &
0.14 & 0.12 & 0.10 & 0.12 & 0.10 & 0.09 & 0.11 & 0.10 & 0.09 \\
  & $n^{2/3}~$MSE & 
0.16 & 0.17 & 0.18 & 0.18 & 0.20 & 0.21 & 0.20 & 0.21 & 0.22 \\
\hline
  & $n^{1/3}~$Bias & 
0.37 & 0.40 & 0.41 & 0.40 & 0.43 & 0.45 & 0.43 & 0.45 & 0.46 \\
$3$ & $n^{1/3}~$SD & 
0.14 & 0.12 & 0.10 & 0.12 & 0.10 & 0.09 & 0.11 & 0.10 & 0.09 \\
  & $n^{2/3}~$MSE &
0.16 & 0.17 & 0.18 & 0.18 & 0.20 & 0.21 & 0.20 & 0.21 & 0.22 \\
\hline
  & $n^{1/3}~$Bias & 
0.37 & 0.40 & 0.41 & 0.40 & 0.43 & 0.45 & 0.43 & 0.45 & 0.46 \\
$10$ & $n^{1/3}~$SD & 
0.14 & 0.12 & 0.10 & 0.12 & 0.10 & 0.09 & 0.11 & 0.10 & 0.09 \\
  & $n^{2/3}~$MSE & 
0.16 & 0.17 & 0.18 & 0.18 & 0.20 & 0.21 & 0.20 & 0.21 & 0.22 \\
\hline
\hline
\end{tabular}
	}\end{center}
	\caption{Simulation results under local misspecification with $\tau_{n} \propto n^{-1/3}$ and using the correct scaling.}
	\label{tab:miss3b}
\end{table}

\section{Conclusion}\label{sec:Conclusion} 

Single-agent dynamic discrete choice models are typically estimated using heavily parametrized econometric frameworks, making them susceptible to model misspecification. This paper investigates how misspecification can affect inference results in these models. This paper considers a {\it local misspecification} framework, which is an asymptotic device in which the mistake in the specification vanishes as the sample size diverges. In this paper, we impose no restrictions on the rate at which these specification errors disappear. Relative to global misspecification, the local misspecification analysis has two important advantages. First, it yields tractable and general results. Second, it allows us to focus on parameters with structural interpretation, instead of ``pseudo-true'' parameters.

We consider a general class of two-step estimators based on the $K$-stage sequential policy function iteration algorithm, where $K$ denotes the number of iterations employed in the estimation. By appropriate choice of the criterion function, this class includes \cite{hotz/miller:1993}'s conditional choice probability estimator, \cite{aguirregabiria/mira:2002}'s pseudo-likelihood estimator, and \cite{pesendorfer/schmidt-dengler:2008}'s asymptotic least squares estimator.

We show that local misspecification can affect the asymptotic distribution and even the rate of convergence of these estimators. In principle, one might expect that the effect of the local misspecification could change with the number of iterations $K$. The main finding in the paper is that this is not the case, i.e., the effect of local misspecification is invariant to $K$. In particular, (a) $K$-ML estimators are asymptotically equivalent and (b) given the choice of the weight matrix, $K$-MD estimators are asymptotically equivalent. In practice, this means that researchers cannot eliminate or even alleviate problems of model misspecification by changing $K$. Additional iterations are computationally costly and produce {\it no change} in asymptotic efficiency.

Under correct specification, the comparison between $K$-MD and $K$-ML estimators in terms of asymptotic mean squared error yields a clear-cut recommendation. Under local misspecification, this is no longer the case. In particular, local misspecification can introduce an unknown asymptotic bias, which complicates this comparison. In the presence of asymptotic bias, the optimality of the estimator should be evaluated using the asymptotic mean squared error. We show that an optimally-weighted $K$-MD estimator depends on the unknown asymptotic bias and is thus generally unfeasible. In turn, the feasible $K$-MD estimator with a weight matrix that minimizes asymptotic variance could have an asymptotic mean squared error that is higher or lower than that of the $K$-ML estimator or the $K$-MD estimator with identity weight matrix.

\appendix

{
 \small
\section{Appendix}


\subsection{Additional notation}

Throughout this appendix, ``s.t.'' abbreviates ``such that'', and ``RHS'' and `` LHS'' abbreviate ``right-hand side'' and ``left-hand side'', respectively. Furthermore, ``LLN'' refers to the strong law of large numbers, ``CLT'' refers to the central limit theorem, and ``CMT'' refers to the continuous mapping theorem.

Given the true DGP $\Pi_{n}^{*}$, Eq.\ \eqref{eq:DefnPopulations} defined transition probabilities $f^{*}_{n}$, CCPs $P^{*}_{n}$, and marginal distribution of states $m^{*}_{n}$. The unconditional probability of $(a,x) \in A \times X$ is analogously defined by:
	\begin{eqnarray*}
		{J}_{n}^{*}(a,x) ~\equiv~ 
			{\sum_{\tilde{x}^{\prime }\in X} \Pi_{n}^{\ast }( a,x,\tilde{x} ^{\prime })},
	\end{eqnarray*}
and $J_n^{*} \equiv \{{J}_{n}^{*}(a,x)\}_{(a,x)\in A \times X}$. 
The limiting DGP $f^{*}$, transition probabilities $f^{*}$, CCPs $P^{*}$ were defined in Assumption \ref{ass:LocalMiss}. The other limiting objects are analogously defined by $J^{*} \equiv \lim_{n\to \infty} J^{*}_{n}$ and $m^{*} \equiv \lim_{n\to \infty} m^{*}_{n}$.
	
The sample analogue DGP $\hat{\Pi}_{n}$, transition probabilities $\hat{f}_{n}$, CCPs $\hat{P}_{n}$ were defined in Eq.\ \eqref{eq:DefnSamples}. The sample analogue marginal distribution of states $\hat{m}_{n}$ and unconditional probabilities $\hat{J}_{n}$ are analogously defined. For any $(a,x) \in A \times X$, 
\begin{align*}
	\hat{m}_{n}(x) ~&\equiv~ 
		{\sum_{(a,\tilde{x}^{\prime })\in A\times X} \hat{\Pi}_{n}( a,x,\tilde{x} ^{\prime })}\\
	\hat{J}_{n}(a,x) ~&\equiv~ 
		{\sum_{\tilde{x}^{\prime }\in X} \hat{\Pi}_{n}( a,x,\tilde{x} ^{\prime })},
\end{align*}
$\hat{m}_{n} \equiv \{\hat{m}_{n}(x)\}_{x \in X}$, and $\hat{J}_{n} \equiv \{\hat{J}_{n}(a,x)\}_{(a,x)\in A \times X}$.


\subsection{Proofs of theorems}


\begin{proof}[Proof of Theorem \ref{thm:LocalMisspecification}]
Since $\Theta =\Theta _{\alpha }\times \Theta _{f}$ is compact and $||(P_{(\alpha ,\theta _{f})}-P_{n}^{\ast })',(f_{\theta_{f}}-f_{n}^{\ast })'||$ is a continuous function of $(\alpha ,\theta _{f})$, the arguments in \citet[pages 193-195]{royden:1988} implies that $\exists (\alpha ^{\ast },\theta _{f}^{\ast })\in \Theta $ that minimizes $||(P_{(\alpha ,\theta _{f})}-P_{n}^{\ast })',(f_{\theta_{f}}-f_{n}^{\ast })'||$. By Assumption \ref{ass:LocalMiss}(b), this minimum value is zero, i.e., $\exists (\alpha ^{\ast },\theta _{f}^{\ast })\in \Theta $ s.t. $||(P_{(\alpha ^{\ast },\theta _{f}^{\ast })}-P^{\ast })',(f_{\theta _{f}^{\ast }}-f^{\ast })'||=0$ or, equivalently, $P_{(\alpha ^{\ast },\theta _{f}^{\ast })}=P^{\ast }$ and $f_{\theta _{f}^{\ast }}=f^{\ast }$. 

Now suppose that this also occurs for $( \tilde{\theta}_{f},\tilde{\alpha})\in \Theta $. We now show that $(\theta _{f}^{\ast },\alpha ^{\ast })=(\tilde{ \theta}_{f},\tilde{\alpha})$. By triangle inequality $||f_{\theta _{f}^{\ast }}-f_{\tilde{\theta}_{f}}||\leq ||f_{\theta _{f}^{\ast }}-f^{\ast }||+||f_{ \tilde{\theta}_{f}}-f^{\ast }\Vert $ and since $\theta _{f}^{\ast }$ and $ \tilde{\theta}_{f}$ both satisfy $||f_{\theta _{f}}-f^{\ast }||=0$, we conclude that $||f_{\theta _{f}^{\ast }}-f_{\tilde{\theta}_{f}}||=0$ and so $ f_{\theta _{f}^{\ast }}=f_{\tilde{\theta}_{f}}$. By Assumption \ref{ass:Identification}, this implies that $\theta _{f}^{\ast }=\tilde{\theta} _{f}$. By repeating the previous argument with $P_{(\alpha ,\theta _{f}^{\ast })}$ instead of $f_{\theta _{f}}$, we conclude that $\alpha ^{\ast }= \tilde{\alpha}$.
\end{proof}


\begin{proof}[Proof of Theorem \ref{thm:ANproof}]
	Without loss of generality, we can consider the neighborhood $\mathcal{N}$ to be ``rectangular'', in the sense that $\mathcal{N} = \mathcal{N}_{\alpha} \times \mathcal{N}_{\theta_{f}} \times \mathcal{N}_{P}$, where $\mathcal{N}_{\lambda}$ denotes the neighborhood of $\lambda^{*}$ for any $\lambda \in \{\alpha,\theta_f,P\}$. This can always be achieved by replacing $\mathcal{N}$ with $\tilde{\mathcal{N}}\subseteq \mathcal{N}$ that has the desired structure. This proof will make repeated reference to $\mathcal{N}_{\alpha }$.

\underline{Part 1.} Fix $K\geq 1$ arbitrarily. We prove the result by assuming that:
\begin{equation}
n^{\min \{\delta ,1/2\}}(\hat{P}_{n}^{K-1}-P^{\ast })=O_{p_{n}}(1). \label{eq:Distribution}
\end{equation}

By definition, $\hat{\alpha}_{n}^{K} = \arg \max_{\alpha \in \Theta _{\alpha }}Q_{n}(\alpha )$ with $Q_{n}(\alpha ) \equiv {Q}_{n}(\alpha ,\hat{\theta} _{f,n},\hat{P}_{n}^{K-1})$ and $\hat{P}_{n}^{K-1}\equiv\Psi_{(\hat{\alpha} _{n}^{K-1},\hat{\theta}_{f,n})}(\hat{P}_{n}^{K-2})$ for $K>1$ and $\hat{P}_{n}^{K-1}\equiv\hat{P}_{n}^{0}$ for $K=1$. The result then follows from Theorem \ref{thm:ANforEE}. To apply this result, we first check its conditions.

Condition (a). Under Assumptions \ref{ass:EEpart2}(a)-(c) and \ref{ass:EEpart3}, Theorem \ref{thm:consistencyEE} implies that $\hat{ \alpha}_{n}^{K}=\alpha ^{\ast }+o_{p_{n}}(1)$.

Condition (b). This is imposed in Assumption \ref{ass:EEpart1}.

Assumption \ref{ass:EEpart3}(a) and Eq.\ \eqref{eq:Distribution} imply that $ (\alpha ^{\ast },\hat{\theta}_{f,n},\hat{P}^{K-1}_{n})\in \mathcal{N}$ w.p.a.1. In turn, this and $\hat{ \alpha}_{n}^{K}=\alpha ^{\ast }+o_{p_{n}}(1)$ imply that $(\hat{ \alpha}_{n}^{K},\hat{\theta}_{f,n},\hat{P}^{K-1}_{n})\in \mathcal{N}$ w.p.a.1. These results will be used repeatedly throughout the rest of this proof.

Condition (c). This follows from Assumption \ref{ass:EEpart2}(c) and $(\hat{ \alpha}_{n}^{K},\hat{\theta}_{f,n},\hat{P}^{K-1}_{n})\in \mathcal{N}$ w.p.a.1.

Condition (d). Assumptions \ref{ass:EEpart2}(d)-(f), \ref{ass:EEpart3}(b), and $(\hat{ \alpha}_{n}^{K},\hat{\theta}_{f,n},\hat{P}^{K-1}_{n})\in \mathcal{N}$ w.p.a.1 imply that the following derivation holds w.p.a.1.
\begin{align*}
 n^{\min \{\delta ,1/2\}}\frac{\partial Q_{n}(\alpha ^{\ast })}{\partial \alpha } &=n^{\min \{\delta ,1/2\}}\frac{\partial {Q}_{n}(\alpha ^{\ast }, \hat{\theta}_{f,n},\hat{P}_{n}^{K-1})}{\partial \alpha } \\
&=n^{\min \{\delta ,1/2\}}\frac{\partial {Q}_{n}(\alpha ^{\ast },\theta _{f}^{\ast },P^{\ast })}{\partial \alpha }+\frac{\partial^2 {Q}_{n}(\alpha ^{\ast },\tilde{\theta}_{f,n},\tilde{P}_{n})}{\partial \alpha \partial \theta _{f}^{\prime }}n^{\min \{\delta ,1/2\}}(\hat{\theta}_{f,n}-\theta _{f}^{\ast })\\
&+\frac{\partial^2 {Q}_{n}(\alpha ^{\ast },\tilde{\theta}_{f,n}, \tilde{P}_{n})}{\partial \alpha \partial P^{\prime }}n^{\min \{\delta ,1/2\}}(\hat{P}_{n}^{K-1}-P^{\ast }) \\
&=n^{\min \{\delta ,1/2\}}\left[\frac{\partial {Q}_{n}(\alpha ^{\ast },\theta _{f}^{\ast },P^{\ast })}{\partial \alpha }+\frac{\partial^2 {Q}_{\infty }(\alpha ^{\ast },\theta _{f}^{\ast },P^{\ast })}{\partial \alpha \partial \theta _{f}'}(\hat{\theta}_{f,n}-\theta _{f}^{\ast })\right]+o_{p_{n}}(1),
\end{align*}
where $(\tilde{\theta}_{f,n},\tilde{P}_{n})$ is some sequence between $(\hat{ \theta}_{f,n},\hat{P}_{n}^{K-1})$ and $(\theta _{f}^{\ast },P^{\ast })$. From this and Assumption \ref{ass:EEpart3}(a),
\begin{equation*}
n^{\min \{\delta ,1/2\}}\frac{\partial Q_{n}(\alpha ^{\ast })}{\partial \alpha }~\overset{d}{\to }~\zeta _{1}+\frac{\partial^2 {Q}_{\infty }(\alpha ^{\ast },\theta _{f}^{\ast },P^{\ast })}{\partial \alpha \partial \theta _{f}^{\prime }}\zeta _{2}.
\end{equation*}
If we denote the RHS random variable by $Z$, condition (d) follows.

Condition (e)-(f). Consider any arbitrary $\alpha \in \mathcal{N}_{\alpha }$ and so $(\alpha,\hat{\theta}_{f,n},\hat{P}^{K-1}_{n}) \in \mathcal{N}$ w.p.a.1. This and Assumptions \ref{ass:EEpart2}(c)-(e) imply that the following derivation holds w.p.a.1.
\begin{equation*}
\frac{\partial^2 Q_{n}(\alpha )}{\partial \alpha \partial \alpha ^{\prime }}= \frac{\partial^2 {Q}_{n}(\alpha ,\hat{\theta}_{f,n},\hat{P}_{n}^{K-1})}{ \partial \alpha \partial \alpha ^{\prime }}=\frac{\partial^2 {Q}_{\infty }(\alpha ,\hat{\theta}_{f_{n}},\hat{P}_{n}^{K-1})}{\partial \alpha \partial \alpha ^{\prime }}+o_{p_{n}}(1)=\frac{\partial^2 {Q}_{\infty }(\alpha ,\theta _{f}^{\ast },P^{\ast })}{\partial \alpha \partial \alpha ^{\prime }} +o_{p_{n}}(1),
\end{equation*}
where convergence is uniform in $\alpha \in \mathcal{N}_{\alpha }$, i.e., condition (e) follows. In addition, if we denote the first term on the RHS by $H(\alpha)$, condition (f) follows.

Under these conditions, Theorem \ref{thm:ANforEE} then implies that:
\begin{equation}
n^{\min \{\delta ,1/2\}}(\hat{\alpha}_{n}^{K}-\alpha ^{\ast })~=~A_{1}n^{\min \{\delta ,1/2\}}\frac{\partial {Q}_{n}(\alpha ^{\ast },\theta _{f}^{\ast },P^{\ast })}{\partial \alpha}+A_{2}n^{\min \{\delta ,1/2\}}(\hat{ \theta}_{f,n}-\theta _{f}^{\ast })+o_{p_{n}}(1), \label{eq:FormulaAlpha}
\end{equation}
with
\begin{align*}
A_{1}~&\equiv~ -\left(\frac{\partial^2 {Q}_{\infty }(\alpha ,\theta _{f}^{\ast },P^{\ast })}{\partial \alpha \partial \alpha ^{\prime }}\right)^{-1}\\
A_{2}~&\equiv~ -\left(\frac{\partial^2 {Q}_{\infty }(\alpha ,\theta _{f}^{\ast },P^{\ast })}{\partial \alpha \partial \alpha ^{\prime }}\right)^{-1}\frac{ \partial^2 {Q}_{\infty }(\alpha ^{\ast },\theta _{f}^{\ast },P^{\ast })}{ \partial \alpha \partial \theta _{f}^{\prime }}.
\end{align*}

\underline{Part 2.} The objective of this part is to show Eq.\ \eqref{eq:Distribution} holds for all $K\geq 1$. We show this by induction.

Initial step. For $K=1$, the result holds by Assumption \ref{ass:EEpart3}(b). Also, part 1 implies that Eq.\ \eqref{eq:FormulaAlpha} holds for $K=1$.

Inductive step. The inductive assumption is that Eqs.\ \eqref{eq:Distribution}-\eqref{eq:FormulaAlpha} hold for some $K\geq 1$. Our goal is then to show that Eqs.\ \eqref{eq:Distribution}-\eqref{eq:FormulaAlpha} hold with $K$ replaced by $K+1$. By inductive assumption, $(\hat{\alpha}_{n}^{K},\hat{\theta}_{f,n},\hat{P}_{n}^{K-1})=(\alpha ^{\ast },\theta _{f}^{\ast },P^{\ast })+o_{p_{n}}(1)$ and so $(\hat{\alpha}_{n}^{K}, \hat{\theta}_{f,n},\hat{P}_{n}^{K-1})\in \mathcal{N}$ w.p.a.1. Then, the following derivation holds:
\begin{align*}
n^{\min \{\delta ,1/2\}}(\hat{P}_{n}^{K}-P^{\ast })  &=n^{\min \{\delta ,1/2\}}(\Psi_{(\hat{\alpha}_{n}^{K},\hat{\theta} _{f,n})}(\hat{P}_{n}^{K-1})-\Psi_{(\alpha ^{\ast },\theta _{f}^{\ast })}(P^{\ast })) \\
&=\frac{\partial \Psi_{(\hat{\alpha}_{n}^{K},\hat{\theta}_{f,n})}(\hat{P} _{n}^{K-1})}{\partial \alpha ^{\prime }}n^{\min \{\delta ,1/2\}}(\hat{\alpha} _{n}^{K}-\alpha  ^{\ast }) +\frac{\partial \Psi_{(\tilde{\alpha}_{n}^{K},\tilde{\theta} _{f,n})}(\tilde{P}_{n}^{K-1})}{\partial \theta _{f}^{\prime }}n^{\min \{\delta ,1/2\}}(\hat{\theta}_{f,n}-\theta ^{\ast })\\
&+\frac{\partial \Psi_{(\tilde{ \alpha}_{n}^{K},\tilde{\theta}_{f,n})}(\tilde{P}_{n}^{K-1})}{\partial P^{\prime }}n^{\min \{\delta ,1/2\}}(\hat{P}_{n}^{K-1}-P^{\ast })\\
&=\frac{\partial \Psi_{(\alpha ^{\ast },\theta _{f}^{\ast })}(P^{\ast })}{\partial \alpha ^{\prime }} n^{\min \{\delta ,1/2\}}(\hat{\alpha}_{n}^{K}-\alpha ) +\frac{ \partial \Psi_{(\alpha ^{\ast },\theta _{f}^{\ast })}(P^{\ast })}{\partial \theta _{f}^{\prime }} n^{\min \{\delta ,1/2\}}(\hat{\theta}_{f,n}-\theta ^{\ast })+o_{p_{n}}(1), 
\end{align*}
where $(\tilde{\alpha}_{n},\tilde{\theta}_{f,n},\tilde{P}_{n})$ is between $(\hat{\alpha} _{n}^{K},\hat{ \theta}_{f,n},\hat{P}_{n}^{K-1})$ and $(\alpha^{\ast },\theta _{f}^{\ast },P^{\ast })$, and the first equality uses that $P^{\ast }=\Psi_{(\alpha ^{\ast },\theta _{f}^{\ast })}(P^{\ast })$, and the final equality holds by Lemma \ref{lem:PolicyProperties}(c) and $n^{\min \{\delta ,1/2\}}((\hat{\alpha}_{n}^{K},\hat{\theta}_{f,n},\hat{P}_{n}^{K-1}) - (\alpha ^{\ast },\theta _{f}^{\ast },P^{\ast })) = O_{p_{n}}(1)$. From this, we conclude that $n^{\min \{\delta ,1/2\}}(\hat{P} _{n}^{K}-P^{\ast })=O_{p_{n}}(1)$, i.e., Eq.\ \eqref{eq:Distribution} holds with $K$ replaced by $K+1$. In turn, this and part 1 then imply that Eq.\ \eqref{eq:FormulaAlpha} holds with $K$ replaced by $K+1$. This concludes the inductive step and the proof.
\end{proof}


\begin{proof}[Proof of Theorem \ref{thm:ML_main}]
This result is a corollary of Theorem \ref{thm:ANproof} and Lemma \ref{lem:AuxResult2}. To apply Theorem \ref{thm:ANproof}, we first need to verify Assumptions \ref{ass:EEpart2}-\ref{ass:EEpart3}. We anticipate that $Q_{\infty }^{ML}(\theta ,P)=\sum_{(a,x)\in A \times X}J^{\ast }(a,x)\ln \Psi_{\theta }(P)(a|x)$. 

\underline{Part 1:} Verify Assumption \ref{ass:EEpart2}.

Condition (a). First, notice that $\hat{J}_{n}-J^{\ast }=o_{p_{n}}(1)$ and $ \Psi_{\theta }(P)(a|x)>0$ for all $(\theta ,P)\in \Theta \times \Theta _{P}$ and $(a,x)\in A \times X$ implies that $Q_{n}^{ML}(\theta ,P)-Q_{\infty }^{ML}(\theta ,P)=o_{p_{n}}(1)$. Furthermore, notice that:
\begin{align*}
&\sup_{(\theta ,P)\in \Theta \times \Theta _{P}}\vert Q_{n}^{ML}(\theta ,P)-Q_{\infty }^{ML}(\theta ,P)\vert 
=\sup_{(\theta ,P)\in \Theta \times \Theta _{P}}\left\vert \sum_{(a,x)\in A \times X}( \hat{J} _{n}(a,x)-J^{\ast }(a,x)) \ln \Psi_{\theta }(P)(a|x)\right\vert \\
&\leq  \sum_{(a,x)\in A \times X} \vert \hat{J}_{n}(a,x)-J^{\ast }(a,x) \vert \times \left\vert \ln \left( \min_{(a,x)\in A \times X}\inf_{( \theta ,P) \in \Theta \times \Theta _{P}}\Psi_{\theta }(P)(a|x)\right) \right\vert
\end{align*}
Since $\Psi_{\theta }(P)(a|x)>0$ for all $(\theta ,P)\in \Theta \times \Theta _{P}$ and $(a,x)\in A \times X$, $\Psi_{\theta }(P)(a|x):\Theta \times \Theta _{P}\to \mathbb{R} $ is continuous in $(\theta ,P)$ for all $(a,x)\in A \times X$, and $ \Theta \times \Theta _{P}$ is compact, $\min_{(a,x)\in A \times X}\inf_{( \theta ,P) \in \Theta \times \Theta _{P}}\Psi_{\theta }(P)(a|x)>0 $. From this and $\hat{J}_{n}-J^{\ast }=o_{p_{n}}(1)$, we conclude that $ \sup_{(\theta ,P)\in \Theta \times \Theta _{P}}\vert Q_{n}^{ML}(\theta ,P)-Q_{\infty }^{ML}(\theta ,P)\vert =o_{p_{n}}(1)$. Second, previous arguments imply that $Q_{\infty }^{ML}(\theta ,P):\Theta \times \Theta _{P}\to \mathbb{R} $ is continuous in $(\theta ,P)$. Since $\Theta \times \Theta _{P}$ is compact, it then follows that $ Q_{\infty }^{ML}(\theta ,P):\Theta \times \Theta _{P}\to \mathbb{R} $ is uniformly continuous in $(\theta ,P)$. Third, $(\hat{\theta}_{f,n},\tilde{P} _{n})-(\theta _{f}^{\ast },P^{\ast })=o_{p_{n}}(1)$, where $\tilde{P} _{n}$ is the arbitrary sequence in condition (a). By combining these with \citet[Lemma 24.1]{gourieroux/monfort:1995b}, the result follows.

Condition (b). This follows from Assumption \ref{ass:Identification} and the information inequality (e.g.\ \citet[Theorem 2.3]{white:1996}).

Condition (c). Since $\Psi_{\theta }(P)(a|x)>0$ for all $(\theta ,P)\in \Theta \times \Theta _{P}$ and $(a,x)\in A \times X$, and $\Psi_{\theta }(P)(a|x):\Theta \times \Theta _{P}\to \mathbb{R} $ is twice continuously differentiable in $(\theta ,P)$ for all $(a,x)\in A \times X$, $\ln \Psi_{\theta }(P)(a|x):\Theta \times \Theta _{P}\to \mathbb{R} $ is twice continuously differentiable in $(\theta ,P)$ for all $(a,x)\in A \times X$. From here, the result follows.

Condition (d). By direct computation,
\begin{align*}
&\sup_{(\theta ,P)\in \mathcal{N}}\left\vert \frac{\partial^2 Q_{n}^{ML}(\alpha ,\theta _{f},P)}{\partial \alpha \partial \lambda' }-\frac{\partial^2 Q_{\infty }^{ML}(\alpha ,\theta _{f},P)}{\partial \alpha \partial \lambda'}\right\vert=\sup_{(\theta ,P)\in \mathcal{N}}\left\vert \sum_{(a,x)\in A \times X}( J_{n}^{\ast }(a,x)-J^{\ast }(a,x)) M_{\theta ,P}(a,x)\right\vert \\
&\leq  \sum_{(a,x)\in A \times X}\vert J_{n}^{\ast }(a,x)-J^{\ast }(a,x)\vert \times  \max_{( a,x) \in A \times X}\sup_{(\theta ,P)\in \Theta \times \Theta _{P}}\vert M_{\theta ,P}(a,x)\vert,
\end{align*}
with:
\[
M_{\theta ,P}(a,x) ~\equiv~ \frac{-1}{( \Psi_{\theta }(P)(a|x)) ^{2}} \frac{\partial \Psi_{\theta }(P)(a|x)}{\partial \alpha}\frac{ \partial \Psi_{\theta }(P)(a|x)}{\partial \lambda' }+\frac{1}{\Psi_{\theta }(P)(a|x)}\frac{ \partial^2 \Psi_{\theta }(P)(a|x)}{\partial \alpha \partial \lambda'}.
\]
Since $\Psi_{\theta }(P)(a|x)>0$ for all $(\theta ,P)\in \Theta \times \Theta _{P}$ and $(a,x)\in A \times X$, $\Psi_{\theta }(P)(a|x):\Theta \times \Theta _{P}\to \mathbb{R} $ is continuous in $(\theta ,P)$ for all $(a,x)\in A \times X$, and $ \Theta \times \Theta _{P}$ is compact, $\inf_{( \theta ,P) \in \Theta \times \Theta _{P}}\Psi_{\theta }(P)(a|x)>0$ for all $(a,x)\in A \times X$. From this and that $\Psi_{\theta }(P)$ is twice continuously differentiable in $(\theta ,P)$, $M_{\theta ,P}(a,x)$ is continuous in $(\theta ,P)$ for all $(a,x)\in A \times X$. Since $\Theta \times \Theta _{P} $ is compact, $\max_{( a,x) \in A \times X}\sup_{(\theta ,P)\in \Theta \times \Theta _{P}}\vert M_{\theta ,P}(a,x)\vert <\infty $. From this and $\hat{J} _{n}-J^{\ast }=o_{p_{n}}(1)$, the result follows.

Condition (e). By direct computation, for any $\lambda \in (\alpha ,\theta _{f},P)$,
\begin{equation}
\tfrac{\partial^2 Q_{\infty }^{ML}(\alpha ,\theta _{f},P)}{\partial \alpha \partial \lambda' }=
\sum_{(a,x)\in A \times X}J^{\ast }(a,x)
\left[\frac{-1}{ (\Psi_{\theta }(P)(a,x))^{2}}\frac{\partial \Psi_{\theta }(P)(a|x)}{ \partial \alpha }\frac{\partial \Psi_{\theta }(P)(a|x)}{\partial \lambda'}+\frac{1}{\Psi_{\theta }(P)(a|x)}\frac{\partial^2 \Psi_{\theta }(P)(a|x)}{\partial \alpha \partial  \lambda'}\right].
\label{eq:SecondDeriv}
\end{equation}
This function is continuous and if we evaluate it at $(\alpha ,\theta _{f},P)=(\alpha ^{\ast },\theta _{f}^{\ast },P^{\ast })$, we obtain:
\begin{align}
&\frac{\partial^2 Q_{\infty }^{ML}(\alpha ^{\ast },\theta _{f}^{\ast },P^{\ast })}{\partial \alpha \partial \lambda' } =\sum_{(a,x)\in A \times X}J^{\ast }(a,x)\left[\frac{-1}{(P_{\theta ^{\ast }}(a|x))^2} \frac{\partial P_{\theta ^{\ast }}(a|x)}{\partial \alpha }\frac{\partial P_{\theta ^{\ast }}(a|x)}{\partial \lambda' }+\frac{1}{P_{\theta ^{\ast }}(a|x)}\frac{\partial^2 \Psi_{\theta^{\ast} }(P^{\ast})(a|x)}{\partial \alpha \partial \lambda' }\right] \nonumber \\
&= - \sum_{(a,x)\in A \times X}J^{\ast }(a,x)  \frac{\partial \ln P_{\theta ^{\ast }}(a|x)}{\partial \alpha }\frac{\partial \ln P_{\theta ^{\ast }}(a|x)}{\partial \lambda'} +  \sum_{x\in X}m^{\ast }(x) \sum_{a\in A} \frac{\partial^2 \Psi_{\theta^{\ast} }(P^{\ast})(a|x)}{\partial \alpha \partial \lambda'} \nonumber \\
& = -\sum_{(a,x)\in A \times X}J^{\ast }(a,x)\frac{\partial \ln P_{\theta ^{\ast }}(a|x)}{\partial \alpha}\frac{\partial \ln P_{\theta ^{\ast }}(a|x)}{ \partial \lambda} = - \frac{\partial P_{\theta^*}'}{\partial \alpha} \Phi \frac{\partial P_{\theta^*}}{\partial \lambda'}
\label{eq:SecondDerivMLE}
\end{align}
where the first equality uses $ \partial \Psi_{\theta ^{\ast }}(P^{\ast })/\partial \alpha =\partial P_{\theta ^{\ast }}/\partial \alpha $ and $P_{\theta ^{\ast }}=P^{\ast }$, the second equality uses $J^{\ast }(a,x)=P^{\ast }(a|x)m^{\ast }(x)$, the third equality interchanges summation and differentiation and uses that $\sum_{a\in A}\Psi_{\theta^{\ast} }(P^{\ast})(a|x)=1$ for all $x\in X$, and the final equality in the third line uses Lemma \ref{lem:AuxResult2}. To verify the result, it suffices to consider the last expression with $\lambda =\alpha $. Since $\Phi$ is a non-singular matrix and $\partial P_{\theta ^{\ast }}/\partial \alpha $ has full rank matrix, we conclude that the expression is square, symmetric, and negative definite, and, consequently, it must be non-singular.

Condition (f). By Young's theorem and Eq.\ \eqref{eq:SecondDeriv} with $\lambda =P$ and $(\alpha ,\theta _{f},P)=(\alpha ^{\ast },\theta _{f}^{\ast },P^{\ast })$,
\begin{align*}
& \frac{\partial^2 Q_{\infty }^{ML}(\alpha ^{\ast },\theta _{f}^{\ast },P^{\ast })}{\partial P\partial \alpha ^{\prime }} \\
&=\sum_{(a,x)\in A \times X}J^{\ast }(a,x)
\left[ \frac{-1}{( \Psi_{\theta ^{\ast }}(P^{\ast })(a,x)) ^{2}}\frac{\partial \Psi_{\theta ^{\ast }}(P^{\ast })(a,x)}{ \partial P}\frac{\partial \Psi_{\theta ^{\ast }}(P^{\ast })(a,x)}{ \partial \alpha ^{\prime }}+\frac{1}{\Psi_{\theta }(P)(a,x)}\frac{\partial^2 \Psi_{\theta ^{\ast }}(P^{\ast })(a,x)}{\partial P\partial \alpha ^{\prime } } \right] \\
&=\sum_{(a,x)\in A \times X}J^{\ast }(a,x)\left[ \frac{-1}{( \Psi_{\theta ^{\ast }}(P^{\ast })(a,x)) ^{2}}\frac{\partial \Psi_{\theta ^{\ast }}(P_{\theta ^{\ast }})(a,x)}{\partial P}\frac{\partial \Psi_{\theta ^{\ast }}(P^{\ast })(a,x)}{\partial \alpha ^{\prime }}+\frac{1}{\Psi_{\theta }(P)(a,x)}\frac{ \partial }{\partial \alpha ^{\prime }} \frac{\partial \Psi_{\theta ^{\ast }}(P_{\theta ^{\ast }})(a,x)}{\partial P} \right] \\
&= {\bf 0}_{|\tilde{A} \times X| \times d_{\alpha}}.
\end{align*}
where the second equality uses $P_{\theta ^{\ast }}=P^{\ast }$ and Young's theorem, and the last equality uses that the Jacobian matrix of $\Psi_{\theta ^{\ast }}$ with respect to $P$ is zero at $P_{\theta ^{\ast }}=P^{\ast }$.

\underline{Part 2:} Verify Assumption \ref{ass:EEpart3}.

Assumption \ref{ass:EEpart3}(b) holds as a corollary of Lemma \ref{lem:AsyDistBase2}. To verify Assumption \ref{ass:EEpart3}(a), consider the following argument. By direct computation,
\begin{align}
\frac{\partial Q_{n}^{ML}(\alpha ^{\ast },\theta _{f}^{\ast },P^{\ast })}{ \partial \alpha } &=\sum_{(a,x)\in A \times X}\hat{J}_{n}(a,x)\frac{1}{\Psi_{\theta ^{\ast }}(P^{\ast })(a,x)}\frac{\partial \Psi_{\theta ^{\ast }}(P^{\ast })(a,x)}{\partial \alpha } \notag\\
&=\frac{\partial\{ \{\ln P_{\theta ^{\ast }}(a|x)\}_{(a,x)\in A \times X}\}'}{\partial \alpha }\hat{J}_{n}= \frac{\partial P_{\theta ^{\ast }}'}{\partial \alpha } \Phi \Sigma\hat{J}_{n},\label{eq:QnML1}
\end{align}
where the second equality uses that $\partial \Psi_{\theta ^{\ast }}(P^{\ast })/\partial \alpha =\partial P_{\theta ^{\ast }}/\partial \alpha $ and the last equality uses Lemma \ref{lem:AuxResult2}. Also, by using an analogous argument applied to the population,
\begin{eqnarray}
\frac{\partial Q_{\infty }^{ML}(\alpha ^{\ast },\theta _{f}^{\ast },P^{\ast })}{\partial \alpha } 
&=& \frac{\partial P_{\theta ^{\ast }}'}{\partial \alpha } \Phi \Sigma {J}^* \notag\\
&=& \frac{\partial P_{\theta ^{\ast }}'}{\partial \alpha } \Phi \left\{  \left[ \left[ \mathbf{I}_{|\tilde{A} |\times |A|}-\{P^{\ast }(a|x)\}_{a\in \tilde{A}} \times \mathbf{1} _{1\times |A|} \right] / {m(x)}  \right]\times  \{ J^{\ast }(a ,x) \}_{a \in A} \right\}_{x\in X} \notag\\
&=& \frac{\partial P_{\theta ^{\ast }}'}{\partial \alpha } \Phi \left\{   \left[ \mathbf{I}_{|\tilde{A} |\times |A|}-\{P^{\ast }(a|x)\}_{a\in \tilde{A}} \times \mathbf{1} _{1\times |A|} \right]\times  \{ P^{\ast }(a |x) \}_{a \in A} \right\}_{x\in X} \notag \\
&=& \frac{\partial P_{\theta ^{\ast }}'}{\partial \alpha } \Phi {\bf 0}_{|\tilde{A}| \times |X| }  = {\bf 0}_{|\tilde{A}| \times |X| },\label{eq:QnML2}
\end{eqnarray}
where the third equality uses that $P^{\ast }(a|x)=J^{\ast }(a,x)/m^{\ast }(x)$ and the fourth equality uses that $\sum_{a \in A}P^{*}(a|x)=1$ for all $(a,x) \in A\times X$.

By combining Eqs.\ \eqref{eq:QnML1} and \eqref{eq:QnML2}, we conclude that:
\begin{equation*}
n^{\min \{ \delta ,1/2\} }\left[ 
\begin{array}{c}
\partial Q_{n}^{ML}(\alpha ^{\ast },\theta _{f}^{\ast },P^{\ast })/\partial \alpha \\
(\hat{\theta}_{f,n}-\theta _{f}^{\ast })
\end{array}
\right] =\left[ 
\begin{array}{cc}
\frac{\partial P_{\theta ^{\ast }}'}{\partial \alpha }\Phi \Sigma & \mathbf{0}_{|A \times X|\times d_{\theta_{f}}} \\ 
\mathbf{0}_{d_{\theta_{f}}\times |A \times X|} & \mathbf{I}_{d_{\theta_{f}} \times d_{\theta_{f}}}
\end{array}
\right] n^{\min \{ \delta ,1/2\} }\left(
\begin{array}{c}
\hat{J}_{n}-J^{\ast } \\ 
\hat{\theta}_{f,n}-\theta _{f}^{\ast }
\end{array}
\right).
\end{equation*}

From this and Lemma \ref{lem:AsyDistBase1}, we conclude that the desired result holds with:
\begin{align}
\zeta \sim \left[ 
\begin{array}{cc}
\frac{\partial P_{\theta ^{\ast }}'}{\partial \alpha }\Phi \Sigma & \mathbf{0}_{|A \times X|\times d_{\theta_{f}}} \\ 
\mathbf{0}_{d_{\theta_{f}}\times |A \times X|} & \mathbf{I}_{d_{\theta_{f}} \times d_{\theta_{f}}}
\end{array}
\right]  \Delta  N\left(B_{\Pi ^{\ast }}\times 1[\delta \leq 1/2], (diag(\Pi ^{\ast })-\Pi ^{\ast }\Pi ^{\ast \prime }) \times 1[\delta \geq 1/2]\right)
\label{eq:ZetaInMLE}
\end{align}

This completes the verification of Assumptions \ref{ass:EEpart2}-\ref{ass:EEpart3} and so Theorem \ref{thm:ANproof} applies. The specific formula for the asymptotic distribution relies on Eqs.\ \eqref{eq:SecondDerivMLE} and \eqref{eq:ZetaInMLE}.
\end{proof}


\begin{proof}[Proof of Theorem \ref{thm:MD_main}] 
	This result is a corollary of Theorem \ref{thm:ANproof}. To complete the proof, we need to verify Assumptions \ref{ass:EEpart2}-\ref{ass:EEpart3}. We anticipate that $Q_{\infty }^{MD}(\theta ,P)=-[P^{\ast }-\Psi_{\theta }(P)]^{\prime }W^{\ast }[P^{\ast }-\Psi_{\theta }(P)]$.

\underline{Part 1:} Verify the conditions in Assumption \ref{ass:EEpart2}.

Condition (a). First, we show that $\sup_{(\theta ,P)\in \Theta \times \Theta _{P}}|Q_{n}^{MD}(\theta ,P)-Q_{\infty }^{MD}(\theta ,P)|=o_{p_{n}}(1)$. Consider the following argument:
\begin{eqnarray*}
\sup_{(\theta ,P)\in \Theta \times \Theta _{P}}|Q_{n}^{MD}(\theta ,P)-Q_{\infty }^{MD}(\theta ,P)| &=&\sup_{(\theta ,P)\in \Theta \times \Theta _{P}}\left\vert
\begin{array}{c}
-(\hat{P}_{n}-P^{\ast })^{\prime }\hat{W}_{n}[\hat{P}_{n}-\Psi_{\theta }(P)] \\
-(P^{\ast }-\Psi_{\theta }(P))^{\prime }[\hat{W}_{n}-W^{\ast }][\hat{P} _{n}-\Psi_{\theta }(P)] \\
-(P^{\ast }-\Psi_{\theta }(P))^{\prime }W^{\ast }(\hat{P}_{n}-P^{\ast })
\end{array}
\right\vert \\
&\leq &\Vert \hat{P}_{n}-P^{\ast }\Vert \times (\Vert \hat{W}_{n}-W^{\ast }\Vert +2\Vert W^{\ast }\Vert) +\Vert \hat{W}_{n}-W^{\ast }\Vert
\end{eqnarray*}
Second, since $\Psi_{\theta }(P)(a|x):\Theta \times \Theta _{P}\to \mathbb{R}$ is continuous in $(\theta ,P)$ for all $(a,x)$, $ Q_{\infty }^{MD}(\theta ,P):\Theta \times \Theta _{P}\to \mathbb{R}$ is continuous in $(\theta ,P)$. In turn, since $\Theta \times \Theta _{P}$ is compact, $Q_{\infty }^{MD}(\theta ,P):\Theta \times \Theta _{P}\to \mathbb{R}$ is uniformly continuous in $(\theta ,P)$. Third, $(\hat{\theta}_{f,n},\tilde{P}_{n})-(\theta _{f}^{\ast },P^{\ast })=o_{p_{n}}(1)$, where $\tilde{P}_{n}$ is the arbitrary sequence in condition (a). By combining these with \citet[Lemma 24.1]{gourieroux/monfort:1995b}, the result follows.

Condition (b). $Q_{\infty }^{MD}(\alpha ,\theta _{f}^{\ast },P^{\ast })=- [ P^{\ast }-\Psi_{(\alpha ,\theta _{f}^{\ast })}(P^{\ast })] ^{\prime }W^{\ast }[ P^{\ast }-\Psi_{(\alpha ,\theta _{f}^{\ast })}(P^{\ast })] $ is uniquely maximized at $\alpha ^{\ast }$. First, notice that $\Psi_{(\alpha ^{\ast },\theta _{f}^{\ast })}(P^{\ast })=P^{\ast }$ and so $Q_{\infty }^{MD}(\alpha ^{\ast },\theta _{f}^{\ast },P^{\ast })=0$. Second, consider any $\tilde{\alpha}\in \Theta _{\alpha }\backslash \alpha ^{\ast }$. By the identification assumption, $\Psi_{( \tilde{\alpha},\theta _{f}^{\ast })}(P^{\ast })\not=\Psi_{(\alpha ^{\ast },\theta _{f}^{\ast })}(P^{\ast })=P^{\ast }$. Since $W^{\ast }$ is positive definite, $Q_{\infty }^{MD}( \tilde{\alpha},\theta _{f}^{\ast },P^{\ast })>0$.

Condition (c). This result follows from the fact that $\Psi_{\theta }(P)(a|x):\Theta \times \Theta _{P}\to \mathbb{R} $ is twice continuously differentiable in $(\theta ,P)$ for all $(a,x) \in A \times X$.

Condition (d). By the same argument as in the verification of condition (c), $Q_{\infty }^{MD}(\theta ,P):\Theta \times \Theta _{P}\to \mathbb{R}$ is twice continuously differentiable in $(\theta ,P)$. Since $\Psi_{\theta }(P)(a|x)$ is twice continuously differentiable in $(\theta ,P)$ for all $(a,x) \in \tilde{A} \times X$, we conclude that $\partial \Psi_{\theta }(P)(a,x)/\partial \lambda $ and $\partial \Psi_{\theta }(P)(a,x)/\partial \alpha\partial \lambda' $ are continuous in $(\theta ,P)$ for all $ \lambda \in \{\theta ,P\}$ and $(a,x) \in \tilde{A} \times X$. From this and the fact that $ \Theta \times \Theta _{P}$ is compact, $\max_{(a,x)\in A \times X}\sup_{(\theta ,P)\in \Theta \times \Theta _{P}}\Vert \partial \Psi_{\theta }(P)(a,x)/\partial \lambda \Vert <\infty $ and $\max_{(a,x)\in A \times X}\sup_{(\theta ,P)\in \Theta \times \Theta _{P}}\Vert \partial \Psi_{\theta }(P)(a,x)/\partial \alpha \partial \lambda'\Vert <\infty $. From this, $\hat{P}_{n}-P^{\ast }=o_{p_{n}}(1)$, and $\hat{W}_{n}-W^{\ast }=o_{p_{n}}(1)$, the desired result follows.

Condition (e). For any $\lambda \in \{\alpha ,\theta _{f},P\}$, direct computation shows that:
\begin{equation}
	\frac{\partial^2 Q_{\infty }^{MD}(\alpha ,\theta _{f},P)}{\partial \lambda \partial \alpha'}
	=2 \left[ \frac{\partial }{\partial \alpha' }\frac{ \partial \Psi_{\theta }(P)'}{\partial \lambda}W^{\ast } (P^{\ast }-\Psi_{\theta }(P)) -\frac{\partial \Psi_{\theta }(P)'}{\partial \lambda }W^{\ast } \frac{ \partial \Psi_{\theta }(P)}{\partial \alpha' } \right].
\label{eq:SecondDerivMD}
\end{equation}
This function is continuous and if we evaluate at $(\alpha ,\theta _{f},P)=(\alpha ^{\ast },\theta _{f}^{\ast },P^{\ast })$, we obtain: 
\begin{equation*} 
	\frac{\partial Q_{\infty }^{MD}(\alpha ^{\ast },\theta _{f}^{\ast },P^{\ast })}{\partial \lambda \partial \alpha ^{\prime }}
	= -2\frac{\partial \Psi_{\theta^* }(P^*)'}{\partial \lambda }W^{\ast } \frac{ \partial \Psi_{\theta^* }(P^*)}{\partial \alpha' }
	= -2\frac{\partial P _{\theta^* }'}{\partial \lambda }W^{\ast } \frac{ \partial P_{\theta^* }}{\partial \alpha' }
\end{equation*}
where the first line uses that $P^{\ast }=\Psi_{\theta ^{\ast }}(P^{\ast })$ and $\partial \Psi_{\theta ^{\ast }}(P^{\ast })/\partial \alpha =\partial P_{\theta ^{\ast }}/\partial \alpha $. To verify the result, it suffices to consider the last expression with $\lambda =\alpha $. By assumption, this expression is square, symmetric, and negative definite, and, consequently, it must be non-singular.

Condition (f).  By Young's theorem and Eq.\ \eqref{eq:SecondDerivMD} with $\lambda =P$ and $(\alpha ,\theta _{f},P)=(\alpha ^{\ast },\theta _{f}^{\ast },P^{\ast })$,
\begin{equation*}
\frac{\partial^2 Q_{\infty }^{MD}(\alpha ^{\ast },\theta _{f}^{\ast },P^{\ast })}{\partial P\partial \alpha ^{\prime }}
=-2\frac{\partial \Psi_{\theta ^{\ast }}(P^{\ast })'}{\partial P }W^{\ast }\frac{\partial \Psi_{\theta ^{\ast }}(P^{\ast })}{\partial \alpha '}
=- 2 \frac{ \partial \Psi_{\theta ^{\ast }}(P_{\theta^{\ast }})'}{\partial P }W^{\ast }\frac{\partial \Psi_{\theta ^{\ast }}(P_{\theta^{\ast }})}{\partial \alpha'}={\bf 0}_{|\tilde{A} \times X| \times d_{\alpha}}.
\end{equation*}
where the last equality uses that the Jacobian matrix of $\Psi_{\theta ^{\ast }}$ with respect to $P$ is zero at $P_{\theta ^{\ast }}=P^{\ast }$.

\underline{Part 2:} Verify the conditions in Assumption \ref{ass:EEpart3}.

Assumption \ref{ass:EEpart3}(b) holds as a corollary of Lemma \ref{lem:AsyDistBase2}. To verify Assumption \ref{ass:EEpart3}(a), consider the following argument. By direct computation,
\begin{equation*}
\frac{\partial Q_{n}^{MD}(\alpha ^{\ast },\theta _{f}^{\ast },P^{\ast })}{ \partial \alpha }
=2\frac{\partial \Psi_{\theta ^{\ast }}(P^{\ast })'}{\partial \alpha }\hat{W}_{n}[\hat{P}_{n}-\Psi_{\theta ^{\ast }}(P^{\ast })]
=2\frac{\partial P_{\theta^* }'}{\partial \alpha }W^{\ast }[\hat{P}_{n}-P^{\ast }]+o_{p_{n}}(1),
\end{equation*}
where the last equality uses that $\Psi_{\theta ^{\ast }}(P^{\ast })=P^{\ast }$, $\partial \Psi_{\theta ^{\ast }}(P^{\ast })'/\partial \alpha =\partial P_{\theta ^{\ast }}'/\partial \alpha $, $\hat{P}_{n}-P^{\ast }=o_{p_{n}}(1)$, and $\hat{W}_{n}-W^{\ast }=o_{p_{n}}(1)$. We then conclude that:
\begin{equation*}
n^{\min \{ \delta ,1/2\} }\left[
\begin{array}{c}
\partial Q_{n}^{MD}(\alpha ^{\ast },\theta _{f}^{\ast },P^{\ast })/\partial \alpha \\
(\hat{\theta}_{f,n}-\theta _{f}^{\ast })
\end{array}
\right] =\left[ 
\begin{array}{cc}
2\frac{\partial P_{\theta^{\ast } }'}{\partial \alpha }W^{\ast } & \mathbf{0} _{|A \times X|\times d_{\theta_{f}}} \\
\mathbf{0}_{d_{\theta _{f}}\times |A \times X|} & \mathbf{I}_{d_{\theta _{f}} \times d_{\theta _{f}}}
\end{array}
\right] n^{\min \{ \delta ,1/2\} }\left[ 
\begin{array}{c}
(\hat{P}_{n}-P^{\ast }) \\ 
(\hat{\theta}_{f,n}-\theta _{f}^{\ast })
\end{array}
\right] +o_{p_{n}}(1).
\end{equation*}

From this and Lemma \ref{lem:AsyDistBase2}, we conclude that the desired result holds with:
\begin{align}
\zeta \sim \left[ 
\begin{array}{cc}
\frac{\partial P_{\theta ^{\ast }}'}{\partial \alpha }W^* \Sigma & \mathbf{0}_{|A \times X|\times d_{\theta_{f}}} \\ 
\mathbf{0}_{d_{\theta_{f}}\times |A \times X|} & \mathbf{I}_{d_{\theta_{f}} \times d_{\theta_{f}}}
\end{array}
\right]  \Delta  N\left(B_{\Pi ^{\ast }}\times 1[\delta \leq 1/2], (diag(\Pi ^{\ast })-\Pi ^{\ast }\Pi ^{\ast \prime }) \times 1[\delta \geq 1/2]\right)
\label{eq:ZetaInMD}
\end{align}

This completes the verification of Assumptions \ref{ass:EEpart2}-\ref{ass:EEpart3} and so Theorem \ref{thm:ANproof} applies. The specific formula for the asymptotic distribution relies on Eqs.\ \eqref{eq:SecondDerivMD} and \eqref{eq:ZetaInMD}.
\end{proof} 


\subsection{Proofs of lemmas}


\begin{proof}[Proof of Lemma \ref{lem:PolicyProperties}]
	This proof follows from \citet[Propositions 1-2]{aguirregabiria/mira:2002}.
\end{proof}


\begin{proof}[Proof of Lemma \ref{lem:AuxResultsOnCCP}]
	Parts (a)-(b) follow from \citet[Pages 1015-6]{rust:1988}. Part (c) follows from combining Lemma \ref{lem:PolicyProperties} and Assumption \ref{ass:Identification}.
\end{proof}


\begin{lemma}\label{lem:AsyDistBase1} 
Assume Assumptions \ref{ass:LocalMiss}-\ref{ass:iid}. Then,
\begin{equation}
n^{\min \{\delta ,1/2\}}\left(
\begin{array}{c}
\hat{J}_{n}-J^{\ast } \\ 
\hat{\theta}_{f,n}-\theta _{f}^{\ast }
\end{array}
\right) \overset{d}{\to }\Delta \times N\left(  B_{\Pi ^{\ast }}\times 1[\delta \leq 1/2],(diag(\Pi ^{\ast })-\Pi ^{\ast }\Pi ^{\ast \prime }) \times 1[\delta \geq 1/2]\right) , \label{eq:AsyDistBase}
\end{equation}
with $\Delta$ as in Eq.\ \eqref{eq:Delta}.
\end{lemma}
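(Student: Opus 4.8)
The plan is to reduce the whole statement to the asymptotic behavior of the sample-analogue DGP $\hat{\Pi}_{n}$ and then transport that limit to $(\hat{J}_{n},\hat{\theta}_{f,n})$ through the two maps that define these objects: a fixed linear summation (producing $\hat{J}_{n}$) and the smooth first-step map $G_{1}$ (producing $\hat{\theta}_{f,n}$). These two maps are exactly the two block rows of the matrix $\Delta$ in Eq.\ \eqref{eq:Delta}, so once the limit of $n^{\min\{\delta,1/2\}}(\hat{\Pi}_{n}-\Pi^{*})$ is in hand, the conclusion will follow by a single application of the continuous mapping theorem to the linear map $\Delta$.

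First I would establish the joint limit of $n^{\min\{\delta,1/2\}}(\hat{\Pi}_{n}-\Pi^{*})$ by splitting it into sampling error and deterministic bias,
\begin{equation*}
n^{\min\{\delta,1/2\}}(\hat{\Pi}_{n}-\Pi^{*})=n^{\min\{\delta,1/2\}}(\hat{\Pi}_{n}-\Pi_{n}^{*})+n^{\min\{\delta,1/2\}}(\Pi_{n}^{*}-\Pi^{*}).
\end{equation*}
Under Assumption \ref{ass:iid}, $\hat{\Pi}_{n}$ is the average of $n$ i.i.d.\ indicator vectors over the cells $(a,x,x')\in A\times X\times X$, drawn from the multinomial distribution $\Pi_{n}^{*}$; hence $E[\hat{\Pi}_{n}]=\Pi_{n}^{*}$ and the per-observation covariance is $diag(\Pi_{n}^{*})-\Pi_{n}^{*}\Pi_{n}^{*\prime}$. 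Since the summands are bounded, the Lindeberg condition is trivial, so a triangular-array CLT together with $\Pi_{n}^{*}\to\Pi^{*}$ gives $\sqrt{n}(\hat{\Pi}_{n}-\Pi_{n}^{*})\overset{d}{\to}N(0,diag(\Pi^{*})-\Pi^{*}\Pi^{*\prime})$. The second summand is purely deterministic and is governed by Assumption \ref{ass:LocalMiss}(a): $n^{\delta}(\Pi_{n}^{*}-\Pi^{*})\to B_{\Pi^{*}}$.

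Next I would combine the two pieces according to $\delta$. When $\delta>1/2$ the scaling is $n^{1/2}$, the sampling term converges to $N(0,diag(\Pi^{*})-\Pi^{*}\Pi^{*\prime})$, and the bias term is $n^{1/2-\delta}O(1)=o(1)$. When $\delta<1/2$ the scaling is $n^{\delta}$, the sampling term is $n^{\delta-1/2}O_{p_{n}}(1)=o_{p_{n}}(1)$, and the bias term converges to $B_{\Pi^{*}}$. The knife-edge $\delta=1/2$ retains both. Collecting the three cases yields precisely $n^{\min\{\delta,1/2\}}(\hat{\Pi}_{n}-\Pi^{*})\overset{d}{\to}N(B_{\Pi^{*}}\times 1[\delta\le 1/2],(diag(\Pi^{*})-\Pi^{*}\Pi^{*\prime})\times 1[\delta\ge 1/2])$.

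Finally I would transport this limit. The identity $\hat{J}_{n}(a,x)=\sum_{x'\in X}\hat{\Pi}_{n}(a,x,x')$ is linear, so $\hat{J}_{n}-J^{*}=L(\hat{\Pi}_{n}-\Pi^{*})$ holds exactly, where $L$ is the summation-over-$x'$ operator, which is exactly the top block row of $\Delta$. For the first step, Assumption \ref{ass:Preliminary} gives $\hat{\theta}_{f,n}=G_{1}(\hat{\Pi}_{n})$ with $\theta_{f}^{*}=G_{1}(\Pi^{*})$ and $G_{1}$ differentiable at $\Pi^{*}$, so the delta method gives
\begin{equation*}
n^{\min\{\delta,1/2\}}(\hat{\theta}_{f,n}-\theta_{f}^{*})=\frac{\partial G_{1}(\Pi^{*})}{\partial {\Pi^{*}}'}\,n^{\min\{\delta,1/2\}}(\hat{\Pi}_{n}-\Pi^{*})+o_{p_{n}}(1),
\end{equation*}
whose Jacobian is the bottom block row of $\Delta$. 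Stacking the two gives $n^{\min\{\delta,1/2\}}[(\hat{J}_{n}-J^{*})',(\hat{\theta}_{f,n}-\theta_{f}^{*})']'=\Delta\,n^{\min\{\delta,1/2\}}(\hat{\Pi}_{n}-\Pi^{*})+o_{p_{n}}(1)$, and the continuous mapping theorem applied to the linear map $\Delta$ delivers Eq.\ \eqref{eq:AsyDistBase}. I expect the main obstacle to be the first step: because the DGP $\Pi_{n}^{*}$ drifts with $n$, the ordinary i.i.d.\ CLT does not apply directly and one must invoke a triangular-array (Lindeberg--Feller) central limit theorem, and then carry out the bias/variance bookkeeping carefully across the three regimes of $\delta$ so that the indicators $1[\delta\le 1/2]$ and $1[\delta\ge 1/2]$ emerge correctly.
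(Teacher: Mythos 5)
Your proposal is correct and follows essentially the same route as the paper's proof: a triangular-array CLT for $\sqrt{n}(\hat{\Pi}_{n}-\Pi_{n}^{\ast})$, combination with Assumption \ref{ass:LocalMiss} to obtain the limit of $n^{\min\{\delta,1/2\}}(\hat{\Pi}_{n}-\Pi^{\ast})$, and the delta method applied to the stacked map (summation over $x'$ plus $G_{1}$) whose Jacobian at $\Pi^{\ast}$ is exactly $\Delta$. The only differences are expository: you make the bias/variance bookkeeping across the three regimes of $\delta$ explicit, which the paper compresses into one step.
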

\begin{proof}
Under Assumption \ref{ass:iid}, the triangular array CLT (e.g.\ \citet[page 369]{davidson:1994}) implies that:
\[
\sqrt{n}(\hat{\Pi}_{n}-\Pi _{n}^{\ast })\overset{d}{\rightarrow } N(0,diag(\Pi ^{\ast })-\Pi ^{\ast }\Pi ^{\ast \prime }).
\]
If we combine this with Assumption \ref{ass:LocalMiss},
\begin{equation}
n^{\min \{\delta ,1/2\}}(\hat{\Pi}_{n}-\Pi ^{\ast })\overset{d}{\rightarrow } N(B_{\Pi ^{\ast }}\times 1[\delta \leq 1/2],(diag(\Pi ^{\ast })-\Pi ^{\ast }\Pi ^{\ast \prime })\times 1[\delta \geq 1/2]). \label{eq:CLT}
\end{equation}

Also, notice that:
\[
n^{\min \{\delta ,1/2\}}\left( 
\begin{array}{c}
\hat{J}_{n}-J^{\ast } \\ 
\hat{\theta}_{f,n}-\theta _{f}^{\ast }
\end{array}
\right) =n^{\min \{\delta ,1/2\}}(F(\hat{\Pi}_{n})-F(\Pi ^{\ast })),
\]
where $F:\mathbb{R}^{|A\times X\times X|}\rightarrow \mathbb{R}^{|A\times X|+d_{\theta _{f}}}$ is defined as follows. For coordinates $j \leq |A\times X|$ where $j$ represents the corresponding coordinate $(a,x)\in A\times X$, $F_{j}(z)\equiv \sum_{\tilde{x}^{\prime }\in X}z_{(a,x,\tilde{x} ^{\prime })}$, and for coordinates $j>|A\times X|$, $F_{j}(z)\equiv G_{1,j}(z)$. By definition of $G_{1}$, $\hat{\theta}_{f,n}=G_{1}(\hat{\Pi}_{n})$, and by Assumption \ref{ass:Preliminary}, $\theta _{f}^{\ast }=G_{1}(\Pi ^{\ast })$ and $F$ is continuously differentiable at $\Pi ^{\ast }$. By direct computation, $\Delta =\partial{F(\Pi ^{\ast })}/\partial{\Pi'}$. Then, the result follows from the delta method and Eq.\ \eqref{eq:CLT}.
\end{proof}


\begin{lemma} \label{lem:AsyDistBase2}
	Assume Assumptions \ref{ass:LocalMiss}-\ref{ass:iid}. Then,
	\begin{align}
	&n^{\min \{\delta ,1/2\}}\left(
	\begin{array}{c}
	\hat{P}_{n}-P^{\ast } \\ 
	\hat{\theta}_{f,n}-\theta _{f}^{\ast }
	\end{array}
	\right)\notag \\
	&\overset{d}{\to }\left[ 
\begin{array}{cc}
\Sigma & \mathbf{0}_{|\tilde{A} \times X|\times d_{\theta _{f}}} \\
\mathbf{0}_{d_{\theta _{f}}\times |\tilde{A} \times X|} & \mathbf{I}_{d_{\theta _{f}}\times d_{\theta _{f}}}
\end{array}
\right] \Delta \times N\left(  B_{\Pi ^{\ast }} 1[\delta \leq 1/2],(diag(\Pi ^{\ast })-\Pi ^{\ast }\Pi ^{\ast \prime }) 1[\delta \geq 1/2]\right) , \label{eq:AsyDistBaseP}
	\end{align}
	with $\Delta$ as in Eq.\ \eqref{eq:Delta} and $\Sigma$ as in Eq.\ \eqref{eq:Sigma}.
\end{lemma}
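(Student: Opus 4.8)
The plan is to obtain the asymptotic distribution of $(\hat{P}_{n},\hat{\theta}_{f,n})$ from the one already established for $(\hat{J}_{n},\hat{\theta}_{f,n})$ in Lemma \ref{lem:AsyDistBase1} by applying the delta method. The key observation is that the sample CCPs are a smooth transformation of the sample unconditional probabilities while the first-step estimator is left unchanged: for each $(a,x)\in\tilde{A}\times X$ we have $\hat{P}_{n}(a|x)=\hat{J}_{n}(a,x)/\hat{m}_{n}(x)$ with $\hat{m}_{n}(x)=\sum_{\tilde{a}\in A}\hat{J}_{n}(\tilde{a},x)$, and analogously $P^{\ast}(a|x)=J^{\ast}(a,x)/m^{\ast}(x)$. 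Accordingly, I would introduce the map $T(J,\theta_{f})\equiv(\{J(a,x)/\sum_{\tilde{a}\in A}J(\tilde{a},x)\}_{(a,x)\in\tilde{A}\times X},\theta_{f})$, observe that $T(\hat{J}_{n},\hat{\theta}_{f,n})=(\hat{P}_{n},\hat{\theta}_{f,n})$ and $T(J^{\ast},\theta_{f}^{\ast})=(P^{\ast},\theta_{f}^{\ast})$, and reduce the claim to differentiating $T$ at the fixed point $(J^{\ast},\theta_{f}^{\ast})$. Note that Lemma \ref{lem:AsyDistBase1} already centers at the limit $(J^{\ast},\theta_{f}^{\ast})$, so the ordinary delta method (expanding around a fixed point) applies directly.

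Second, I would compute the Jacobian of $T$ at $(J^{\ast},\theta_{f}^{\ast})$. Because $T$ leaves $\theta_{f}$ untouched and the CCP block depends only on the $J$-coordinates, the Jacobian is block diagonal in the $(J,\theta_{f})$ sense, with an identity block $\mathbf{I}_{d_{\theta_{f}}\times d_{\theta_{f}}}$ for $\theta_{f}$, zero off-diagonal blocks, and the block $\partial P^{\ast}/\partial J^{\ast\prime}$ for the CCPs. By the quotient rule, for $a\in\tilde{A}$ and $b\in A$,
\begin{equation*}
\frac{\partial P^{\ast}(a|x)}{\partial J^{\ast}(b,x')}~=~\frac{1[x=x']}{m^{\ast}(x)}\left(1[a=b]-P^{\ast}(a|x)\right),
\end{equation*}
which vanishes whenever $x\neq x'$. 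Given the paper's vectorization convention (coordinates grouped by state $x$), these vanishing cross-state derivatives make $\partial P^{\ast}/\partial J^{\ast\prime}$ block diagonal across states, with the block for state $x$ equal to $[\mathbf{I}_{|\tilde{A}|\times|A|}-\{P^{\ast}(a|x)\}_{a\in\tilde{A}}\times\mathbf{1}_{1\times|A|}]/m^{\ast}(x)=\Sigma_{x}$. This identifies $\partial P^{\ast}/\partial J^{\ast\prime}$ with the matrix $\Sigma$ of Eq.\ \eqref{eq:Sigma}, so that the full Jacobian of $T$ is exactly the matrix prefactor in Eq.\ \eqref{eq:AsyDistBaseP}. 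The differentiability of $T$ at $(J^{\ast},\theta_{f}^{\ast})$ requires $m^{\ast}(x)>0$ for all $x\in X$, which I would note holds because $P^{\ast}\in\Theta_{P}$ is well defined in the limit.

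Finally, I would apply the delta method to the conclusion of Lemma \ref{lem:AsyDistBase1} with the transformation $T$, yielding
\begin{equation*}
n^{\min\{\delta,1/2\}}\left(
\begin{array}{c}
\hat{P}_{n}-P^{\ast}\\
\hat{\theta}_{f,n}-\theta_{f}^{\ast}
\end{array}
\right)\overset{d}{\to}\left[
\begin{array}{cc}
\Sigma & \mathbf{0}_{|\tilde{A}\times X|\times d_{\theta_{f}}}\\
\mathbf{0}_{d_{\theta_{f}}\times|\tilde{A}\times X|} & \mathbf{I}_{d_{\theta_{f}}\times d_{\theta_{f}}}
\end{array}
\right]\Delta\times N\left(B_{\Pi^{\ast}}1[\delta\leq1/2],(diag(\Pi^{\ast})-\Pi^{\ast}\Pi^{\ast\prime})1[\delta\geq1/2]\right),
\end{equation*}
which is precisely Eq.\ \eqref{eq:AsyDistBaseP}. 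I do not expect a serious obstacle: once the Jacobian block is identified with $\Sigma$, the argument is a routine delta-method computation. The only point requiring genuine care is verifying that the quotient-rule derivative reproduces both the across-state block structure and the exact entries of $\Sigma_{x}$ in Eq.\ \eqref{eq:Sigma}, together with the attendant observation that $m^{\ast}(x)>0$ guarantees the requisite differentiability of $T$.
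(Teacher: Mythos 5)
Your proposal is correct and follows essentially the same route as the paper's own proof: the paper defines exactly your map $T$ (there called $F$, with the $\theta_f$-coordinates passed through as the identity), computes its Jacobian entrywise by the quotient rule, notes that $\sum_{\tilde a\in A}J^{\ast}(\tilde a,x)=m^{\ast}(x)>0$ guarantees continuous differentiability at $(J^{\ast\prime},\theta_f^{\ast\prime})'$, identifies the Jacobian with the block-diagonal prefactor built from $\Sigma$ and $\mathbf{I}_{d_{\theta_f}\times d_{\theta_f}}$, and concludes by the delta method applied to Lemma \ref{lem:AsyDistBase1}. Incidentally, your quotient-rule expression $\partial P^{\ast}(a|x)/\partial J^{\ast}(b,x)=\left(1[a=b]-P^{\ast}(a|x)\right)/m^{\ast}(x)$ is the correct one and is what matches the definition of $\Sigma_x$ in Eq.\ \eqref{eq:Sigma}; the paper's intermediate display has a minor typo (writing $z_{(\check a,x)}$ where $z_{(a,x)}$ is meant in the off-diagonal term), although its final Jacobian formula agrees with yours.
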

\begin{proof}
Let $F:\mathbb{R}^{|A \times X|+d_{\theta _{f}}}\to \mathbb{R}^{|\tilde{A} \times X|+d_{\theta _{f}}}$ be defined as follows. For coordinates $j \leq | \tilde{A} \times X|$ with $j$ representing coordinate $(a,x)\in \tilde{A} \times X$, $F_{j}(z)\equiv z_{(a,x)}/\sum_{a\in A}z_{(\tilde{a},x)}$, and for $j>|\tilde{A} \times X|$, $F_{j}(z)=z_{j}$. Notice that $F((\hat{J}_{n}',\hat{\theta}_{f,n}')')\equiv (\hat{P}_{n}',\hat{ \theta}_{f,n}')'$ and $F(({J^*}',{\theta _{f}^*}')')\equiv ({P^*}',{\theta _{f}^*}')'$ by definition of $F$. If we verify $F$ is continuously differentiable at $z=({J^*}',{\theta _{f}^*}')'$ and
\begin{equation}
	\frac{\partial F(({J^*}',{\theta _{f}^*}')')}{\partial z'}~=~\left[ 
\begin{array}{cc}
\Sigma & \mathbf{0}_{|\tilde{A} \times X|\times d_{\theta _{f}}} \\
\mathbf{0}_{d_{\theta _{f}}\times |\tilde{A} \times X|} & \mathbf{I}_{d_{\theta _{f}}\times d_{\theta _{f}}}
\end{array}
\right], \label{eq:GradientF}
\end{equation}
then the result follows from the delta method and Lemma \ref{lem:AsyDistBase2}. We do this next. 

Consider $(j,\check{j}) \in \{1,\dots,|\tilde{A} \times X|\} \times \{1,\dots,|A \times X|\}$ representing $(a,x) \in \tilde{A} \times X$ and $(\check{a},\check{x}) \in A \times X$. For any $j>|\tilde{A} \times X|$, $F_{j} $ is continuously differentiable and ${\partial F_{j}(z)}/{\partial z_{\check{j}}} = 1[j=\check{j}]$. For any $j \leq |\tilde{A} \times X|$,
\begin{eqnarray*}
\frac{\partial F_{j}(z)}{\partial z_{\check{j}}} &=& 1[x= \check{x}]
\left[ \left(\frac{\sum_{\tilde{a}\in A}z_{(\tilde{a},x)}-z_{(\check{a}, x)}}{(\sum_{\tilde{a}\in A}z_{(\tilde{a},x)})^{2}}\right) 1[a=\check{a}]
+ \left(\frac{-z_{(\check{a},x)}}{(\sum_{\tilde{a}\in A}z_{(\tilde{a}, x)})^{2}}\right) 1[a\not=\check{a}] \right] \\
&=&\frac{1[x=\check{x}]}{(\sum_{\tilde{a}\in A}z_{(\tilde{a},{x})})} \left[1[a=\check{a}]-\frac{z_{(\check{a},x)}}{(\sum_{\tilde{a}\in A}z_{( \tilde{a},x)})}\right],
\end{eqnarray*}
provided that $\sum_{\tilde{a}\in A}z_{(\tilde{a},x)}>0$. Since $ \sum_{\tilde{a}\in A}J^{\ast }(\tilde{a},x)>0$ for all $x\in X$, $F$ is continuously differentiable at $(({J^*}',{\theta _{f}^*}')')$. By combining the formula for the derivatives from all coordinates, Eq.\ \eqref{eq:GradientF} follows.
\end{proof}


\begin{lemma}\label{lem:AuxResult2}  
For any $\lambda ,\tilde{\lambda}\in \{\theta _{f},\alpha \}$, the following algebraic results hold:
\begin{eqnarray*}
\frac{\partial \{\ln P_{\theta ^{\ast }}(a|x)\}_{(a,x)\in A \times X}'}{\partial \lambda } &=&\frac{\partial P_{\theta ^{\ast }}'}{\partial \lambda } \Phi  \Sigma  \\
\sum_{(a,x)\in A \times X}J^{\ast }(a,x)\frac{\partial \ln P_{\theta ^{\ast }}(a|x)}{\partial \lambda}\frac{\partial \ln P_{\theta ^{\ast }}(a|x)}{ \partial \tilde{\lambda} ^{\prime }}
&=&\frac{\partial P_{\theta ^{\ast }}'}{ \partial \lambda } \Phi \frac{\partial P_{\theta ^{\ast }}}{\partial \tilde{\lambda}'},
\end{eqnarray*}
with $\Phi$ and $\Sigma$ as in Eq.\ \eqref{eq:Sigma}.
\end{lemma}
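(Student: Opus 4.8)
The plan is to reduce both identities to per-state matrix algebra, using the block-diagonal structure of $\Phi$ and $\Sigma$ together with the adding-up constraint on the CCPs. The engine of the proof is the chain rule for the log-CCPs. For $a\in\tilde{A}$ one has $\partial\ln P_{\theta^{\ast}}(a|x)/\partial\lambda=(1/P^{\ast}(a|x))\,\partial P_{\theta^{\ast}}(a|x)/\partial\lambda$, while for the omitted action the identity $P_{\theta^{\ast}}(|A||x)=1-\sum_{b\in\tilde{A}}P_{\theta^{\ast}}(b|x)$ gives $\partial\ln P_{\theta^{\ast}}(|A||x)/\partial\lambda=-(1/P^{\ast}(|A||x))\sum_{b\in\tilde{A}}\partial P_{\theta^{\ast}}(b|x)/\partial\lambda$. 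Collecting these across $a\in A$ shows that, for each fixed $x$, the gradients of the $|A|$ log-CCPs are a fixed linear image of the $|\tilde{A}|$ reduced-Jacobian columns. I will denote this $|\tilde{A}|\times|A|$ chain-rule matrix by $M_{x}$, whose first $|\tilde{A}|$ columns form $diag\{1/P^{\ast}(a|x)\}_{a\in\tilde{A}}$ and whose last column is $-\mathbf{1}_{|\tilde{A}|\times 1}/P^{\ast}(|A||x)$. Equivalently, the reduced-Jacobian block times $M_{x}$ reproduces the full gradient block.

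For the first identity, since $\Phi$ and $\Sigma$ are block diagonal in $x$ it suffices to verify $\Phi_{x}\Sigma_{x}=M_{x}$ for each $x$. This is a direct computation: after cancelling the common factor $m^{\ast}(x)$, I expand the product of $diag\{1/P^{\ast}(a|x)\}_{a\in\tilde{A}}+\mathbf{1}_{|\tilde{A}|\times|\tilde{A}|}/P^{\ast}(|A||x)$ with $\mathbf{I}_{|\tilde{A}|\times|A|}-\{P^{\ast}(a|x)\}_{a\in\tilde{A}}\mathbf{1}_{1\times|A|}$, and use the two elementary facts $diag\{1/P^{\ast}(a|x)\}_{a\in\tilde{A}}\{P^{\ast}(a|x)\}_{a\in\tilde{A}}=\mathbf{1}_{|\tilde{A}|\times 1}$ and $\mathbf{1}_{1\times|\tilde{A}|}\{P^{\ast}(a|x)\}_{a\in\tilde{A}}=1-P^{\ast}(|A||x)$ to collapse the cross terms, leaving exactly $M_{x}$.

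For the second identity, I substitute the chain-rule representation into the outer-product sum. For each fixed $x$ the left-hand side then factors as the reduced-Jacobian block for $\lambda$, times $M_{x}\,diag\{J^{\ast}(a,x)\}_{a\in A}\,M_{x}'$, times the transpose of the reduced-Jacobian block for $\tilde{\lambda}$; summing over $x$ and using block-diagonality, it remains only to check $M_{x}\,diag\{J^{\ast}(a,x)\}_{a\in A}\,M_{x}'=\Phi_{x}$. Writing $J^{\ast}(a,x)=P^{\ast}(a|x)m^{\ast}(x)$, the product $M_{x}\,diag\{J^{\ast}(a,x)\}_{a\in A}$ simplifies to $m^{\ast}(x)[\mathbf{I}_{|\tilde{A}|\times|\tilde{A}|}\,,\,-\mathbf{1}_{|\tilde{A}|\times 1}]$, and post-multiplying by $M_{x}'$ reproduces $m^{\ast}(x)[diag\{1/P^{\ast}(a|x)\}_{a\in\tilde{A}}+\mathbf{1}_{|\tilde{A}|\times|\tilde{A}|}/P^{\ast}(|A||x)]=\Phi_{x}$, again via the same two elementary identities.

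The computation carries no conceptual difficulty once $M_{x}$ is correctly identified; the only place demanding care is the bookkeeping between the full action set $A$, over which the log-CCPs and $J^{\ast}$ are indexed, and the reduced set $\tilde{A}$, over which $P_{\theta^{\ast}}$ and its Jacobian are indexed. The rectangular matrix $\Sigma_{x}\in\mathbb{R}^{|\tilde{A}|\times|A|}$ is precisely the device that encodes the adding-up constraint bridging these two index sets, so the main obstacle is simply keeping the non-square dimensions and the singled-out last action consistent throughout.
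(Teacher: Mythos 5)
Your proof is correct and follows essentially the same route as the paper's: both reduce the two identities to per-state block computations that combine the chain rule, the adding-up constraint $P_{\theta^{\ast}}(|A||x)=1-\sum_{a\in\tilde{A}}P_{\theta^{\ast}}(a|x)$, and the relation $J^{\ast}(a,x)=P^{\ast}(a|x)m^{\ast}(x)$, and your matrix $M_{x}$ is exactly the block $\Phi_{x}\Sigma_{x}$ that the paper computes. The only, immaterial, difference is in the second identity, where the paper expands $\Phi_{x}$ into its diagonal and rank-one parts and matches scalar sums, while you verify $M_{x}\,diag\{J^{\ast}(a,x)\}_{a\in A}\,M_{x}'=\Phi_{x}$ and factorize; this is the same algebra read in opposite directions.
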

\begin{proof}
Before deriving the results, consider some preliminary observations. For any $\lambda \in \{\alpha,\theta_f,P\}$, $\sum_{a\in A}P_{\theta ^{\ast }}(a|x)=1$ and so $\partial P_{\theta ^{\ast }}(|A||x)/\partial \lambda=-\sum_{a\in \tilde{A}}\partial P_{\theta ^{\ast }}(a|x)/\partial \lambda $. Also, for any $\lambda \in \{\alpha,\theta_f,P\}$ and $(a,x)\in A \times X$, $ P^{\ast }(a|x)=P_{\theta ^{\ast }}(a|x)$ and so $(\partial P_{\theta ^{\ast }}(a|x)/\partial \lambda )(1/P^{\ast }(a|x))=\partial \ln P_{\theta ^{\ast }}(a|x)/\partial \lambda $.

For the first result, consider the following derivation: 
\begin{eqnarray*}
	\frac{\partial P_{\theta ^{\ast }}'}{\partial \lambda } \Phi \Sigma &=& \frac{\partial \{P_{\theta ^{\ast }}(a|x)\}_{(a,x)\in \tilde{A} \times X}'}{\partial \lambda } \times \{diag\{ \Phi_x \Sigma_x \}_{x \in X}\} \\
	&=& \frac{\partial \{P_{\theta ^{\ast }}(a|x)\}_{(a,x)\in \tilde{A} \times X}'}{\partial \lambda } \times diag\{[diag\{\{1/P^{\ast }(a|x)\}_{a\in \tilde{A}}\},(-1/P^{\ast }(|A||x))\mathbf{1}_{|\tilde{A} |\times 1}]\}_{x\in X}\\
	&=& \frac{ \partial \{ \{ \ln P_\theta^* (a|x) \}_{(a,x) \in A \times X} \}' }{\partial \lambda},
\end{eqnarray*}
where the last equality uses the preliminary observations. 

For the second result, consider the following derivation:
\begin{eqnarray*}
	\frac{\partial P_{\theta ^{\ast } }^{\prime }}{\partial \lambda }\Phi \frac{ \partial P_{\theta ^{\ast }}}{\partial \tilde{\lambda}^{\prime }}&=&\frac{ \partial \{P_{\theta ^{\ast }}(a|x)\}_{(a,x)\in \tilde{A}\times X}^{\prime } }{\partial \lambda }\times \Phi \times \frac{\partial \{P_{\theta ^{\ast }}(a|x)\}_{(a,x)\in \tilde{A}\times X}}{\partial \tilde{\lambda}^{\prime }} \\
&=&\left\{ 
\begin{array}{c} 
\frac{\partial \{P_{\theta ^{\ast }}(a|x)\}_{(a,x)\in \tilde{A}\times X}^{\prime }}{\partial \lambda }diag\left\{ m(x)\left[ diag\left\{ \{{1}/{P^{\ast }(a|x)}\}_{a\in \tilde{A} }\right\} \right] \right\} _{x\in X}\frac{\partial \{P_{\theta ^{\ast }}(a|x)\}_{(a,x)\in \tilde{A}\times X}}{\partial \tilde{\lambda}^{\prime }}+ \\
\frac{\partial \{P_{\theta ^{\ast }}(a|x)\}_{(a,x)\in \tilde{A}\times X}^{\prime }}{\partial \lambda }diag\left\{ m(x)\left[ \mathbf{1}_{|\tilde{A} |\times |\tilde{A}|}/({1-\sum\nolimits_{a\in \tilde{A}}P^{\ast }(a|x)}) \right] \right\} _{x\in X}\frac{\partial \{P_{\theta ^{\ast }}(a|x)\}_{(a,x)\in \tilde{A}\times X}}{\partial \tilde{\lambda}^{\prime }}
\end{array}
\right\}  \\
&=&\sum_{(a,x)\in \tilde{A}\times X}m(x)\frac{\partial \ln P_{\theta ^{\ast }}(a|x)}{\partial \lambda }\frac{\partial P_{\theta ^{\ast }}(a|x)}{\partial \tilde{\lambda}^{\prime }}+\sum_{x\in X} \frac{m(x)}{P\left( \left\vert A\right\vert |x\right) }\frac{\sum_{a\in \tilde{A}}\partial P_{\theta ^{\ast }}(a|x)}{\partial \lambda }\frac{\sum_{\tilde{a}\in \tilde{A }}\partial P_{\theta ^{\ast }}(\tilde{a}|x)}{\partial \lambda ^{\prime }} \\
&=&\sum_{(a,x)\in A\times X}J^{\ast }(a,x)\frac{\partial \ln P_{\theta ^{\ast }}(a|x)}{\partial \lambda }\frac{\partial \ln P_{\theta ^{\ast }}(a|x) }{\partial \tilde{\lambda}^{\prime }},
\end{eqnarray*}
where the last equality uses the preliminary observations.
\end{proof}

\subsection{Review of results on extremum estimators}

The purpose of this section is to state well-known results regarding the consistency and asymptotic normality of extremum estimators under certain regularity conditions. These results are referenced in our formal arguments. Relative to the standard versions in the literature (e.g.\ \cite{mcfadden/newey:1994}), our results allow for: (a) a rate of convergence that may differ from $\sqrt{n}$ and (b) a sequence of DGPs that may change with sample size. Both of these features are important for our theoretical results. We omit the proofs for reasons of brevity but theses are available from the authors upon request.

\begin{theorem} \label{thm:consistencyEE}  
	Assume the following:
	\begin{enumerate}[(a)]
	\item $Q_{n}(\theta )$ converges uniformly in probability to $Q(\theta )$ along $\{p_{n}\}_{n\geq 1}$.
	\item $Q(\theta )$ is upper semi-continuous, i.e., for any $\{\theta _{n}\}_{n\geq 1}$ with $\theta _{n}\to \tilde{\theta}$, $\lim \sup Q(\theta _{n})\leq Q(\tilde{\theta})$.
	\item $Q(\theta )$ is uniquely maximized at $\theta =\theta ^{\ast }$.
	\end{enumerate}
	Then, $\hat{\theta}_{n}={\arg\max}_{\theta \in \Theta }Q_{n}(\theta )$ satisfies $\hat{\theta}_{n}=\theta ^{\ast }+o_{p_{n}}(1)$.
\end{theorem}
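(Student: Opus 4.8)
The plan is to adapt the classic consistency argument for extremum estimators (e.g.\ \citet[Theorem 2.1]{mcfadden/newey:1994}) to the setting of a drifting sequence of DGPs, with all stochastic statements interpreted along $\{p_{n}\}_{n\geq 1}$. The goal is to show that for every open neighborhood $\mathcal{N}$ of $\theta^{\ast}$, the $p_{n}$-probability that $\hat{\theta}_{n}\in \mathcal{N}$ tends to one. The crucial structural fact I would use is that $\Theta$ is compact (as assumed in Section \ref{sec:Model}), so that the complement $\Theta \setminus \mathcal{N}$ is compact as well; this is exactly what makes the ``argmax'' argument go through.

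First I would establish a deterministic separation gap. Since $Q$ is upper semi-continuous by condition (b) and $\Theta \setminus \mathcal{N}$ is compact, $Q$ attains its supremum over this set; since $\theta^{\ast}$ is the unique maximizer over all of $\Theta$ by (c) and $\theta^{\ast}\notin \Theta \setminus \mathcal{N}$, this supremum is strictly below $Q(\theta^{\ast})$. I would therefore set
\[
\eta ~\equiv~ Q(\theta^{\ast}) ~-~ \sup_{\theta \in \Theta \setminus \mathcal{N}} Q(\theta) ~>~ 0.
\]

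Second, I would invoke the uniform convergence in condition (a): with $p_{n}$-probability approaching one, $\sup_{\theta \in \Theta}\vert Q_{n}(\theta)-Q(\theta)\vert < \eta/3$. On this event I would run the standard chain of inequalities. By optimality of $\hat{\theta}_{n}$ and the uniform bound at $\theta^{\ast}$, $Q_{n}(\hat{\theta}_{n}) \geq Q_{n}(\theta^{\ast}) > Q(\theta^{\ast}) - \eta/3$. If instead $\hat{\theta}_{n}\notin \mathcal{N}$, then the uniform bound at $\hat{\theta}_{n}$ together with the definition of $\eta$ gives $Q_{n}(\hat{\theta}_{n}) < Q(\hat{\theta}_{n}) + \eta/3 \leq Q(\theta^{\ast}) - \eta + \eta/3 = Q(\theta^{\ast}) - 2\eta/3$, contradicting the previous bound. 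Hence $\hat{\theta}_{n}\in \mathcal{N}$ on this event, so the $p_{n}$-probability of $\{\hat{\theta}_{n}\in \mathcal{N}\}$ is at least that of $\{\sup_{\theta}\vert Q_{n}-Q\vert<\eta/3\}$, which tends to one. Since $\mathcal{N}$ was arbitrary, this is precisely $\hat{\theta}_{n}=\theta^{\ast}+o_{p_{n}}(1)$.

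The main point requiring care, rather than a genuine obstacle, is that every probabilistic statement must be read along the drifting sequence $\{p_{n}\}$ instead of under a fixed measure. This is harmless here because condition (a) already supplies uniform convergence in this drifting sense, so the deterministic separation argument on the event $\{\sup_{\theta}\vert Q_{n}-Q\vert<\eta/3\}$ transfers verbatim and no new limit theorem is needed. A secondary technical point is the existence and measurability of the maximizer $\hat{\theta}_{n}$, which I would take as given from the statement; if needed, upper semi-continuity of $Q_{n}$ over the compact set $\Theta$ would guarantee that the $\arg\max$ is attained.
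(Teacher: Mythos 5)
Your proof is correct and is exactly the argument the paper intends: the paper omits its own proof of this theorem, describing it as a standard adaptation of the extremum-estimator consistency result of \cite{mcfadden/newey:1994} to a drifting sequence of DGPs, which is precisely what you carry out, with the separation constant $\eta>0$ obtained from attainment of the supremum of the upper semi-continuous limit $Q$ on the compact set $\Theta\setminus\mathcal{N}$ and the usual three-step chain of inequalities on the event $\{\sup_{\theta\in\Theta}\vert Q_{n}(\theta)-Q(\theta)\vert<\eta/3\}$. Nothing is missing, and your observation that all probabilistic statements transfer verbatim once uniform convergence is read along $\{p_{n}\}_{n\geq 1}$ is the only genuinely new point relative to the textbook argument.
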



\begin{theorem}\label{thm:ANforEE} 
Consider an estimator $\hat{\theta}_{n}$ of a parameter $ \theta ^{\ast }$ s.t.\ $\hat{\theta}_{n}={\arg\max}_{\theta \in \Theta }Q_{n}(\theta )$. Furthermore,
\begin{enumerate}[(a)]
\item $\hat{\theta}_{n}=\theta ^{\ast }+o_{p_{n}}(1)$,
\item $\theta ^{\ast }$ belongs to the interior of $\Theta $,
\item $Q_{n}$ is twice continuously differentiable in a neighborhood $ \mathcal{N}$ of $\theta ^{\ast }$ w.p.a.1,
\item For some $\delta>0$, ${n}^{\delta}\partial Q_{n}(\theta ^{\ast })/\partial \theta \overset{d}{ \to }Z$ for some random variable $Z$ along $\{p_{n}\}_{n\geq 1}$,
\item $\sup_{\theta \in \mathcal{N}}||\partial ^{2}Q_{n}(\theta )/\partial \theta \partial \theta^{\prime }-H(\theta )||=o_{p_{n}}(1)$ for some function $H:\mathcal{N}\to \mathbb{R}^{k\times k}$ that is continuous at $ \theta ^{\ast }$,
\item $H(\theta ^{\ast })$ is non-singular.
\end{enumerate}
Then, ${n}^{\delta}(\hat{\theta}_{n}-\theta ^{\ast })=-H(\theta ^{\ast })^{-1} {n}^{\delta}\partial Q_{n}(\theta ^{\ast })/\partial \theta +o_{p_{n}}(1) \overset{d}{\to } -H(\theta ^{\ast })^{-1}Z$ along $\{p_{n}\}_{n\geq 1}$.
\end{theorem}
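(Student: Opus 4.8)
The plan is to follow the classical asymptotic-normality argument for extremum estimators, taking care that every stochastic-order statement is interpreted along the drifting sequence $\{p_n\}$. First I would exploit consistency. Since $\hat{\theta}_{n}=\theta^{\ast}+o_{p_{n}}(1)$ by (a) and $\theta^{\ast}$ is interior by (b), the estimator $\hat{\theta}_{n}$ lies in the open neighborhood $\mathcal{N}$ w.p.a.1. On this event, $Q_{n}$ is twice continuously differentiable by (c) and $\hat{\theta}_{n}$ is an interior maximizer, so the first-order condition $\partial Q_{n}(\hat{\theta}_{n})/\partial\theta=\mathbf{0}$ holds w.p.a.1.

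Next I would expand the gradient around $\theta^{\ast}$. Writing the gradient difference in integral (fundamental-theorem-of-calculus) form gives, on the event $\{\hat{\theta}_{n}\in\mathcal{N}\}$,
\begin{equation*}
\mathbf{0}=\frac{\partial Q_{n}(\hat{\theta}_{n})}{\partial\theta}=\frac{\partial Q_{n}(\theta^{\ast})}{\partial\theta}+\bar{H}_{n}(\hat{\theta}_{n}-\theta^{\ast}),\qquad \bar{H}_{n}\equiv\int_{0}^{1}\frac{\partial^{2}Q_{n}(\theta^{\ast}+t(\hat{\theta}_{n}-\theta^{\ast}))}{\partial\theta\partial\theta^{\prime}}\,dt.
\end{equation*}
Using the integral form (rather than a coordinatewise mean value) sidesteps the fact that the vector-valued mean value theorem requires a separate intermediate point for each component. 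Because every argument $\theta^{\ast}+t(\hat{\theta}_{n}-\theta^{\ast})$ satisfies $\Vert\theta^{\ast}+t(\hat{\theta}_{n}-\theta^{\ast})-\theta^{\ast}\Vert\leq\Vert\hat{\theta}_{n}-\theta^{\ast}\Vert=o_{p_{n}}(1)$ uniformly in $t\in[0,1]$, it lies in $\mathcal{N}$ w.p.a.1, so the uniform convergence in (e) together with the continuity of $H$ at $\theta^{\ast}$ yields $\bar{H}_{n}=H(\theta^{\ast})+o_{p_{n}}(1)$.

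Since $H(\theta^{\ast})$ is non-singular by (f), $\bar{H}_{n}$ is invertible w.p.a.1 and $\bar{H}_{n}^{-1}=H(\theta^{\ast})^{-1}+o_{p_{n}}(1)$ by the continuity of matrix inversion. Solving the display for $\hat{\theta}_{n}-\theta^{\ast}$ and scaling by $n^{\delta}$ gives $n^{\delta}(\hat{\theta}_{n}-\theta^{\ast})=-\bar{H}_{n}^{-1}\,n^{\delta}\partial Q_{n}(\theta^{\ast})/\partial\theta$. By (d) the scaled gradient converges in distribution and is hence $O_{p_{n}}(1)$, so replacing $\bar{H}_{n}^{-1}$ by $H(\theta^{\ast})^{-1}$ contributes only an $o_{p_{n}}(1)\cdot O_{p_{n}}(1)=o_{p_{n}}(1)$ term, producing the stated linearization; the limit $\overset{d}{\to}-H(\theta^{\ast})^{-1}Z$ then follows from (d) and Slutsky's theorem.

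The main obstacle is not any single step but the bookkeeping needed to keep all of these assertions valid along the drifting sequence of DGPs: the convergence $\bar{H}_{n}=H(\theta^{\ast})+o_{p_{n}}(1)$ must be argued for a triangular array rather than a fixed distribution, and one must confirm that the convergence-in-distribution and Slutsky steps remain compatible with the moving $p_{n}$. The uniform Hessian convergence (e) is precisely the workhorse that forces the \emph{random, drifting} integration path in $\bar{H}_{n}$ to collapse onto the fixed limit $H(\theta^{\ast})$, and so it is the assumption carrying most of the weight.
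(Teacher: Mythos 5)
Your proof is correct, and it is exactly the argument the paper has in mind: the paper itself omits the proof of this theorem (stating it is the standard extremum-estimator expansion of \cite{mcfadden/newey:1994} adapted to an arbitrary rate $n^{\delta}$ and a drifting sequence of DGPs), and your first-order-condition plus integral-form Taylor expansion, with condition (e) collapsing the random Hessian path onto $H(\theta^{\ast})$ and Slutsky applied along $\{p_{n}\}$, is precisely that adaptation. The only cosmetic point is that one should fix a convex ball inside $\mathcal{N}$ so the integration segment lies in it w.p.a.1, which your distance bound already delivers implicitly.
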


}


\bibliography{BIBLIOGRAPHY}

\section{Supplemental materials}

\subsection{Additional simulation results for the first design}

Table \ref{tab:miss3a} is the last set of results for the Monte Carlo design described in Section \ref{sec:MonteCarlos}. This table presents results under asymptotically overwhelming local misspecification (i.e.\ $\delta=1/3$), with estimators scaled by the regular $\sqrt{n}$-rate. According to our theoretical results, the presence of overwhelming local misspecification implies that these estimators no longer converge at the regular $\sqrt{n}$-rate, but rather at the $n^{1/3}$-rate. In accordance with this prediction, Table \ref{tab:miss3a} reveals that the asymptotic bias of the estimators does not appear to converge when scaled by the regular $\sqrt{n}$-rate.

\begin{table}[H]
\begin{center}
\scalebox{0.85}
{
\begin{tabular}{cc|ccc|ccc|ccc}\hline
\hline
\multicolumn{1}{c}{\multirow{2}[4]{*}{$K$}} & \multicolumn{1}{c|}{\multirow{2}[4]{*}{Statistic}} & \multicolumn{3}{c|}{$K$-MD(${\bf I}_{|\tilde{A} \times X| \times |\tilde{A} \times X|}$)} & \multicolumn{3}{c|}{$K$-MD($W_{AV}^{*}$)} & \multicolumn{3}{c}{$K$-ML}\\
\multicolumn{1}{c}{} & \multicolumn{1}{c|}{} &
\multicolumn{1}{c}{$n=200$} & \multicolumn{1}{c}{$n=500$} & \multicolumn{1}{c|}{$n=1,000$} &
\multicolumn{1}{c}{$n=200$} & \multicolumn{1}{c}{$n=500$} & \multicolumn{1}{c|}{$n=1,000$} &
\multicolumn{1}{c}{$n=200$} & \multicolumn{1}{c}{$n=500$} & \multicolumn{1}{c}{$n=1,000$}
\\
\hline
  & $\sqrt{n}~$Bias & 0.98 & 1.14 & 1.30 & 1.04 & 1.23 & 1.42 & 1.10 & 1.29 & 1.47 \\
$1$ & $\sqrt{n}~$SD & 
0.34 & 0.33 & 0.31 & 0.30 & 0.29 & 0.28 & 0.29 & 0.28 & 0.28 \\
  & $n~$MSE & 
1.08 & 1.40 & 1.80 & 1.17 & 1.60 & 2.08 & 1.28 & 1.74 & 2.22 \\
\hline
  & $\sqrt{n}~$Bias & 
0.90 & 1.12 & 1.30 & 0.97 & 1.22 & 1.41 & 1.05 & 1.27 & 1.46 \\
$2$ & $\sqrt{n}~$SD & 
0.34 & 0.33 & 0.31 & 0.29 & 0.29 & 0.28 & 0.28 & 0.28 & 0.27 \\
  & $n~$MSE & 
0.92 & 1.37 & 1.79 & 1.03 & 1.56 & 2.07 & 1.17 & 1.70 & 2.20 \\
\hline
  & $\sqrt{n}~$Bias & 
0.90 & 1.12 & 1.30 & 0.97 & 1.22 & 1.41 & 1.05 & 1.27 & 1.46 \\
$3$ & $\sqrt{n}~$SD & 
0.34 & 0.33 & 0.31 & 0.29 & 0.29 & 0.28 & 0.28 & 0.28 & 0.27 \\
  & $n~$MSE & 
0.92 & 1.37 & 1.79 & 1.03 & 1.57 & 2.07 & 1.17 & 1.70 & 2.20 \\
\hline
  & $\sqrt{n}~$Bias & 
0.90 & 1.12 & 1.30 & 0.97 & 1.22 & 1.41 & 1.05 & 1.27 & 1.46 \\
$10$ & $\sqrt{n}~$SD & 
0.34 & 0.33 & 0.31 & 0.29 & 0.29 & 0.28 & 0.28 & 0.28 & 0.27 \\
  & $n~$MSE &
0.92 & 1.37 & 1.79 & 1.03 & 1.56 & 2.07 & 1.17 & 1.70 & 2.20 \\
\hline
\hline
\end{tabular}
	}\end{center}
		\caption{Simulation results under local misspecification with $\tau_{n} \propto n^{-1/3}$ and using the regular scaling (i.e.\ $\sqrt{n}$).}
	\label{tab:miss3a}
\end{table}

\subsection{Simulation results for the second misspecification design}

This section describes Monte Carlo simulation results for a second misspecification design. The econometric model is exactly as the one described in Section \ref{sec:MonteCarlos}. The true DGP is analogous to the second illustration in Section \ref{sec:Misspecification}, and it is inspired by the presence of unobserved heterogeneity along the lines of \cite{arcidiacono/miller:2011}. 

We simulate data composed of two types of agents, A and B. Both types of agents behave exactly according to the model and only differ in the parameter value of their utility functions. Recall from Section \ref{sec:MonteCarlos} that the utility function is specified as follows:
\begin{equation*}
u_{\theta _{u}}(x,a)~=~-\theta _{u,1}\times  1[ a=2]~-\theta _{u,2}\times  1[ a=1]x,
\end{equation*}
Agents of type A have $(\theta _{u,1},\theta _{u,2} ) = (1,0.05)$, while agents of type B have $(\theta _{u,1},\theta _{u,2} ) = (0,95,-0.05)$. 

The econometric model is then misspecified in the sense that it presumes a homogenous sample. We use $\tau_{n} \in (0,1)$ to denote the proportion of agents of type B in the population. We impose local misspecification by setting $\tau_n \equiv n^{-\delta}$ with $\delta \in \{1/3,1/2,1\}$. The rest of the parameters used to implement the Monte Carlo simulation are exactly as in Section \ref{sec:MonteCarlos}.

For the sake of comparison with the first simulation design, we present results for the estimator of $\theta_{u,2}$. The simulation results are qualitatively similar to the ones obtained in the previous section and support all of our theoretical conclusions. In particular, the results in Tables \ref{tab:miss1_2},  \ref{tab:miss2_2}, \ref{tab:miss3b_2}, and \ref{tab:miss3a_2} are analogous to Tables \ref{tab:miss1},  \ref{tab:miss2}, \ref{tab:miss3b}, and \ref{tab:miss3a}, respectively. We refer to Section \ref{sec:MonteCarlos} for a description of these results.


\begin{table}[H]
\begin{center}
\scalebox{0.85}
{
\begin{tabular}{cc|ccc|ccc|ccc}\hline
\hline
\multicolumn{1}{c}{\multirow{2}[4]{*}{$K$}} & \multicolumn{1}{c|}{\multirow{2}[4]{*}{Statistic}} & \multicolumn{3}{c|}{$K$-MD(${\bf I}_{|\tilde{A} \times X| \times |\tilde{A} \times X|}$)} & \multicolumn{3}{c|}{$K$-MD($W_{AV}^{*}$)} & \multicolumn{3}{c}{$K$-ML}\\
\multicolumn{1}{c}{} & \multicolumn{1}{c|}{} & 
\multicolumn{1}{c}{$n=200$} & \multicolumn{1}{c}{$n=500$} & \multicolumn{1}{c|}{$n=1,000$} & 
\multicolumn{1}{c}{$n=200$} & \multicolumn{1}{c}{$n=500$} & \multicolumn{1}{c|}{$n=1,000$} & 
\multicolumn{1}{c}{$n=200$} & \multicolumn{1}{c}{$n=500$} & \multicolumn{1}{c}{$n=1,000$} 
\\
\hline
 & $\sqrt{n}~$Bias & 0.07 & 0.02 & 0.01 & 0.06 & 0.02 & 0.01 & 0.05 & 0.02 & 0.01 \\
$1$ & $\sqrt{n}~$SD & 
0.24 & 0.25 & 0.24 & 0.23 & 0.23 & 0.22 & 0.22 & 0.22 & 0.22 \\
  & $n~$MSE &
0.06 & 0.06 & 0.06 & 0.06 & 0.05 & 0.05 & 0.05 & 0.05 & 0.05 \\
  \hline
   & $\sqrt{n}~$Bias & 
0.00 & 0.00 & 0.00 & 0.00 & 0.00 & 0.00 & 0.00 & 0.00 & 0.00 \\
$2$ & $\sqrt{n}~$SD &
0.24 & 0.24 & 0.24 & 0.23 & 0.23 & 0.22 & 0.22 & 0.22 & 0.22 \\
   & $n~$MSE & 
0.06 & 0.06 & 0.06 & 0.05 & 0.05 & 0.05 & 0.05 & 0.05 & 0.05 \\
\hline
 & $\sqrt{n}~$Bias & 
0.00 & 0.00 & 0.00 & 0.00 & 0.00 & 0.00 & 0.00 & 0.00 & 0.00 \\
$3$ & $\sqrt{n}~$SD & 
0.24 & 0.25 & 0.24 & 0.23 & 0.23 & 0.22 & 0.22 & 0.22 & 0.22 \\
   & $n~$MSE & 
0.06 & 0.06 & 0.06 & 0.05 & 0.05 & 0.05 & 0.05 & 0.05 & 0.05 \\
  \hline
   & $\sqrt{n}~$Bias &
0.00 & 0.00 & 0.00 & 0.00 & 0.00 & 0.00 & 0.00 & 0.00 & 0.00 \\
 $10$ & $\sqrt{n}~$SD & 
0.24 & 0.25 & 0.24 & 0.23 & 0.23 & 0.22 & 0.22 & 0.22 & 0.22 \\
   & $n~$MSE & 
0.06 & 0.06 & 0.06 & 0.05 & 0.05 & 0.05 & 0.05 & 0.05 & 0.05 \\
\hline
\hline
\end{tabular}
	}\end{center}
	\caption{Simulation results in the second misspecification design under local misspecification with $\tau_n \propto n^{-1}$.}
	\label{tab:miss1_2}
\end{table}

\begin{table}[H]
\begin{center}
\scalebox{0.85}
{
\begin{tabular}{cc|ccc|ccc|ccc}\hline
\hline
\multicolumn{1}{c}{\multirow{2}[4]{*}{$K$}} & \multicolumn{1}{c|}{\multirow{2}[4]{*}{Statistic}} & \multicolumn{3}{c|}{$K$-MD(${\bf I}_{|\tilde{A} \times X| \times |\tilde{A} \times X|}$)} & \multicolumn{3}{c|}{$K$-MD($W_{AV}^{*}$)} & \multicolumn{3}{c}{$K$-ML}\\
\multicolumn{1}{c}{} & \multicolumn{1}{c|}{} & 
\multicolumn{1}{c}{$n=200$} & \multicolumn{1}{c}{$n=500$} & \multicolumn{1}{c|}{$n=1,000$} & 
\multicolumn{1}{c}{$n=200$} & \multicolumn{1}{c}{$n=500$} & \multicolumn{1}{c|}{$n=1,000$} & 
\multicolumn{1}{c}{$n=200$} & \multicolumn{1}{c}{$n=500$} & \multicolumn{1}{c}{$n=1,000$} 
\\
\hline
  & $\sqrt{n}~$Bias & -0.04 & -0.08 & -0.09 & -0.05 & -0.08 & -0.09 & -0.05 & -0.08 & -0.09 \\
$1$ & $\sqrt{n}~$SD & 0.23 & 0.24 & 0.23 & 0.22 & 0.22 & 0.22 & 0.21 & 0.22 & 0.21 \\
  & $n~$MSE & 0.05 & 0.06 & 0.06 & 0.05 & 0.06 & 0.06 & 0.05 & 0.05 & 0.05 \\
  \hline
 & $\sqrt{n}~$Bias & 
-0.10 & -0.10 & -0.10 & -0.10 & -0.10 & -0.10 & -0.10 & -0.10 & -0.10 \\
$2$  & $\sqrt{n}~$SD & 
0.22 & 0.23 & 0.23 & 0.21 & 0.22 & 0.21 & 0.20 & 0.21 & 0.21 \\
   & $n~$MSE & 
0.06 & 0.06 & 0.06 & 0.05 & 0.06 & 0.06 & 0.05 & 0.05 & 0.06 \\
   \hline
 & $\sqrt{n}~$Bias & 
-0.10 & -0.10 & -0.10 & -0.10 & -0.10 & -0.10 & -0.10 & -0.10 & -0.10 \\
  $3$ & $\sqrt{n}~$SD & 
0.22 & 0.23 & 0.23 & 0.21 & 0.22 & 0.21 & 0.20 & 0.21 & 0.21 \\
   & $n~$MSE & 
0.06 & 0.06 & 0.06 & 0.05 & 0.06 & 0.06 & 0.05 & 0.05 & 0.06 \\
   \hline
   & $\sqrt{n}~$Bias & 
-0.10 & -0.10 & -0.10 & -0.10 & -0.10 & -0.10 & -0.10 & -0.10 & -0.10 \\
 $10$ & $\sqrt{n}~$SD & 
0.23 & 0.23 & 0.23 & 0.21 & 0.22 & 0.21 & 0.20 & 0.21 & 0.21 \\
   & $n~$MSE &
0.06 & 0.06 & 0.06 & 0.05 & 0.06 & 0.06 & 0.05 & 0.05 & 0.06 \\
\hline
\hline
\end{tabular}
	}\end{center}
	\caption{Simulation results in the second misspecification design under local misspecification with $\tau_n \propto  n^{-1/2}$.}
	\label{tab:miss2_2}
\end{table}

\begin{table}[H]
\begin{center}
\scalebox{0.85}
{
\begin{tabular}{cc|ccc|ccc|ccc}\hline
\hline
\multicolumn{1}{c}{\multirow{2}[4]{*}{$K$}} & \multicolumn{1}{c|}{\multirow{2}[4]{*}{Statistic}} & \multicolumn{3}{c|}{$K$-MD(${\bf I}_{|\tilde{A} \times X| \times |\tilde{A} \times X|}$)} & \multicolumn{3}{c|}{$K$-MD($W_{AV}^{*}$)} & \multicolumn{3}{c}{$K$-ML}\\
\multicolumn{1}{c}{} & \multicolumn{1}{c|}{} & 
\multicolumn{1}{c}{$n=200$} & \multicolumn{1}{c}{$n=500$} & \multicolumn{1}{c|}{$n=1,000$} & 
\multicolumn{1}{c}{$n=200$} & \multicolumn{1}{c}{$n=500$} & \multicolumn{1}{c|}{$n=1,000$} & 
\multicolumn{1}{c}{$n=200$} & \multicolumn{1}{c}{$n=500$} & \multicolumn{1}{c}{$n=1,000$} 
\\
\hline
  & $n^{1/3}~$Bias &-0.07 & -0.09 & -0.09 & -0.07 & -0.09 & -0.09 & -0.07 & -0.09 & -0.09 \\
$1$ & $n^{1/3}~$SD & 0.08 & 0.08 & 0.07 & 0.08 & 0.07 & 0.06 & 0.08 & 0.07 & 0.06 \\
   & $n^{2/3}~$MSE &
0.01 & 0.01 & 0.01 & 0.01 & 0.01 & 0.01 & 0.01 & 0.01 & 0.01 \\
   \hline
  & $n^{1/3}~$Bias & 
-0.09 & -0.09 & -0.10 & -0.09 & -0.09 & -0.10 & -0.09 & -0.09 & -0.10 \\
$2$ & $n^{1/3}~$SD & 
0.08 & 0.08 & 0.07 & 0.08 & 0.07 & 0.06 & 0.07 & 0.07 & 0.06 \\
   & $n^{2/3}~$MSE & 
0.02 & 0.01 & 0.01 & 0.01 & 0.01 & 0.01 & 0.01 & 0.01 & 0.01 \\
   \hline
  & $n^{1/3}~$Bias & 
-0.09 & -0.09 & -0.10 & -0.09 & -0.09 & -0.10 & -0.09 & -0.09 & -0.10 \\
$3$ & $n^{1/3}~$SD & 
0.08 & 0.08 & 0.07 & 0.08 & 0.07 & 0.06 & 0.07 & 0.07 & 0.06 \\
   & $n^{2/3}~$MSE & 
0.02 & 0.01 & 0.01 & 0.01 & 0.01 & 0.01 & 0.01 & 0.01 & 0.01 \\
   \hline
 & $n^{1/3}~$Bias & 
-0.09 & -0.09 & -0.10 & -0.09 & -0.09 & -0.10 & -0.09 & -0.09 & -0.10 \\
$10$ & $n^{1/3}~$SD & 
0.08 & 0.08 & 0.07 & 0.08 & 0.07 & 0.06 & 0.07 & 0.07 & 0.06 \\
   & $n^{2/3}~$MSE & 
0.02 & 0.01 & 0.01 & 0.01 & 0.01 & 0.01 & 0.01 & 0.01 & 0.01 \\
\hline
\hline
\end{tabular}
	}\end{center}
	\caption{Simulation results in the second misspecification design under local misspecification with $\tau_n \propto  n^{-1/3}$ and using the correct scaling.}
	\label{tab:miss3b_2}
\end{table}

\begin{table}[H]
\begin{center}
\scalebox{0.85}
{
\begin{tabular}{cc|ccc|ccc|ccc}\hline
\hline
\multicolumn{1}{c}{\multirow{2}[4]{*}{$K$}} & \multicolumn{1}{c|}{\multirow{2}[4]{*}{Statistic}} & \multicolumn{3}{c|}{$K$-MD(${\bf I}_{|\tilde{A} \times X| \times |\tilde{A} \times X|}$)} & \multicolumn{3}{c|}{$K$-MD($W_{AV}^{*}$)} & \multicolumn{3}{c}{$K$-ML}\\
\multicolumn{1}{c}{} & \multicolumn{1}{c|}{} & 
\multicolumn{1}{c}{$n=200$} & \multicolumn{1}{c}{$n=500$} & \multicolumn{1}{c|}{$n=1,000$} & 
\multicolumn{1}{c}{$n=200$} & \multicolumn{1}{c}{$n=500$} & \multicolumn{1}{c|}{$n=1,000$} & 
\multicolumn{1}{c}{$n=200$} & \multicolumn{1}{c}{$n=500$} & \multicolumn{1}{c}{$n=1,000$} 
\\
\hline
  & $\sqrt{n}~$Bias &-0.17 & -0.25 & -0.30 & -0.17 & -0.25 & -0.30 & -0.17 & -0.25 & -0.30 \\
$1$ & $\sqrt{n}~$SD & 
0.20 & 0.22 & 0.22 & 0.20 & 0.20 & 0.20 & 0.19 & 0.20 & 0.20 \\
   & $n~$MSE & 
0.07 & 0.11 & 0.13 & 0.07 & 0.10 & 0.13 & 0.07 & 0.10 & 0.13 \\
   \hline
    & $\sqrt{n}~$Bias &
-0.23 & -0.26 & -0.30 & -0.22 & -0.26 & -0.30 & -0.22 & -0.26 & -0.30 \\
 $2$ & $\sqrt{n}~$SD & 
0.20 & 0.21 & 0.21 & 0.19 & 0.20 & 0.20 & 0.18 & 0.19 & 0.20 \\
   & $n~$MSE & 
0.09 & 0.12 & 0.14 & 0.09 & 0.11 & 0.13 & 0.08 & 0.11 & 0.13 \\
   \hline
  & $\sqrt{n}~$Bias & 
-0.23 & -0.27 & -0.30 & -0.22 & -0.26 & -0.30 & -0.22 & -0.26 & -0.30 \\
$3$ & $\sqrt{n}~$SD & 
0.20 & 0.21 & 0.21 & 0.19 & 0.20 & 0.20 & 0.18 & 0.19 & 0.20 \\
   & $n~$MSE & 
0.09 & 0.12 & 0.14 & 0.09 & 0.11 & 0.13 & 0.08 & 0.11 & 0.13 \\
   \hline
 & $\sqrt{n}~$Bias & 
-0.23 & -0.27 & -0.30 & -0.22 & -0.26 & -0.30 & -0.22 & -0.26 & -0.30 \\
$10$ & $\sqrt{n}~$SD & 
0.20 & 0.21 & 0.21 & 0.19 & 0.20 & 0.20 & 0.18 & 0.19 & 0.20 \\
   & $n~$MSE & 
0.09 & 0.12 & 0.14 & 0.09 & 0.11 & 0.13 & 0.08 & 0.11 & 0.13 \\
\hline
\hline
\end{tabular}
	}\end{center}
	\caption{Simulation results in the second misspecification design under local misspecification with $\tau_n \propto  n^{-1/3}$ and using the regular scaling (i.e.\ $\sqrt{n}$).}
	\label{tab:miss3a_2}
\end{table}

\subsection{Simulation results for the third misspecification design}

This section describes Monte Carlo simulation results for a third misspecification design. Once again, the econometric model is exactly as the one described in Section \ref{sec:MonteCarlos}. The true DGP is as in the third illustration in Section \ref{sec:Misspecification}, and considers agents that depart from rational behavior.

Given state variables $(x,\epsilon)$, our model predicts that agents choose the action that maximizes the expected discounted utility. Instead, we simulate agents who make choices according to a multinomial distribution with choice probabilities that are increasing in the action-specific expected discounted utility. Specifically, we use the choice probabilities in Eq.\ \eqref{eq:Irrational} for some $\tau_n >0$.

The econometric model model is then misspecified in the sense that it presumes rationality, i.e., $\tau_n \to 0$. We impose local misspecification by setting $\tau_n \equiv 10 n^{-\delta}$ with $\delta \in \{1/3,1/2,1\}$. The rest of the parameters used to implement the Monte Carlo simulation are exactly as in Section \ref{sec:MonteCarlos}.

For the sake of comparison with previous simulation designs, we present results for the estimator of $\theta_{u,2}$. Once again, the simulation results are qualitatively similar to the ones obtained in the previous section and support all of our theoretical conclusions. The results in Tables \ref{tab:miss1_3},  \ref{tab:miss2_3}, \ref{tab:miss3b_3}, and \ref{tab:miss3a_3} are analogous to Tables \ref{tab:miss1},  \ref{tab:miss2}, \ref{tab:miss3b}, and \ref{tab:miss3a}, respectively. We refer to Section \ref{sec:MonteCarlos} for a description of these results.

\begin{table}[H]
\begin{center}
\scalebox{0.85}
{
\begin{tabular}{cc|ccc|ccc|ccc}\hline
\hline
\multicolumn{1}{c}{\multirow{2}[4]{*}{$K$}} & \multicolumn{1}{c|}{\multirow{2}[4]{*}{Statistic}} & \multicolumn{3}{c|}{$K$-MD(${\bf I}_{|\tilde{A} \times X| \times |\tilde{A} \times X|}$)} & \multicolumn{3}{c|}{$K$-MD($W_{AV}^{*}$)} & \multicolumn{3}{c}{$K$-ML}\\
\multicolumn{1}{c}{} & \multicolumn{1}{c|}{} & 
\multicolumn{1}{c}{$n=200$} & \multicolumn{1}{c}{$n=500$} & \multicolumn{1}{c|}{$n=1,000$} & 
\multicolumn{1}{c}{$n=200$} & \multicolumn{1}{c}{$n=500$} & \multicolumn{1}{c|}{$n=1,000$} & 
\multicolumn{1}{c}{$n=200$} & \multicolumn{1}{c}{$n=500$} & \multicolumn{1}{c}{$n=1,000$} 
\\
\hline
& $\sqrt{n}~$Bias & 0.07 & 0.03 & 0.01 & 0.06 & 0.02 & 0.01 & 0.06 & 0.02 & 0.01 \\
 $1$ & $\sqrt{n}~$SD & 
0.25 & 0.25 & 0.24 & 0.24 & 0.23 & 0.22 & 0.22 & 0.23 & 0.22 \\
   & $n~$MSE &
0.07 & 0.06 & 0.06 & 0.06 & 0.05 & 0.05 & 0.05 & 0.05 & 0.05 \\
   \hline
    & $\sqrt{n}~$Bias & 
0.00 & 0.01 & 0.00 & 0.00 & 0.01 & 0.00 & 0.01 & 0.01 & 0.00 \\
$2$ & $\sqrt{n}~$SD & 
0.24 & 0.25 & 0.24 & 0.23 & 0.23 & 0.22 & 0.22 & 0.22 & 0.22 \\
   & $n~$MSE & 
0.06 & 0.06 & 0.06 & 0.05 & 0.05 & 0.05 & 0.05 & 0.05 & 0.05 \\
 \hline & $\sqrt{n}~$Bias & 
0.00 & 0.01 & 0.00 & 0.00 & 0.01 & 0.00 & 0.00 & 0.01 & 0.00 \\
$3$ & $\sqrt{n}~$SD & 
0.25 & 0.25 & 0.24 & 0.23 & 0.23 & 0.22 & 0.22 & 0.22 & 0.22 \\
   & $n~$MSE & 
0.06 & 0.06 & 0.06 & 0.05 & 0.05 & 0.05 & 0.05 & 0.05 & 0.05 \\
   \hline
   & $\sqrt{n}~$Bias &
0.00 & 0.01 & 0.00 & 0.00 & 0.01 & 0.00 & 0.00 & 0.01 & 0.00 \\
 $10$ & $\sqrt{n}~$SD & 
0.25 & 0.25 & 0.24 & 0.23 & 0.23 & 0.22 & 0.22 & 0.22 & 0.22 \\
   & $n~$MSE & 
0.06 & 0.06 & 0.06 & 0.05 & 0.05 & 0.05 & 0.05 & 0.05 & 0.05 \\
\hline
\hline
\end{tabular}
	}\end{center}
	\caption{Simulation results in the third misspecification design under local misspecification with $\tau_n \propto n^{-1}$.}
	\label{tab:miss1_3}
\end{table}

\begin{table}[H]
\begin{center}
\scalebox{0.85}
{
\begin{tabular}{cc|ccc|ccc|ccc}\hline
\hline
\multicolumn{1}{c}{\multirow{2}[4]{*}{$K$}} & \multicolumn{1}{c|}{\multirow{2}[4]{*}{Statistic}} & \multicolumn{3}{c|}{$K$-MD(${\bf I}_{|\tilde{A} \times X| \times |\tilde{A} \times X|}$)} & \multicolumn{3}{c|}{$K$-MD($W_{AV}^{*}$)} & \multicolumn{3}{c}{$K$-ML}\\
\multicolumn{1}{c}{} & \multicolumn{1}{c|}{} & 
\multicolumn{1}{c}{$n=200$} & \multicolumn{1}{c}{$n=500$} & \multicolumn{1}{c|}{$n=1,000$} & 
\multicolumn{1}{c}{$n=200$} & \multicolumn{1}{c}{$n=500$} & \multicolumn{1}{c|}{$n=1,000$} & 
\multicolumn{1}{c}{$n=200$} & \multicolumn{1}{c}{$n=500$} & \multicolumn{1}{c}{$n=1,000$} 
\\
\hline
  & $\sqrt{n}~$Bias & -0.07 & -0.12 & -0.10 & -0.09 & -0.12 & -0.11 & -0.09 & -0.12 & -0.11 \\
$1$ & $\sqrt{n}~$SD & 
0.25 & 0.24 & 0.24 & 0.23 & 0.22 & 0.22 & 0.22 & 0.22 & 0.22 \\
  & $n~$MSE & 
0.07 & 0.07 & 0.07 & 0.06 & 0.07 & 0.06 & 0.06 & 0.06 & 0.06 \\
  \hline
 & $\sqrt{n}~$Bias & 
-0.15 & -0.13 & -0.11 & -0.16 & -0.14 & -0.11 & -0.16 & -0.14 & -0.11 \\
$2$ & $\sqrt{n}~$SD & 
0.23 & 0.24 & 0.24 & 0.22 & 0.22 & 0.22 & 0.21 & 0.21 & 0.22 \\
   & $n~$MSE & 
0.08 & 0.07 & 0.07 & 0.07 & 0.07 & 0.06 & 0.07 & 0.07 & 0.06 \\
   \hline
 & $\sqrt{n}~$Bias & 
-0.16 & -0.13 & -0.11 & -0.16 & -0.14 & -0.11 & -0.16 & -0.14 & -0.11 \\
  $3$ & $\sqrt{n}~$SD & 
0.24 & 0.24 & 0.24 & 0.22 & 0.22 & 0.22 & 0.21 & 0.22 & 0.22 \\
   & $n~$MSE & 
0.08 & 0.07 & 0.07 & 0.07 & 0.07 & 0.06 & 0.07 & 0.07 & 0.06 \\
   \hline
   & $\sqrt{n}~$Bias &
-0.16 & -0.13 & -0.11 & -0.16 & -0.14 & -0.11 & -0.16 & -0.14 & -0.11 \\
 $10$ & $\sqrt{n}~$SD & 
0.24 & 0.24 & 0.24 & 0.22 & 0.22 & 0.22 & 0.21 & 0.22 & 0.22 \\
   & $n~$MSE & 
0.08 & 0.07 & 0.07 & 0.08 & 0.07 & 0.06 & 0.07 & 0.07 & 0.06 \\
\hline
\hline
\end{tabular}
	}\end{center}
	\caption{Simulation results in the third misspecification design under local misspecification with $\tau_n \propto  n^{-1/2}$.}
	\label{tab:miss2_3}
\end{table}

\begin{table}[H]
\begin{center}
\scalebox{0.85}
{
\begin{tabular}{cc|ccc|ccc|ccc}\hline
\hline
\multicolumn{1}{c}{\multirow{2}[4]{*}{$K$}} & \multicolumn{1}{c|}{\multirow{2}[4]{*}{Statistic}} & \multicolumn{3}{c|}{$K$-MD(${\bf I}_{|\tilde{A} \times X| \times |\tilde{A} \times X|}$)} & \multicolumn{3}{c|}{$K$-MD($W_{AV}^{*}$)} & \multicolumn{3}{c}{$K$-ML}\\
\multicolumn{1}{c}{} & \multicolumn{1}{c|}{} & 
\multicolumn{1}{c}{$n=200$} & \multicolumn{1}{c}{$n=500$} & \multicolumn{1}{c|}{$n=1,000$} & 
\multicolumn{1}{c}{$n=200$} & \multicolumn{1}{c}{$n=500$} & \multicolumn{1}{c|}{$n=1,000$} & 
\multicolumn{1}{c}{$n=200$} & \multicolumn{1}{c}{$n=500$} & \multicolumn{1}{c}{$n=1,000$} 
\\
\hline
  & $n^{1/3}~$Bias & -0.10 & -0.16 & -0.17 & -0.11 & -0.16 & -0.17 & -0.11 & -0.16 & -0.17 \\
$1$ & $n^{1/3}~$SD &
0.10 & 0.08 & 0.07 & 0.10 & 0.08 & 0.07 & 0.09 & 0.08 & 0.07 \\
   & $n^{2/3}~$MSE & 
0.02 & 0.03 & 0.03 & 0.02 & 0.03 & 0.03 & 0.02 & 0.03 & 0.03 \\
\hline
  & $n^{1/3}~$Bias & 
-0.14 & -0.17 & -0.17 & -0.15 & -0.17 & -0.17 & -0.15 & -0.17 & -0.17 \\
  $2$ & $n^{1/3}~$SD & 
0.09 & 0.08 & 0.07 & 0.08 & 0.08 & 0.07 & 0.08 & 0.07 & 0.07 \\
   & $n^{2/3}~$MSE & 
0.03 & 0.03 & 0.03 & 0.03 & 0.03 & 0.04 & 0.03 & 0.03 & 0.03 \\
   \hline
  & $n^{1/3}~$Bias & 
-0.14 & -0.17 & -0.17 & -0.15 & -0.17 & -0.17 & -0.15 & -0.17 & -0.17 \\
$3$ & $n^{1/3}~$SD & 
0.09 & 0.08 & 0.07 & 0.08 & 0.08 & 0.07 & 0.08 & 0.07 & 0.07 \\
   & $n^{2/3}~$MSE & 
0.03 & 0.03 & 0.03 & 0.03 & 0.03 & 0.04 & 0.03 & 0.03 & 0.03 \\
   \hline
 & $n^{1/3}~$Bias & 
-0.14 & -0.17 & -0.17 & -0.15 & -0.17 & -0.17 & -0.15 & -0.17 & -0.17 \\
$10$ & $n^{1/3}~$SD & 
0.09 & 0.08 & 0.07 & 0.08 & 0.08 & 0.07 & 0.08 & 0.07 & 0.07 \\
   & $n^{2/3}~$MSE & 
0.03 & 0.03 & 0.03 & 0.03 & 0.03 & 0.04 & 0.03 & 0.03 & 0.03 \\
\hline
\hline
\end{tabular}
	}\end{center}
	\caption{Simulation results in the third misspecification design under local misspecification with $\tau_n \propto  n^{-1/3}$ and using the correct scaling.}
	\label{tab:miss3b_3}
\end{table}

\begin{table}[H]
\begin{center}
\scalebox{0.85}
{
\begin{tabular}{cc|ccc|ccc|ccc}\hline
\hline
\multicolumn{1}{c}{\multirow{2}[4]{*}{$K$}} & \multicolumn{1}{c|}{\multirow{2}[4]{*}{Statistic}} & \multicolumn{3}{c|}{$K$-MD(${\bf I}_{|\tilde{A} \times X| \times |\tilde{A} \times X|}$)} & \multicolumn{3}{c|}{$K$-MD($W_{AV}^{*}$)} & \multicolumn{3}{c}{$K$-ML}\\
\multicolumn{1}{c}{} & \multicolumn{1}{c|}{} & 
\multicolumn{1}{c}{$n=200$} & \multicolumn{1}{c}{$n=500$} & \multicolumn{1}{c|}{$n=1,000$} & 
\multicolumn{1}{c}{$n=200$} & \multicolumn{1}{c}{$n=500$} & \multicolumn{1}{c|}{$n=1,000$} & 
\multicolumn{1}{c}{$n=200$} & \multicolumn{1}{c}{$n=500$} & \multicolumn{1}{c}{$n=1,000$} 
\\
\hline
  & $\sqrt{n}~$Bias &-0.25 & -0.45 & -0.53 & -0.26 & -0.46 & -0.54 & -0.27 & -0.46 & -0.54 \\
$1$ & $\sqrt{n}~$SD & 
0.25 & 0.24 & 0.23 & 0.23 & 0.22 & 0.22 & 0.22 & 0.21 & 0.21 \\
   & $n~$MSE & 
0.12 & 0.25 & 0.34 & 0.12 & 0.26 & 0.34 & 0.12 & 0.26 & 0.34 \\
   \hline
    & $\sqrt{n}~$Bias & 
-0.34 & -0.47 & -0.54 & -0.35 & -0.48 & -0.55 & -0.36 & -0.48 & -0.55 \\
 $2$ & $\sqrt{n}~$SD & 
0.21 & 0.23 & 0.23 & 0.19 & 0.21 & 0.21 & 0.19 & 0.21 & 0.21 \\
   & $n~$MSE & 
0.16 & 0.27 & 0.35 & 0.16 & 0.27 & 0.35 & 0.16 & 0.27 & 0.35 \\
   \hline
  & $\sqrt{n}~$Bias & 
-0.35 & -0.47 & -0.54 & -0.36 & -0.48 & -0.55 & -0.36 & -0.48 & -0.55 \\
$3$ & $\sqrt{n}~$SD &
0.21 & 0.23 & 0.23 & 0.20 & 0.21 & 0.21 & 0.19 & 0.21 & 0.21 \\
   & $n~$MSE &
0.17 & 0.27 & 0.35 & 0.17 & 0.27 & 0.35 & 0.17 & 0.27 & 0.35 \\
   \hline
 & $\sqrt{n}~$Bias & 
-0.35 & -0.47 & -0.54 & -0.36 & -0.48 & -0.55 & -0.37 & -0.48 & -0.55 \\
$10$ & $\sqrt{n}~$SD & 
0.21 & 0.23 & 0.23 & 0.20 & 0.21 & 0.21 & 0.19 & 0.21 & 0.21 \\
   & $n~$MSE & 
0.17 & 0.27 & 0.35 & 0.17 & 0.27 & 0.35 & 0.17 & 0.27 & 0.35 \\
\hline
\hline
\end{tabular}
	}\end{center}
	\caption{Simulation results in the third misspecification design under local misspecification with $\tau_n \propto  n^{-1/3}$ and using the regular scaling (i.e.\ $\sqrt{n}$).}
	\label{tab:miss3a_3}
\end{table}

\end{document}